\newtheorem{theorem}{Theorem}
\crefname{theorem}{theorem}{Theorems}
\Crefname{Theorem}{Theorem}{Theorems}
\newaliascnt{lemma}{theorem}
\newtheorem{lemma}[lemma]{Lemma}
\crefname{lemma}{lemma}{lemmas}
\Crefname{Lemma}{Lemma}{Lemmas}
\newaliascnt{corollary}{theorem}
\crefname{corollary}{corollary}{corollaries}
\Crefname{Corollary}{Corollary}{Corollaries}
\newaliascnt{proposition}{theorem}
\newtheorem{proposition}[proposition]{Proposition}
\crefname{proposition}{proposition}{propositions}
\Crefname{Proposition}{Proposition}{Propositions}
\newaliascnt{definition}{theorem}
\newtheorem{definition}[definition]{Definition}
\crefname{definition}{definition}{definitions}
\Crefname{Definition}{Definition}{Definitions}
\newaliascnt{definitionProposition}{theorem}
\crefname{Proposition and Definition}{Proposition and Definition}{Proposition and Definition}
\Crefname{Proposition and Definition}{Proposition and Definition}{Proposition and Definition}
\newaliascnt{remark}{theorem}
\crefname{remark}{remark}{remarks}
\Crefname{Remark}{Remark}{Remarks}
\newtheorem{example}[theorem]{Example}
\crefname{example}{example}{examples}
\Crefname{Example}{Example}{Examples}
\crefname{algorithm}{algorithm}{algorithms}
\Crefname{Algorithm}{Algorithm}{Algorithms}
\crefname{figure}{figure}{figures}
\Crefname{Figure}{Figure}{Figures}
\Crefname{assumption}{\textbf{H}\hspace{-3pt}}{\textbf{H}\hspace{-3pt}}
\crefname{assumption}{\textbf{H}}{\textbf{H}}
\Crefname{assumptionL}{\textbf{L}\hspace{-3pt}}{\textbf{L}\hspace{-3pt}}
\crefname{assumptionL}{\textbf{L}}{\textbf{L}}
\newtheorem{assumptionL2}{\textbf{L2HMC}\hspace{-3pt}}
\Crefname{assumptionL2}{\textbf{L2HMC}\hspace{-3pt}}{\textbf{L2HMC}\hspace{-3pt}}
\crefname{assumptionL2}{\textbf{L2HMC}}{\textbf{L2HMC}}
\Crefname{assumptionG}{\textbf{G}\hspace{-3pt}}{\textbf{G}\hspace{-3pt}}
\crefname{assumptionG}{\textbf{G}}{\textbf{G}}
\newtheorem{assumptionN}{\textbf{NICE}\hspace{-3pt}}
\Crefname{assumptionN}{\textbf{NICE}\hspace{-3pt}}{\textbf{NICE}\hspace{-3pt}}
\crefname{assumptionN}{\textbf{NICE}}{\textbf{NICE}}
\def\msz{\mathsf{Z}}
\def\msv{\mathsf{V}}
\def\mcz{\mathcal{Z}}
\def\rset{\mathbb{R}}
\def\nset{\mathbb{N}}
\def\nsets{\mathbb{N}^*}
\def\Zset{\mathbb{Z}}
\def\rmd{\mathrm{d}}
\def\rme{\mathrm{e}}
\def\rmc{\mathrm{C}}
\def\rmC{\mathrm{C}}
\newcommandx{\functionspace}[2][1=+]{\mathbb{F}_{#1}(#2)}
\newcommand{\1}{\mathbbm{1}}
\newcommand{\LeftEqNo}{\let\veqno\@@leqno}
\newcommand{\N}{\ensuremath{\mathbb{N}}}
\newcommandx{\Vnorm}[2][1=V]{\| #2 \|_{#1}}
\newcommandx{\VnormEq}[2][1=V]{\left\| #2 \right\|_{#1}}
\newcommandx{\norm}[2][1=]{\ifthenelse{\equal{#1}{}}{\left\Vert #2 \right\Vert}{\left\Vert #2 \right\Vert^{#1}}}
\newcommandx{\normLigne}[2][1=]{\ifthenelse{\equal{#1}{}}{\Vert #2 \Vert}{\Vert #2\Vert^{#1}}}
\newcommand{\parentheseDeux}[1]{\left[ #1 \right]}
\newcommand{\defEns}[1]{\left\lbrace #1 \right\rbrace }
\newcommand{\defEnsLigne}[1]{\lbrace #1 \rbrace }
\newcommand{\eqdef}{\overset{:=}}
\newcommandx\probaMarkovTilde[2][2=]
\def\ie{\textit{i.e.}}
\def\eqsp{\;}
\newcommand{\ooint}[1]{\left(#1\right)}
\newcommand{\ccint}[1]{\left[#1\right]}
\newcommand{\indi}[1]{\1_{#1}}
\newcommandx{\weight}[2][2=n]{\omega_{#1,#2}^N}
\newcommandx\sequence[3][2=,3=]
\newcommandx\sequenceD[3][2=,3=]
\newcommandx{\sequencen}[2][2=n\in\N]{\ensuremath{\{ #1_n, \eqsp #2 \}}}
\newcommandx\sequenceDouble[4][3=,4=]
\newcommandx{\sequencenDouble}[3][3=n\in\N]{\ensuremath{\{ (#1_{n},#2_{n}), \eqsp #3 \}}}
\newcommand{\wrt}{w.r.t.}
\def\eg{e.g.}
\newcommand{\opnorm}[1]{{\left\vert\kern-0.25ex\left\vert\kern-0.25ex\left\vert #1
    \right\vert\kern-0.25ex\right\vert\kern-0.25ex\right\vert}}
\def\Id{\operatorname{Id}}
\newcommandx{\CPE}[3][1=]{{\mathbb E}_{#1}\left[#2 \left \vert #3 \right. \right]} 
\newcommandx{\CPVar}[3][1=]{\mathrm{Var}^{#3}_{#1}\left\{ #2 \right\}}
\newcommand{\CPP}[3][]
{\ifthenelse{\equal{#1}{}}{{\mathbb P}\left(\left. #2 \, \right| #3 \right)}{{\mathbb P}_{#1}\left(\left. #2 \, \right | #3 \right)}}
\newcommandx{\osc}[2][1=]{\mathrm{osc}_{#1}(#2)}
\def\Id{\operatorname{Id}}
\def\tG{\tilde{G}}
\def\tH{\tilde{H}}
\def\Jac{\operatorname{J}}
\newcommand{\ensemble}[2]{\left\{#1\,:\eqsp #2\right\}}
\newcommand{\set}[2]{\ensemble{#1}{#2}}
\newcommand\coupling[2]{\Gamma(\mu,\nu)}
\newcommand{\comp}{\mathrm{c}}
\renewcommand{\geq}{\geqslant}
\renewcommand{\leq}{\leqslant}
\def\Leb{\mathrm{Leb}}
\def\iff{ if and only if }
\def\projp{\operatorname{proj}^{\msp}}
\def\invf{s}
\def\accfun{\mathsf{a}}
\def\transflif{\Psi}
\def\transflifdet{\Phi}
\def\transfdet{\Phi}
\def\transfm{M}
\def\transfn{N}
\def\projq{\operatorname{proj}_1}
\def\projp{\operatorname{proj}_2}
\def\Zset{\msz}
\def\Zsigma{\mathcal{Z}}
\def\invol{s}
\def\involk{S}
\def\dnu{\check{\nu}}
\def\dmu{\check{\mu}}
\def\dlambda{\check{\lambda}}
\def\drho{\check{\rho}}
\newcommand{\pushf}[2]{#1_{\#}#2}
\def\mae{\ensuremath{\text{a.e.}}}
\def\sym{F}
\def\target{\pi_0}
\def\exttarget{\pi}
\def\suppl{\textcolor{red}{\ensuremath{*}}}
\def\lipschitz{\mathrm{L}}
\def\probav{\rho}
\def\LF{\operatorname{F}}
\def\LG{\operatorname{G}}
\def\partialrefresh{\omega}
\def\Uniform{\mathcal{U}}
\def\Ber{\operatorname{Ber}}
\title{Nonreversible MCMC from conditional invertible transforms: a complete recipe with convergence guarantees}
\author{Achille Thin\textsuperscript{1},  Nikita Kotelevskii\textsuperscript{2}, Christophe Andrieu\textsuperscript{3}, Alain Durmus\textsuperscript{4} \and Eric Moulines\textsuperscript{1}\textsuperscript{5}, Maxim Panov\textsuperscript{2}}
\date{}
\begin{document}

\footnotetext[1]{Centre de Math\'ematiques Appliqu\'ees, UMR 7641, Ecole Polytechnique, France. \\
  achille.thin@polytechnique.edu, eric.moulines@polytechnique.edu}
\footnotetext[2]{CDISE, Skolkovo Institute of Science and Technology, Moscow, Russia\\
Nikita.Kotelevskii@skoltech.ru,  M.Panov@skoltech.ru
}
\footnotetext[3]{School of Mathematics,  University of Bristol, UK.
  c.andrieu@bristol.ac.uk
  }
\footnotetext[4]{Université Paris-Saclay, ENS Paris-Saclay, CNRS, Centre Borelli, F-91190 Gif-sur-Yvette, France.\\
alain.durmus@cmla.ens-cachan.fr}
\footnotetext[5]{CS Departement, HSE University, Russian Federation}

\maketitle




\begin{abstract}
  Markov Chain Monte Carlo (MCMC) is a class of algorithms to sample complex and high-dimensional probability distributions. The Metropolis-Hastings (MH) algorithm, the workhorse of MCMC, provides a simple recipe to construct reversible Markov kernels. Reversibility is a tractable property that implies a less tractable but essential property here, invariance. Reversibility is however not necessarily desirable when considering performance. This has prompted recent interest in designing kernels breaking this property. At the same time, an active stream of research has focused on the design of novel versions of the MH kernel, some nonreversible, relying on the use of complex invertible deterministic transforms. While standard implementations of the MH kernel are well understood, the aforementioned developments have not received the same systematic treatment to ensure their validity. This paper fills the gap by developing general tools to ensure that a class of nonreversible Markov kernels, possibly relying on complex transforms, has the desired invariance property and leads to convergent algorithms. This leads to a set of simple and practically verifiable conditions. 

\end{abstract}

\section{Introduction}

Being able to simulate from a probability distribution, say $\pi$ defined on a measurable space $(\Zset,\mcz)$ and referred to as the target distribution hereafter, is a ubiquitous task.
Markov chain Monte Carlo methods (MCMC) is an important body of versatile techniques to sample from $\pi$. They consist of simulating realisations of time-homogeneous Markov chains $(Z_k)_{k \in\nset}$ of invariant distribution $\pi$ which possess the property that their realised states can be used to mimic samples from $\pi$, that is $Z_k \sim \pi$ approximately, but with arbitrary precision, and approximate expectations with respect to $\pi$ -- more precise statements are provided in \Cref{theo:basic-cv-textbook} and we refer to these, for now, lose concepts as ``convergence".  We  denote by $P$ the Markov kernel associated with $(Z_k)_{k \in\nset}$.

 Metropolis-Hastings (MH) is a popular strategy to design such a Markov kernel. In its most common form, the ``textbook" MH kernel samples the $(k+1)$-th state $Z_{k+1}$ of $(Z_k)_{k \in \nset}$ as follows:
\begin{enumerate*}[label=(\arabic*)]
\item sample a proposal $Y_{k+1} \sim Q(Z_{k},\cdot)$;
\item set $Z_{k+1} = Y_{k+1}$ with probability $\alpha(Z_{k},Y_{k+1})$; otherwise, set $Z_{k+1}= Z_k$,
\end{enumerate*}
where $Q\colon \Zset \times \Zsigma \to \ccint{0,1}$ is a Markov kernel and $\alpha\colon \Zset \times \Zset \to \ccint{0,1}$ is the acceptance probability. General conditions on $\pi,Q$ and $\alpha$ in order to ensure invariance and convergence of $(Z_k)_{k\in\nset}$ have been known for some time.
In the particular situation where $\pi$ and $\{Q(z,\cdot), z \in \Zset\}$ have densities $\pi$ and $\{q(z,\cdot),z\in \Zset\}$ with respect to a common dominating measure and are positive everywhere one can choose $\alpha(z,z')=\min\{1,\pi(z')q(z',z)/[\pi(z)q(z,z')]\}$ and define a convergent algorithm.

\textbf{Contribution \#1: a complete recipe for $(\pi,S)-$reversible
  kernels.}  In the context of MCMC the $\pi-$invariance property of
$P$ is traditionally the consequence of a stronger property,
$\pi-$reversibility (related to detailed balance~\cite{fang:2014}),
which is however more tractable in practice. The MH Markov kernel is designed to
satisfy this property. However, there has been a re-kindled interest in the
development of ``nonreversible" algorithms
\cite{turitsyn:2011,hukushima2013irreversible,ma2016unifying,ottobre2016markov,bierkens:2017,neklyudov:welling:vetrov:2020,sherlock:2019,gustafson:1998}
which come with the promise of removing the backtracking behaviour
of reversible algorithms, and hence speed-up convergence. Our first
contribution (Section~\ref{sec:mu-s-reversibility}) is \textbf{(a)} a
review of $(\pi,\involk)-$reversibility, related to the modified
detailed balance condition~\cite{fang:2014}, a generalisation of reversibility
behind most so-called ``nonreversible" MCMC algorithms and
\textbf{(b)} a method generalizing the MH rule to obtain
$(\pi,\involk)$-reversible kernels from arbitrary proposal kernels
$Q$. This is a generalisation of~\cite{tierney:1998} which aims to
provide a unifying and firm theoretical footing to recent and future
contributions. The framework encompasses, for example, both the
scenarios where $\pi$ and $Q$ have common dominating measure or when
$Q$ corresponds to a deterministic mapping.

\textbf{New challenges.} Novel applications have led to the development of highly sophisticated extensions of this basic scheme, prompted in particular by recent developments in the context of probability density
representation with normalising flows~\cite{baptista:2020,prangle:2019,papamakarios2019normalizing}, invertible neural networks ~\cite{ardizzone:2019}. For example, following the realisation that the textbook MH can be generalised by combining deterministic invertible mappings of the current state and a source of randomness in the proposal stage, some authors  have proposed using complex mappings involving both non-linearities and the composition of multiple layers~\cite{albergo:2019, thin:2020, spanbauer:freer:2020}, while~\cite{sherlock:2019,gustafson:1998} explore the use of nonreversible Markov kernels.  However it is not always clear that the resulting algorithms are convergent. In particular application of Markov chain theory may seem difficult at first sight given the new levels of complexity involved. Our aim in this paper is to provide users with simple to use theoretical guarantees ensuring validity of the algorithms.

\textbf{Contribution \#2: easy ready-made convergence results.}
Proving convergence of MH methods can be delicate in general. However,
in the $\pi$-reversible case, \cite{mengersen:tweedie:1996} and
\cite{tierney:1994} have derived simple conditions ensuring
convergence of $P$ in the case where $\pi$
and $Q$ share a common dominating measure $\mu$, for example the
Lebesgue measure when $\Zset=\rset^d$.

\begin{theorem}[Convergence of textbook MH] \label{theo:basic-cv-textbook}
  Assume that $\pi$ is not a Dirac mass function and has common $\sigma-$finite dominating measure $\mu$ with $\{Q(z,\cdot), z \in \Zset\}$. Denote $\pi$ and $\{q(z,\cdot), z \in \Zset\}$ the resulting densities. Suppose in addition that $\pi$ is not a Dirac mass and $Q(z,\msz^+) =1$ for any $z \not \in \Zset^+$ with $\Zset^+ = \{z \in \msz \, \colon\, \pi(z) >0\}$. If for any $z' \in \Zset$ such that $\pi(z')>0$ we have $q(z,z')>0$  for any $z \in \msz$, then 
  for any $f\colon\Zset\rightarrow\mathbb{R}$ such that $\pi(|f|)<\infty$, almost surely it holds that 
\begin{equation} \label{eq:averages-convergence}
\lim_{n\rightarrow \infty} n^{-1} \sum_{i=1}^n f(Z_i) = \pi(f) \eqsp.
\end{equation}
In addition, for all $z\in\Zset$
\begin{equation} \label{eq:TVconvergence}
\lim_{n\rightarrow \infty}\|P^n(z,\cdot) - \pi(\cdot)\|_{\rm TV} =0 \eqsp .
\end{equation}
\end{theorem}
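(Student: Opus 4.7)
The plan is to establish the three classical ingredients needed for Meyn--Tweedie type ergodic theorems: $\pi$-reversibility (hence invariance), $\pi$-irreducibility with aperiodicity, and Harris recurrence. First I would write the MH kernel explicitly as
\[
P(z,\mrd z') = q(z,z')\alpha(z,z')\,\mu(\mrd z') + \rU(z)\,\delta_z(\mrd z'),
\]
with $\rU(z) = 1 - \int q(z,z')\alpha(z,z')\,\mu(\mrd z')$, and check directly that the identity $\pi(z)q(z,z')\alpha(z,z') = \pi(z')q(z',z)\alpha(z',z)$, built into the form of $\alpha$, yields detailed balance $\pi(\mrd z)P(z,\mrd z') = \pi(\mrd z')P(z',\mrd z)$ on $\Zset^2$. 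The diagonal part is symmetric for free, so $\pi$ is reversible, hence invariant for $P$.

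Next I would verify $\pi$-irreducibility. For any $A \in \mcz$ with $\pi(A) > 0$, and any $z \in \Zset^+$, the positivity assumption on $q$ together with $\alpha(z,z') > 0$ on $\Zset^+ \times \Zset^+$ (all densities in the ratio are strictly positive) gives
\[
P(z,A) \geq \int_{A \cap \Zset^+} q(z,z')\alpha(z,z')\,\mu(\mrd z') > 0.
\]
For a starting point $z \notin \Zset^+$, the assumption $Q(z,\Zset^+) = 1$ and positivity of $\alpha$ inside $\Zset^+$ reduce this to the previous case in two steps. Aperiodicity follows because $\pi$ is not a Dirac mass: the rejection set $\{z : \rU(z) > 0\}$ has positive $\pi$-measure, so no nontrivial cyclic decomposition can exist.

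The crucial next step is Harris recurrence. I would establish it by combining $\pi$-irreducibility with the existence of a small set: any set $C \subset \Zset^+$ with $\inf_{z \in C}\inf_{z' \in C'}q(z,z') > 0$ on a set $C'$ of positive $\pi$-measure is a $1$-small set (this can be done on sublevel sets of $1/q$ via standard measure-theoretic truncation), and then Tweedie's drift-free argument for bounded-below invariant measures (see \cite{tierney:1994}) promotes $\pi$-irreducibility and $\pi$-invariance to positive Harris recurrence. Once positive Harris recurrence and aperiodicity are in hand, \eqref{eq:averages-convergence} is immediate from the Birkhoff--Harris ergodic theorem, and \eqref{eq:TVconvergence} for every $z \in \Zset$ (as opposed to $\pi$-a.e. $z$) follows from the aperiodic ergodic theorem for positive Harris chains.

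The main obstacle, and the reason why the basic reversibility computation alone is not enough, is upgrading $\pi$-a.e.\ TV convergence to convergence from \emph{every} initial state $z$, which is what the statement demands. This upgrade is exactly what Harris recurrence provides, but establishing it requires careful handling of the initial conditions $z \notin \Zset^+$: one must use the $Q(z,\Zset^+) = 1$ assumption to argue that the chain enters $\Zset^+$ in one step and then merges with a Harris-recurrent chain on $\Zset^+$; the detailed balance computation itself is routine by comparison.
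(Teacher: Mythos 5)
Your overall architecture (invariance via detailed balance, then $\pi$-irreducibility, aperiodicity and Harris recurrence, then the ergodic theorems of \cite{tierney:1994}) is the same as the paper's, which obtains this statement as the special case $\invol=\Id$ of \Cref{theo:irred_and_co}, proved in \Cref{sec:proof:theo:irred_and_co}. Two of your steps, however, do not go through as written.

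First, the aperiodicity argument fails: the rejection set $\{z:\rU(z)>0\}$ can be empty even though $\pi$ is not a Dirac mass --- take the independence sampler $q(z,z')=\pi(z')$, for which $\alpha\equiv 1$ and the chain never rejects. The non-Dirac hypothesis is not what drives aperiodicity; in the paper it is used inside the Harris argument to guarantee $\bar{\alpha}(z)=Q_\alpha(z,\Zset)>0$. Aperiodicity instead follows from the \emph{one-step} irreducibility you already have, namely $P(z,A)>0$ for every $z\in\Zset$ and every $A$ with $\pi(A)>0$ (\Cref{prop:one-step-irreducibility}), which is incompatible with any cyclic decomposition. Note in passing that for $z\notin\Zset^+$ one has $\alpha(z,\cdot)\equiv 1$, so $P(z,A)\geq\int \indi{A\cap\Zset^+}(z')q(z,z')\mu(\rmd z')>0$ already in one step; your two-step detour is unnecessary.

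Second, and more seriously, the Harris recurrence step --- which you correctly identify as the crux --- is not closed by the mechanism you propose. A small set together with $\pi$-irreducibility and an invariant probability yields recurrence and positive recurrence, but not the Harris property: it leaves open a $\pi$-null set of initial states from which the chain could misbehave, and for an MH kernel the danger is exactly the singular part $\rU(z)\updelta_z(\rmd z')$. Saying the chain ``merges with a Harris-recurrent chain on $\Zset^+$'' presupposes what has to be proved. The paper's actual argument shows that every bounded harmonic function $h$ satisfies $h(z)=\pi(h)$ at \emph{every} $z$: using $\accfun(t)\leq t$ one first shows that $Q_\alpha(z,\cdot)$ charges no $\pi$-null set for $z\in\Zset^+$, then combines $Ph=h$ with $P(Ph)=h$ to get $\{\pi(h)-h(z)\}\,\bar{\alpha}(z)\{2-\bar{\alpha}(z)\}=0$ (in the reversible case $\invol=\Id$), and concludes since $\bar{\alpha}(z)>0$ by irreducibility and the non-Dirac assumption; for $z\notin\Zset^+$ the hypotheses $Q(z,\Zset^+)=1$ and $\alpha(z,\cdot)\equiv 1$ give $h(z)=\pi(h)$ directly. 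This harmonic-function computation is the missing ingredient in your sketch, and without it (or an equivalent argument handling the Dirac component from arbitrary initial states) the claim of TV convergence for \emph{all} $z$ is not established.
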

Our second contribution is to provide similarly simple and easy to use conditions to establish convergence for $(\pi,\involk)$-reversible kernel. We translate these conditions to cover  complex proposal mechanisms based on conditional invertible neural transform ensuring that basic convergence properties hold in these novel settings. As we shall see some of the results lead to simple implementation suggestions ensuring that conclusions similar to those of \Cref{theo:basic-cv-textbook} hold. Establishing these properties is often overlooked and a necessary prerequisite to any more refined analysis characterising their performance, such as quantitative finite time convergence bounds.

\textbf{Contribution \#3: application to particular MCMC algorithms.}  We show how our conditions and construction can be used in practice to design $(\pi,\involk)$-reversible kernels which come with convergence guarantees. We first work out a generalization of the Hamiltonian Monte Carlo algorithm in which the gradients of the log-density in the leap frog steps are replaced by general neural transforms~~\cite{neal:2011,sohl2014hamiltonian}. Next, we derive and analyse two lifted Markov kernels~\cite{diaconis:holmes:neal:2000,chen:lovasz:pak:1999,turitsyn:2011,neklyudov:welling:vetrov:2020} covering obtained using conditional invertible transforms on an augmented space. Our experimental results (postponed to Supplementary paper) show numerically the benefits of nonreversibility in several sampling experiments.

The proofs of the main results and some facts, followed by a \suppl ,
can be found in the supplementary material and, for example,
(S{\color{red}\#}) refers to the {\color{red}\#}-th equation in the
supplement. The standard notation and definitions used are precisely
described in the supplementary \Cref{sec:notat-defin} for the reader's convenience.

\section{$(\pi, S)$-reversibility and the Generalized MH rule}\label{sec:mu-s-reversibility}

There has recently been renewed interest in the design of $\pi$-invariant Markov kernels which are nonreversible. In many scenarios, departing from reversibility can both improve the mixing time and
reduce the asymptotic variance of resulting estimators. It has been shown in~\cite{andrieu:livingstone:2019} that these nonreversible Markov kernels
fall under the same common framework of $(\pi,\involk)$-reversibility (introduced below) which encompasses
the modified (or skew) detailed balance conditions. Before proceeding further, additional notations are needed.  Let
  $\invol$ be an involution on $\Zset$, $\invol \circ \invol = \Id$ and
  $\involk$ be the associated kernel $\involk(z,A)= \indi{A}\bigl(\invol(z)\bigr)$,
  $z \in \Zset$, $A \in \Zsigma$.  Let $\dmu$ be a $\sigma$-finite
  measure on the product space $(\Zset^2, \Zsigma^{\otimes
    2})$  (the diacritic~$\check{}$~is used to denote measures on the product space $\Zset^2$). Denote by $\dmu^\invol = \pushf{(\sym_{\invol})}{\dmu}$ the
  pushforward of $\dmu$ by the transform
  $\sym_{\invol}(z,z') = \bigl(\invol(z'),\invol(z)\bigr)$:
\begin{equation}
\label{eq:s-symmetrization}
\dmu^\invol(C)= \int \indi{C}\bigl(\invol(z'),\invol(z)\bigr) \dmu\bigl(\rmd(z,z')\bigr) \eqsp,   C \in \Zsigma^{\otimes 2} \eqsp.
\end{equation}
%
Note that $\sym_{\invol}$ is an involution $\sym_{\invol} \circ \sym_{\invol}= \Id$ which implies that $(\dmu^\invol)^\invol= \dmu$.
\begin{definition}[after \cite{andrieu:livingstone:2019}]
\label{def:pi-S-reversible}
The measure $\dmu$ is $\invol$-symmetric if $\dmu = \dmu^\invol$.
The sub-Markovian kernel $P$ is $(\pi,\involk)$-reversible if the measure $\dmu_P$ defined as $\dmu_P\bigl(\rmd(z,z')\bigr) = \pi(\rmd z) P(z,\rmd z')$
is $\invol$-symmetric.
\end{definition}
It is established in \Cref{sec:proof:condition-reversibility} that  $P$ is $(\pi,\involk)$-reversible if it satisfies the \textbf{skew detailed balance} condition,
\begin{equation}
\label{eq:condition-reversibility}
\pi(\rmd z) P(z,\rmd z')= \pushf{s}{\pi}(\rmd z') \involk P \involk(z', \rmd z) \eqsp.
\end{equation}
In particular, if $\pushf{\invol}{\pi} = \pi$ and $P$ is a Markov kernel,
then $\pi$ is invariant for $P$. We assume that the condition
$\pushf{\invol}{\pi} = \pi$ is in force in the rest of the paper. Note that for $\invol=\Id$ we recover the standard detailed balance condition (see \Cref{sec:standard-reversible-MH}).

\subsection{{Generalized Metropolis-Hastings}}
\label{sec:generalized_MH}
The MH algorithm gives a method to transform any proposal Markov kernel $Q$ into a $\pi$-reversible Markov kernel. We derive a Generalized
Metropolis-Hastings (GMH) rule to turn $Q$  into a   $(\pi,\involk)$-reversible Markov kernel. We then apply this condition to the case where $\pi(\rmd z')$ and $Q(z,\rmd z')$ have a density \wrt\ to a common dominating measure, and to the case where $Q(z,\rmd z')= \updelta_{\Phi(z)}(\rmd z')$ for $\Phi \colon \Zset \rightarrow \Zset$. We first establish a simple necessary and sufficient condition on the proposal kernel $Q$ and the acceptance probability function $\alpha \colon \msz^2\rightarrow\ccint{0,1}$  for the resulting (sub-Markovian) kernel
\begin{equation} \label{eq:Q_alpha}
  Q_{\alpha}(z,\rmd z') := \alpha(z,z') Q(z,\rmd z') \eqsp,
\end{equation}
to be  $(\pi, \involk)$-reversible.  A subset $A \subset \Zset^2$ is said to be $\invol$-symmetric if $(z,z') \in A$ \iff\ $\bigl(\invol(z'),\invol(z)\bigr) \in A$.  We denote
\begin{equation}
  \label{eq:def:dmu_Q}
  \dnu\bigl(\rmd(z,z')\bigr) := \pi(\rmd z) Q(z,\rmd z') \eqsp.
\end{equation}
The following result provides us with a key instrument to work with the densities of $\dnu$ and its pushforward $\dnu^\invol$ in full generality.
\begin{proposition}
  \label{prop:extension-tierney}
  Set $\dlambda= \dnu + \dnu^\invol$, $h= \rmd \dnu / \rmd \dlambda$ and
  $A_{\dnu} = \defEns{h \times h \circ \sym_{\invol} > 0} \in\Zsigma^{\otimes 2}$.
 Then, the restrictions $\dnu_{A}(\cdot)= \dnu( \cdot \cap A_{\dnu})$ and $\dnu^\invol_{A}(\cdot)= \dnu^s(\cdot \cap A_{\dnu})$ are equivalent and $\dnu_{A,\comp}(\cdot)= \dnu( \cdot \cap A_{\dnu}^{\comp})$ and $\dnu^\invol_{A,\comp}(\cdot)= \dnu^s(\cdot \cap A_{\dnu}^{\comp})$ are mutually singular. 
  In addition, define, for $(z,z') \in A_{\dnu}$,
$r(z,z')= h(z,z')/ h\bigl(\invol(z'),\invol(z)\bigr)$. Then, $r$ is a version of the density of $\dnu_{A}$ \wrt\ $\dnu_{A}^\invol$, \ie\  $r= \rmd \dnu_{A} / \rmd \dnu^\invol_{A}$ and $r(z,z')= 1 / r \circ \sym_{\invol}(z,z')$ for all $(z,z') \in A_{\dnu}$.
\end{proposition}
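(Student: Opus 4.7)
The plan is to exploit the fact that the reference measure $\dlambda=\dnu+\dnu^\invol$ is itself $\invol$-symmetric, which lets me compare the densities of $\dnu$ and $\dnu^\invol$ via the involution $\sym_\invol$. First I would check that $\dlambda^\invol=\dlambda$: since $(\dnu^\invol)^\invol=\dnu$ by involutivity of $\sym_\invol$, and the map $\dmu\mapsto\dmu^\invol$ is linear, one has $\dlambda^\invol=\dnu^\invol+\dnu=\dlambda$.

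The core computation is to identify $h\circ\sym_\invol$ as a version of $\rmd\dnu^\invol/\rmd\dlambda$. For any bounded measurable $f$, the change-of-variables formula for pushforwards together with the $\invol$-symmetry of $\dlambda$ yield
\[
\int f\,\rmd\dnu^\invol=\int (f\circ\sym_\invol)\,\rmd\dnu=\int (f\circ\sym_\invol)\,h\,\rmd\dlambda=\int f\,(h\circ\sym_\invol)\,\rmd\dlambda,
\]
where the last equality applies $\int g\,\rmd\dlambda=\int g\circ\sym_\invol\,\rmd\dlambda$ to $g=(f\circ\sym_\invol)\,h$ and uses $\sym_\invol\circ\sym_\invol=\Id$.

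With both densities expressed against the same reference $\dlambda$, the three claims follow by a short case-split on $A_\dnu=\{h\cdot h\circ\sym_\invol>0\}$, which is clearly $\sym_\invol$-invariant (swapping the two arguments through $\sym_\invol$ only permutes the two factors in the product). On $A_\dnu$ both densities are strictly positive, so $\dnu_A\ll\dnu^\invol_A$ and vice versa, with density $r=h/(h\circ\sym_\invol)$; this proves the asserted equivalence and the explicit expression for $r$, and the relation $r(z,z')=1/r\circ\sym_\invol(z,z')$ is immediate from $\sym_\invol\circ\sym_\invol=\Id$. On $A_\dnu^{\comp}$, $\dnu$ is concentrated on $\{h>0,\,h\circ\sym_\invol=0\}$ while $\dnu^\invol$ is concentrated on $\{h=0,\,h\circ\sym_\invol>0\}$, and these sets are disjoint, yielding the mutual singularity of $\dnu_{A,\comp}$ and $\dnu^\invol_{A,\comp}$.

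The main technical point I anticipate is not the measure-theoretic content, which is essentially a symmetrization trick, but the need to fix a pointwise representative of $h$ so that $A_\dnu$ and the ratio $r$ are defined as a genuine set and function rather than merely up to $\dlambda$-null sets; once such a version of $h$ is chosen, all identities hold $\dlambda$-a.e. on $\Zset^2$ and pointwise on $A_\dnu$, which is what the statement requires.
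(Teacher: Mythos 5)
Your proof is correct and follows essentially the same route as the paper's: identify $h\circ\sym_{\invol}$ as a version of $\rmd\dnu^{\invol}/\rmd\dlambda$ via the $\invol$-symmetry of $\dlambda$, then split on $A_{\dnu}$ versus its complement to get equivalence with density $r=h/(h\circ\sym_{\invol})$ on one side and mutual singularity (via the disjoint sets $\{h>0\}$ and $\{h\circ\sym_{\invol}>0\}$ inside $A_{\dnu}^{\comp}$) on the other. The paper performs the key integral identity by splitting $\dlambda$ into $\dnu+\dnu^{\invol}$ and pushing forward each piece, while you invoke $\dlambda^{\invol}=\dlambda$ directly, but this is the same computation.
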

The following result applies Proposition~\ref{prop:extension-tierney} and extends the seminal result~\cite[Theorem~2]{tierney:1998} to the $(\pi,\involk)$-reversible case.
\begin{theorem}
\label{theo:extension-tierney}
The sub-Markovian kernel $Q_\alpha$ in \eqref{eq:Q_alpha} is $(\pi,\involk)$-reversible if and only if the following conditions hold.
\begin{enumerate}[wide, labelwidth=!, labelindent=0pt,label=(\roman*),noitemsep,nolistsep]
\item \label{item:extension-tierney-i} The function $\alpha$ is zero $\dnu$-\mae on $A^{\comp}_{\dnu}$.
\item \label{item:extension-tierney-ii} The function $\alpha$ satisfies $\alpha(z,z') r(z,z')= \alpha(\invol(z'),\invol(z))$ $\dnu$-\mae on $A_{\dnu}$.
\end{enumerate}
\end{theorem}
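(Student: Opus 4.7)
The plan is to recast the $(\pi,\involk)$-reversibility of $Q_\alpha$ as an equality of two measures on $(\Zset^2, \Zsigma^{\otimes 2})$, then decompose this equality along the partition $\Zset^2 = A_\dnu \sqcup A_\dnu^\comp$ supplied by \Cref{prop:extension-tierney}, handling each piece separately using equivalence on one side and mutual singularity on the other.

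More precisely, I would first observe that $Q_\alpha$ is $(\pi,\involk)$-reversible iff the measure $\alpha\cdot\dnu$ equals its push-forward $(\alpha\cdot\dnu)^\invol$. A direct change of variables yields $(\alpha\cdot\dnu)^\invol = (\alpha\circ\sym_\invol)\cdot\dnu^\invol$, so reversibility is equivalent to the measure identity
\begin{equation}\label{eq:plan-meas-id}
\alpha(z,z')\,\dnu\bigl(\rmd(z,z')\bigr) = \alpha\bigl(\invol(z'),\invol(z)\bigr)\,\dnu^\invol\bigl(\rmd(z,z')\bigr)\eqsp.
\end{equation}
A small but important verification is that $A_\dnu$ is $\sym_\invol$-invariant; this follows since $h\cdot h\circ\sym_\invol$ is invariant under $\sym_\invol$ (as $\sym_\invol$ is an involution), so $A_\dnu^\comp$ is $\sym_\invol$-invariant too and restriction commutes with push-forward on these sets.

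Next I would restrict \eqref{eq:plan-meas-id} to $A_\dnu^\comp$. By \Cref{prop:extension-tierney}, $\dnu_{A,\comp}$ and $\dnu^\invol_{A,\comp}$ are mutually singular; by a standard argument (if $f\cdot\mu = g\cdot\nu$ with $\mu\perp\nu$, then both sides vanish) the identity forces $\alpha\cdot\dnu_{A,\comp}=0$, which is condition \ref{item:extension-tierney-i}. Restricting \eqref{eq:plan-meas-id} to $A_\dnu$ instead, the measures $\dnu_A$ and $\dnu^\invol_A$ are equivalent with $\rmd\dnu_A/\rmd\dnu^\invol_A = r$; dividing out $\dnu^\invol_A$ gives $\alpha(z,z')\,r(z,z')=\alpha(\invol(z'),\invol(z))$ $\dnu^\invol_A$-a.e., and by equivalence also $\dnu_A$-a.e., which is condition \ref{item:extension-tierney-ii}. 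Conversely, assuming \ref{item:extension-tierney-i} and \ref{item:extension-tierney-ii}, one reassembles \eqref{eq:plan-meas-id} by checking it on $A_\dnu$ (using the density $r$) and on $A_\dnu^\comp$ (where both sides are zero, using that \ref{item:extension-tierney-i} together with the symmetry $r(z,z')= 1/r\circ\sym_\invol(z,z')$ and the $\sym_\invol$-invariance of $A_\dnu^\comp$ also yields $\alpha\circ\sym_\invol = 0$ $\dnu^\invol$-a.e.\ on $A_\dnu^\comp$).

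The main obstacle is conceptual rather than computational: correctly identifying which a.e.\ statements are needed (with respect to $\dnu$, $\dnu^\invol$, $\dnu_A$, or $\dnu^\invol_A$) and checking that symmetry of $A_\dnu$ makes these compatible. The mutual-singularity argument on $A_\dnu^\comp$ is what forces the necessity of \ref{item:extension-tierney-i}; without \Cref{prop:extension-tierney} providing the clean decomposition, one would otherwise have to argue directly that mass placed by $\alpha\cdot\dnu$ outside the common-dominating region cannot be matched by $(\alpha\cdot\dnu)^\invol$. Everything else is a change-of-variable bookkeeping exercise.
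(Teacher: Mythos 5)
Your proposal is correct and follows essentially the same route as the paper's proof: both recast $(\pi,\involk)$-reversibility of $Q_\alpha$ as the measure identity $\alpha\cdot\dnu = \bigl(\alpha\circ\sym_{\invol}\bigr)\cdot\dnu^{\invol}$, use the $\sym_{\invol}$-symmetry of $A_{\dnu}$ to split along $A_{\dnu}\sqcup A_{\dnu}^{\comp}$, and then invoke the mutual singularity of $\dnu_{A,\comp}$ and $\dnu^{\invol}_{A,\comp}$ to force condition \ref{item:extension-tierney-i} and the equivalence $\rmd\dnu_{A}/\rmd\dnu^{\invol}_{A}=r$ to obtain condition \ref{item:extension-tierney-ii}. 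The reassembly in the converse direction matches the paper's computation as well.
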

Similarly to the $\pi$-reversible case, we define the generalized Metropolis-Hastings (GMH)
rejection probability by 
\begin{equation}
\label{eq:generalized-MH}
  \alpha(z,z') =
  \begin{cases}
    \accfun\left(\frac{h(\invol(z'),\invol(z))}{h(z,z')}\right) & h(z,z') \ne 0, \\
    1 & h(z,z')= 0,
  \end{cases}
\end{equation}

\vspace{-0.1cm}
\noindent
where $\accfun\colon \rset_+^* \to \ccint{0,1}$ satisfies $\accfun(0)= 0$ and for $t \in \rset_+^*$,
\begin{equation}
  \label{eq:prop_accfun}
    t\accfun(1/t) = \accfun(t) \eqsp.
  \end{equation}
  Then $\alpha$ satisfies the conditions~\ref{item:extension-tierney-i}-\ref{item:extension-tierney-ii} of \Cref{theo:extension-tierney}, see \Cref{sec:proof:generalized-MH}.
We may take for example  $\accfun(t)= \min(1,t)$ or $\accfun(t)= t/(1+t)$
which correspond to the classical Metropolis-Hastings and Barker ratio respectively.

We can obtain the GMH Markov kernel $P$ which is $(\pi,\involk)$-reversible by adding Dirac masses:
{\small{\begin{equation}
     \label{eq:kernel-with-mass}
     P(z,\rmd z') = Q_{\alpha}(z,\rmd z')+ a(z) \updelta_{z}(\rmd z') + b(z) \updelta_{\invol(z)}(\rmd z')  \end{equation}
 }}
 
 \vspace{-0.7cm} 
 \noindent
with $a$, $b$ nonnegative, measurable satisfying $a(z)= a(\invol(z))$ and $a(z)+b(z)= 1-Q_{\alpha}(z,\msz)$; see \Cref{sec:proof:expressions-a-b}. In the sequel, we focus on the case $a(z)=0$ and $b(z)= 1 - Q_\alpha(z,\msz)$.

\subsection{{GMH for particular proposal maps}}
\label{sec:proposals}
We now specialize~\eqref{eq:generalized-MH} to the case where $\pi$ and $Q$ admit a common dominating measure and the case where $Q$ is deterministic.

\textbf{Proposal with densities}.
Suppose there is a common dominating measure $\mu$ on $(\Zset,\Zsigma)$ such that $\pi(\rmd z) = \pi(z) \mu(\rmd z)$, $Q(z,\rmd z')= q(z,z') \mu(\rmd z')$ and that $\mu$ is invariant by $\invol$, \ie\ $\pushf{\invol}{\mu}= \mu$. In this scenario, we have (see \Cref{sec:proof:proposal-with-densities})
\begin{equation}
  \label{eq:def_A_dnu}
A_{\dnu} = \defEns{\pi(z) q(z,z') \times \pi(z') q\bigl(\invol(z'),\invol(z)\bigr) > 0} \eqsp.
\end{equation}
In addition, we obtain using~\eqref{eq:generalized-MH} that
{\small{
\begin{equation}
  \label{eq:alpha_density}
\alpha(z,z')= \begin{cases} \accfun\parentheseDeux{\frac{\pi(z') q(\invol(z'),\invol(z))}{\pi(z) q(z,z')}} & \pi(z) q(z,z') \ne 0, \\
  1  & \pi(z) q(z,z')=0 \eqsp.
\end{cases}
\end{equation}
}}

\vspace{-0.5cm}
\noindent
Theorem \ref{theo:basic-cv-textbook} exploits the fact that in the $\pi$-reversible scenario the MH kernel is $\pi$-irreducible if the condition $\pi(z') > 0$ implies that $q(z,z') > 0$~\cite{mengersen:tweedie:1996}. This result can be extended to the $(\pi,\involk)$-reversible case as follows.
\begin{lemma}
The GMH Markov kernel $P$ in \eqref{eq:kernel-with-mass} is $\pi$-irreducible if, $\pi(z') > 0$ implies that, for all $z \in \Zset$, $q(z,z') > 0$ and $q\bigl(\invol(z), \invol(z')\bigr) > 0$.
\end{lemma}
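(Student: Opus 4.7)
I would prove the stronger statement that $P(z,A) > 0$ for every $z \in \Zset$ and every $A \in \Zsigma$ with $\pi(A) > 0$, which is the one-step version of $\pi$-irreducibility and suffices for the standard Meyn--Tweedie notion. Set $A^+ := A \cap \{z' \in \Zset : \pi(z') > 0\}$. Since $\pi$ has density $\pi(\cdot)$ with respect to $\mu$, one has $\pi(A^+) = \pi(A) > 0$, which forces $\mu(A^+) > 0$. Discarding the atomic contributions in \eqref{eq:kernel-with-mass}, it is then enough to establish
$$P(z, A) \;\geq\; \int_{A^+} \alpha(z, z')\, q(z, z') \, \mu(\rmd z'),$$
and to show that the integrand is strictly positive on $A^+$, which (together with $\mu(A^+) > 0$) will make the integral strictly positive.

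For every $z' \in A^+$ one has $\pi(z') > 0$, so the hypothesis immediately yields $q(z, z') > 0$, and it remains to check that $\alpha(z, z') > 0$. If $\pi(z) = 0$, the formula \eqref{eq:alpha_density} gives $\alpha(z, z') = 1$ since $\pi(z) q(z, z') = 0$. Suppose instead $\pi(z) > 0$. Here I would use that $\pushf{\invol}{\mu} = \mu$ together with $\pushf{\invol}{\pi} = \pi$; after replacing the density $\pi$ by its $\invol$-symmetrization $\bar\pi(z) := \tfrac{1}{2}(\pi(z) + \pi(\invol(z)))$ on a $\mu$-null set if necessary, I may assume $\pi \circ \invol = \pi$ pointwise, so in particular $\pi(\invol(z)) > 0$. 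Applying the hypothesis to $\invol(z)$ (in place of ``$z'$'') with $\invol(z')$ playing the role of the universally quantified variable then gives $q(\invol(z'), \invol(z)) > 0$. Consequently the argument of $\accfun$ in \eqref{eq:alpha_density} is a strictly positive real number, and since the admissible choices of $\accfun$ (e.g.\ Metropolis $t \mapsto \min(1,t)$ and Barker $t \mapsto t/(1+t)$) are strictly positive on $\rset_+^*$, I conclude $\alpha(z, z') > 0$.

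Combining the two cases yields strict positivity of $\alpha(z, \cdot)\, q(z, \cdot)$ on $A^+$, hence $P(z, A) > 0$, proving $\pi$-irreducibility. The only genuinely delicate point is the passage from the $\mu$-a.e.\ identity $\pi \circ \invol = \pi$ to the pointwise statement $\pi(\invol(z)) > 0$ at the specific $z$ of interest; the symmetrization trick handles this cleanly, and everything else is a direct application of the positivity assumption on $q$.
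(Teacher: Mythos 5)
Your overall route is the paper's: the lemma is proved there via a one-step positivity statement (\Cref{prop:one-step-irreducibility} in the supplement), namely $P(z,A) \geq \int \indi{A \cap \Zset^+}(z')\, \alpha(z,z')\, q(z,z')\, \mu(\rmd z') > 0$ for every $z$, with the same case split on $\pi(z)>0$ versus $\pi(z)=0$ and the same use of $\pi(A)>0 \Rightarrow \mu(A\cap\Zset^+)>0$.

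The one place you diverge is in establishing $q\bigl(\invol(z'),\invol(z)\bigr)>0$ when $\pi(z)>0$, and this is where your argument has an avoidable weak point. You derive it by applying the hypothesis at the point $\invol(z)$, which forces you to first secure $\pi\bigl(\invol(z)\bigr)>0$ via the $\mu$-a.e.\ identity $\pi=\pi\circ\invol$ and a symmetrization of the density on a $\mu$-null set. That step is not innocuous: the kernel $P$ in \eqref{eq:kernel-with-mass} is built from $\alpha$ in \eqref{eq:alpha_density}, which depends on the chosen version of the density through the value $\pi(z)$ at the \emph{specific} starting point $z$, and $\pi$-irreducibility is a claim for \emph{every} $z\in\Zset$, including the exceptional null set; moreover the lemma's hypothesis is itself a statement about the set $\{\pi>0\}$ for the given version. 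So ``WLOG $\pi\circ\invol=\pi$ pointwise'' silently changes both the kernel and the hypothesis at exactly the points you cannot neglect. The repair is simply to drop the detour: since $\pi(z)>0$ in this case, apply the hypothesis with $z$ in the role of the point of positive density and $z'$ in the role of the universally quantified variable; this gives $q\bigl(\invol(z'),\invol(z)\bigr)>0$ (and $q(z',z)>0$) directly, with no reference to $\pi\bigl(\invol(z)\bigr)$. This is what the paper does, and with that replacement your proof coincides with its argument. (Both proofs also tacitly use that $\accfun$ is strictly positive on $\rset_+^*$; this holds for the Metropolis and Barker choices but is not forced by \eqref{eq:prop_accfun} alone.)
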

In the $\pi$-reversible case, \cite[Corollary~2]{tierney:1994} shows that the $\pi$-irreducibility condition implies that the GMH Markov kernel $P$ \eqref{eq:kernel-with-mass} is Harris recurrent and aperiodic.
These two properties have consequences that are very important in practice: the convergence in total variation of the iterates of the kernel to the invariant distribution and the ergodic theorem become valid also for all the initial conditions. These results extend to $(\pi,\involk)$-Markov kernels (see \Cref{sec:proof:theo:irred_and_co}).
\begin{theorem}\label{theo:irred_and_co}
  Let $P$ be defined as in~\eqref{eq:kernel-with-mass}, with $a(z)=0$ and $b(z)= 1 - Q_\alpha(z,\msz)$. Assume that for any $z' \in\msz$, $\pi(z') > 0$ implies $q(z,z') \times q\bigl(\invol(z),\invol(z')\bigr) >0$
  for any $z \in \msz$. Suppose in addition that $\pi$ is not a Dirac mass and $Q(z,\msz^+) =1$ for any $z \not \in \Zset^+$ with $\Zset^+ = \{z \in \msz \, \colon\, \pi(z) >0\}$. The conclusions of \Cref{theo:basic-cv-textbook}  hold.
\end{theorem}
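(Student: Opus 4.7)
The plan is to assemble the four ingredients that underlie the standard Markov chain convergence machinery (\`a la Meyn--Tweedie, as used in \cite{tierney:1994,mengersen:tweedie:1996}): (i) $\pi$ is $P$-invariant; (ii) $P$ is $\pi$-irreducible; (iii) $P$ is Harris recurrent; (iv) $P$ is aperiodic. Granted these four properties, the classical theorems immediately yield both the ergodic theorem \eqref{eq:averages-convergence} and the total variation convergence \eqref{eq:TVconvergence} for every starting point $z\in\msz$. Item (i) is a direct consequence of $(\pi,\involk)$-reversibility combined with the standing assumption $\pushf{\invol}{\pi}=\pi$ via \eqref{eq:condition-reversibility}, and item (ii) follows immediately from the Lemma preceding the theorem, whose hypothesis is literally the irreducibility assumption we are given; the states outside $\Zset^+$ are handled by the condition $Q(z,\Zset^+)=1$, which reduces the analysis to an initial point in $\Zset^+$.

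The nontrivial work is therefore in (iii) and (iv), and the hard part will be aperiodicity: the classical ``positive holding probability'' shortcut is unavailable because when $a(z)=0$ the chain on rejection moves to $\invol(z)\neq z$, creating a potential spurious period of~$2$. To handle both, I would first exploit \eqref{eq:alpha_density} to show that whenever $\pi(z)\pi(z')>0$ one has $\alpha(z,z')>0$: by the hypothesis, $q(z,z')>0$ and $q(\invol(z),\invol(z'))>0$; moreover, applying the hypothesis with $\invol(z')$ in place of $z'$ and $\invol(z)$ in place of $z$ (using $\pi\circ\invol=\pi$ so that $\pi(\invol(z'))>0$) yields $q(\invol(z'),\invol(z))>0$, which makes the ratio inside $\accfun$ strictly positive. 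Hence $Q_\alpha$ admits a strictly positive density $\alpha(z,z')q(z,z')$ with respect to~$\mu$ on $\Zset^+\times\Zset^+$.

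This positive continuous component does most of the remaining work. For (iii), it makes any $\mu$-nontrivial subset of $\Zset^+$ a small set, so the standard Nummelin splitting argument used in \cite[Cor.~2]{tierney:1994} carries over verbatim and upgrades $\pi$-irreducibility plus $\pi$-invariance to Harris recurrence. For (iv), suppose for contradiction a cyclic partition $\{D_0,\dots,D_{d-1}\}$ of $\Zset^+$ exists with $d\geq 2$; by $\pi$-irreducibility and $\pi$-invariance each $D_i$ has positive $\pi$- and hence positive $\mu$-measure, so from any $z\in D_0\cap\Zset^+$ we would have $P(z,D_j)=0$ for $j\neq 1$, in contradiction with $P(z,D_j)\geq Q_\alpha(z,D_j)>0$ for every such $j$. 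This rules out any period $d\geq 2$ and completes the verification of (iv). Assembling (i)--(iv) with the standard convergence theorem yields \eqref{eq:averages-convergence} and \eqref{eq:TVconvergence}.
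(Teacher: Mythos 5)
Your overall architecture (invariance, $\pi$-irreducibility, Harris recurrence, aperiodicity) is the paper's, and your treatment of invariance, irreducibility and aperiodicity is essentially the paper's own: the paper likewise proves the one-step positivity statement $P(z,A)>0$ for every $z\in\msz$ and every $\pi$-positive $A$ (\Cref{prop:one-step-irreducibility}) and deduces aperiodicity from it exactly as you do. Two smaller remarks first. To obtain $q\bigl(\invol(z'),\invol(z)\bigr)>0$ you should apply the hypothesis with $\invol(z)$ in the role of the \emph{target} and $\invol(z')$ in the role of the \emph{source}, using $\pi\bigl(\invol(z)\bigr)=\pi(z)>0$; the substitution you describe produces $q\bigl(\invol(z),\invol(z')\bigr)>0$, which the hypothesis already gives directly. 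Also, pointwise positivity of the density $\alpha(z,z')q(z,z')$ on $\Zset^+\times\Zset^+$ does \emph{not} make every $\mu$-nontrivial subset of $\Zset^+$ a small set: smallness requires a uniform minorization, which a merely positive density does not supply (think of a Gaussian proposal and an unbounded set). This claim is not needed if you argue via harmonic functions, but as written it is false.

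The genuine gap is the Harris recurrence step. Tierney's argument for the reversible MH kernel does not carry over verbatim, precisely because here rejection sends the chain to $\invol(z)$ rather than holding it at $z$. Writing $h=Ph$ for a bounded harmonic $h$ and using that $Q_\alpha(z,\cdot)$ charges no $\pi$-null set for $z\in\msz^+$ (a consequence of $\accfun(t)\leq t$), one gets
\[
h(z)=\pi(h)\,\bar{\alpha}(z)+\bigl(1-\bar{\alpha}(z)\bigr)\,h\bigl(\invol(z)\bigr)\eqsp,
\]
which, unlike in the reversible case, couples $h(z)$ with $h\bigl(\invol(z)\bigr)$ and does not by itself force $h(z)=\pi(h)$. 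The paper resolves this by iterating — applying $P$ once more (equivalently, writing the same identity at $\invol(z)$ and substituting) — which yields $\{\pi(h)-h(z)\}\{\bar{\alpha}(z)+\bar{\alpha}\circ\invol(z)-\bar{\alpha}(z)\,\bar{\alpha}\circ\invol(z)\}=0$ and hence $h(z)=\pi(h)$ once $\bar{\alpha}(z)>0$ is known from irreducibility; the case $z\notin\msz^+$ is then treated separately using $Q(z,\msz^+)=1$ and $\alpha(z,\cdot)\equiv 1$. This decoupling of $h(z)$ from $h\bigl(\invol(z)\bigr)$ is the actual content of the Harris recurrence proof in the nonreversible setting, and it is exactly what your appeal to the ``verbatim'' reversible argument skips.
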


\paragraph{Deterministic proposal.}
Suppose now that $\transfdet$ is a one-to-one mapping from $\Zset$ onto $\Zset$ such that
\begin{equation}
\label{eq:extended-involution}
\transfdet^{-1}= \invol \circ \transfdet \circ \invol \eqsp.
\end{equation}
We consider the deterministic proposal kernel $Q(z,\rmd z')= \updelta_{\transfdet(z)}(\rmd z')$: when the current state is $z$, the proposal is $\transfdet(z)$.
Condition~\eqref{eq:extended-involution} implies that $F= \invol \circ \transfdet$ is an involution.
Our setting covers involutive MCMC -- corresponding to the case $\invol = \Id$ introduced in~\cite[Section~2]{tierney:1998} and more recently in~\cite{neklyudov:welling:vetrov:2020}.

In this scenario we have that (see \Cref{sec:proof:deterministic-case})
$\dnu\bigl(\rmd(z,z')\bigr)= \pi(\rmd z) \updelta_{\transfdet(z)}(\rmd z')$ and $\dnu^\invol\bigl(\rmd(z,z')\bigr)= \pi(\rmd z') \updelta_{\transfdet^{-1}(z')}(\rmd z)$. The function $h$ defined in \Cref{prop:extension-tierney} is given by  $h(z,z') = \indi{\transfdet(z)}(z') k(z)$ with
\begin{equation}
\label{eq:definition-k-lambda}
k(z) = \frac{\rmd \pi}{\rmd \lambda}(z), \quad \lambda= \pi + \pushf{(\transfdet^{-1})}{\pi} \eqsp.
\end{equation}
\Cref{theo:extension-tierney} is satisfied with the acceptance probability given by $\alpha\bigl(z,\transfdet(z)\bigr)= \bar{\alpha}(z)$ with
\begin{equation}
\label{eq:generalized-MH-deterministic}
\bar{\alpha}(z)=  \accfun\left(\frac{k\bigl(\invol \circ \transfdet (z)\bigr)}{k(z)} \right)
\end{equation}
if $k(z) > 0$ and $\bar{\alpha}(z)=1$, otherwise. Of course, there is no need to define $\alpha(z,z')$ for $z' \ne \transfdet(z)$. A special case of interest is when $\Zset= \rset^d$ and the target distribution $\pi(\rmd z)= \pi(z) \Leb_d(\rmd z)$ has a density \wrt\ the Lebesgue measure on $\rset^d$. Here the dominating measure $\lambda$ is given by
\begin{equation}
\label{eq:definition-lambda-dens}
\lambda(\rmd z)=\{ \pi(z) + \pi \circ \transfdet(z) \Jac_{\transfdet}(z) \} \Leb_d(\rmd z) \eqsp,
\end{equation}
where $\Jac_f$ denotes the absolute value of the Jacobian determinant of $f$.
Then, the density $k(z)$ is given by
\begin{equation}
\label{eq:definition-k-lambda-dens}
k(z)= \frac{\pi(z)}{\pi(z) + \pi \circ \transfdet(z) \Jac_{\transfdet}(z)}
\end{equation}
and the acceptance ratio $\bar{\alpha}(z)$ takes the simple form
\begin{equation}
\label{eq:generalized-MH-deterministic-dens}
\bar{\alpha}(z)=  \accfun\left(\frac{\pi \circ \transfdet(z) \Jac_{\transfdet}(z)}{\pi(z)} \right)
\end{equation}
if $\pi(z) \ne 0$ and $\bar{\alpha}(z)=1$ otherwise (see \Cref{sec:proof:generalized-MH-deterministic-dens}). We obtain the same acceptance ratio given by~\cite[Eq.~(5)]{neklyudov:welling:vetrov:2020}, which derive this expression in the special case $\invol= \Id$ and thus $\transfdet^{-1}= \transfdet$ is an involution. It is perhaps striking that the acceptance ratio \textbf{does not depend} on $\invol$: this comes from the fact the target distribution $\pi$ is invariant by $\invol$.
This setting encompasses many algorithms, HMC~\cite{duane:1987,neal:2011}, and NICE-MC~\cite{song2017nice} -- see below, \cite{neklyudov:welling:vetrov:2020} and the references therein.
Of course, in most cases, $(\pi,\involk)$-reversible deterministic Markov kernels are not $\pi$-irreducible and Harris recurrent. They can nevertheless be important building blocks of Markov kernels as in the HMC construction.


\section{Applications and examples}
\label{sec:applications-examples}

\subsection{Generalized Hamiltonian Dynamics}
\label{sec:gener-hamilt-dynam}
We first consider generalizations of the Hamiltonian Monte Carlo algorithm (see~\cite{neal:2011,sohl2014hamiltonian}). These methods might also be seen as a special case of NICE (Non-linear Independent Components Estimation) MCMC methods~\cite{song2017nice,neklyudov:welling:vetrov:2020}. The objective is to sample a distribution on $\rset^d$ of density $\pi_0$ \wrt~the Lebesgue measure.
We use a data augmentation approach which consists of adding a ``momentum'' variable with stationary distribution admitting a symmetric density $\varphi$ on $\rset^d$ \wrt~the Lebesgue measure, \eg~$\varphi(-p) = \varphi(p)$. More precisely, on the extended state space $\Zset = \rset^{2d}$, we consider the extended target density defined by $\pi(x,p)= \pi_0(x) \varphi(p)$ and the Markov chain $\big(Z_i=(X_i,P_i)\big)_{i \in \nset}$. The involution is taken to be $\invol(x,p)= (x,-p)$. By construction, $\pushf{\invol}{\pi} = \pi$.

We first show how to construct a  $(\pi,\involk)$-reversible Markov kernel on $\rset^{2d}$ using modified leap-frog integrators.  Let $m \in \nset$ and $\{\transfm_i,\transfn_i\}_{i=1}^m$ be $\rmc^1$ functions on $ \rset^{d}$. We define a mapping $\transfdet(x,p) = \LF_{m} \circ \cdots \circ \LF_1(x,p)$ on $\rset^{2d}$ where $\LF_i$ is given by $(x_{i+1},p_{i+1}) = \LF_{i}(x_i,p_i)$ where for $h>0$
\begin{equation}
  \label{eq:definition-nice-flow}
  \begin{cases}
    p_{i+1/2} &= p_{i} + h \transfm_i(x_i) \eqsp, \\
    x_{i+1} &= x_i +h p_{i+1/2} \eqsp, \\
    p_{i+1} &= p_{i+1/2} + h\transfn_i(x_{i+1}) \eqsp.
  \end{cases}
\end{equation}
%
It is easily seen that $\LF_{i}$ is a $\rmc^1$ diffeomorphism on $\rset^{2d}$ to $\rset^{2d}$ with $\Jac_{\LF_i}(x,p)= 1$. Moreover, if for any $i\in\{1,\dots,m\}$, $\transfm_i = \transfn_{m+1-i}$, then $\invol\circ\transfdet\circ\invol=\transfdet^{-1}$; see \Cref{subsec:NICE_proofs}. We assume in the sequel that this condition holds. 
Consider now the Markov kernel 
{\small{
\begin{align}
\label{eq:NICE_kernel}
&P((x,p), \rmd (y,q)) = \bar{\alpha}(x,p) \updelta_{\transfdet(x,p)}(\rmd (y,q)) \\
& \qquad \qquad + (1-\bar{\alpha}(x,p))\updelta_{(x,-p)}(\rmd (y,q))\eqsp,\\
\label{eq:generalized-MH-deterministic-nice}
&\text{with}\qquad  \bar{\alpha}(x,p)= \accfun\left( \pi \circ \transfdet(x,p)/\pi(x,p) \right)\eqsp,
\end{align}}}
\noindent 
if $\pi(x,p) >0$ and $\bar{\alpha}$ is equal to 1 otherwise. Using \eqref{eq:generalized-MH-deterministic-dens}, $P$ is $(\pi, \involk)$-reversible,
but is deterministic and therefore not
ergodic. A standard approach to address this issue, used in the context of HMC algorithms, is to refresh the
momentum between two successive moves according to a Markov transition
preserving the distribution $\varphi$. A particular choice consists of sampling the velocity afresh from $\varphi$ before applying the kernel~\eqref{eq:NICE_kernel}. More precisely, we define the Markov chain $(X_i)_{i\in \nset}$ by the following recursion. From a state $X_{k}$, the $k+1$-th iterate is defined by: \begin{enumerate*}
    \item sample $P_{k+1}$ from $\varphi$ and set $(Y_{k+1},Q_{k+1}) = \transfdet(X_{k},P_{k+1})$; accept $X_{k+1} = Y_{k+1}$ with probability $\bar{\alpha}(X_k,P_{k+1})$ and reject $X_{k+1}=X_k$ otherwise. 
\end{enumerate*}
In this case one can check that $(X_i)_{i \in \nset}$ is a Markov chain on $\rset^d$ of kernel, obtained by marginalisation of~\eqref{eq:kernel-with-mass} \wrt\ the momentum distribution,
 {\small\begin{equation}
\label{eq:marginal-NICE-kernel}
K(x,\rmd y)= K_\alpha(x,\rmd y) + \{1- \bar{\alpha}(x) \} \updelta_{x}(\rmd y) \eqsp,
\end{equation}
}
where $\bar{\alpha}(x) = K_{\alpha}(x,\rset^d)$ and denoting 
$G_x(p)= \projq \circ \transfdet(x,p)$, $\projq(x,p)= x$,
\begin{equation}
\label{eq:definition-K-alpha-1}
K_\alpha(x,\rmd y)= \int \bar{\alpha}(x,p) \varphi(p) \updelta_{G_x(p)}(\rmd y) \rmd p
\end{equation} 
If for any $x \in \rset^d$, $p \mapsto G_x(p)$ is a diffeomorphism on $\rset^d$, then \Cref{theo:irred_and_co} can be applied. In such case, $K_\alpha(x,\rmd y)= \alpha(x,y) q(x,y)$
with 
\begin{align}
\label{eq:acceptance-NICE}
& \alpha(x,y) = \accfun\left(\frac{\pi_0(y) \varphi\bigl\{H_x\bigl(G_x^{-1}(y)\bigr)\bigr\}}{\pi_0(x) \varphi\bigl(G_x^{-1}(y)\bigr)} \right) \,, \\
\label{eq:definition-q-nice}
&q(x,y)= \varphi\bigl(G_x^{-1}(y)\bigr) \Jac_{G_x^{-1}}(y) \,,
\end{align}
and $H_x(p)= \projp \circ \transfdet(x,p)$ and $\projp(x,p)=p$. 
The expression of $\alpha(x,y)$ is only of theoretical interest and  is not needed to implement the algorithm.  
Of course, requiring that $G_x$ is a diffeomorphism imposes conditions on $\LF_i$, $i \in  \{1,\dots, m\}$ and \Cref{theo:lip_nice} (see \Cref{sec:proof:acceptance-NICE}).

\begin{theorem}
\label{theo:condition-diffeo}
Assume that $\varphi>0$, $\varphi(-p)=\varphi(p)$ for all $p\in\rset^d$ and for any $i \in  \{1,\dots, m\}$, $M_i$ and $N_i$ are $\mathrm{L}$-Lipschitz and  $h \leq c_0/[\mathrm{L}^{1 / 2}m]$, where $c_0 \approx 0.3$ (see \Cref{theo:lip_nice}). Then for any $x \in \rset^d$, $p \mapsto G_x(p)$ is a $\rmC^1$-diffeomorphism. 
\end{theorem}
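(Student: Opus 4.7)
The map $G_x$ is automatically $C^1$ as a composition of the $C^1$ maps $\LF_i$, so it suffices to show that $G_x\colon\rset^d\to\rset^d$ is a bijection whose differential is invertible at every point. The plan is to exploit the smallness of $h$ encoded in the hypothesis to treat each leapfrog step as a small perturbation of the ``free'' flow $(x,p)\mapsto(x+hp,p)$, for which $G_x$ reduces to the affine diffeomorphism $p\mapsto x + mh\,p$.

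First, I would differentiate the recursion \eqref{eq:definition-nice-flow} in $p$ at fixed $x$. Setting $A_i := \partial x_i/\partial p$ and $B_i := \partial p_i/\partial p$ along the trajectory with $(x_0,p_0)=(x,p)$, the chain rule yields $A_0=0$, $B_0=I$, and the coupled linear recurrence
\begin{align*}
A_{i+1} &= (I + h^2 \mathrm{D}M_i(x_i))\,A_i + h\,B_i, \\
B_{i+1} &= B_i + h\,\mathrm{D}M_i(x_i)\,A_i + h\,\mathrm{D}N_i(x_{i+1})\,A_{i+1},
\end{align*}
with $\mathrm{D}G_x(p) = A_m$. When $M_i\equiv N_i\equiv 0$ this collapses to $A_i = ih\,I$, $B_i = I$. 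Using $\|\mathrm{D}M_i\|,\|\mathrm{D}N_i\|\leq L$, a direct induction gives bounds of the form $\|A_i\|, \|B_i\|\lesssim (1+ch\sqrt{L})^{i}$ and $\|A_m - mh\,I\| \leq mh\cdot\varepsilon(h,m,L)$, where $\varepsilon(h,m,L)\to 0$ as $h\sqrt{L}m\to 0$. Invoking \Cref{theo:lip_nice} for the sharp numerical threshold $c_0\approx 0.3$, the hypothesis $h \leq c_0/[L^{1/2}m]$ forces $\varepsilon(h,m,L) < 1$, which gives both pointwise invertibility of $\mathrm{D}G_x(p)$ and a uniform upper bound on $\|[\mathrm{D}G_x(p)]^{-1}\|$.

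To upgrade pointwise invertibility to a global diffeomorphism, I would write $G_x(p) = x + mh\,p + R_x(p)$, where $R_x$ gathers the contributions of the $M_i$ and $N_i$. The same iterative estimate also controls $\Lip(R_x)$, giving $\Lip(R_x) < mh$ under the hypothesis on $h$. A Banach fixed-point argument then shows that $p\mapsto x + mh\,p + R_x(p)$ is a bi-Lipschitz bijection of $\rset^d$; combined with the pointwise invertibility of $\mathrm{D}G_x$ and the inverse function theorem, this yields $G_x^{-1}\in C^1$ and hence the diffeomorphism property.

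The main obstacle is the bookkeeping for the coupled recurrence on $(A_i,B_i)$: naive Gronwall-type estimates applied separately to each variable would yield only the weaker threshold $h\lesssim 1/(Lm)$, whereas obtaining the stated $h\lesssim 1/(L^{1/2}m)$ requires treating $(A_i,B_i)$ as a single block iterate and summing the resulting perturbation series in the effective small parameter $h\sqrt{L}$, which is presumably what \Cref{theo:lip_nice} provides. The symmetry assumption $M_i=N_{m+1-i}$ plays no role in this particular argument (it was used for $\invol\circ\transfdet\circ\invol=\transfdet^{-1}$), so the proof really only relies on the Lipschitz bound and the step-size condition.
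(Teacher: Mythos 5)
Your argument is correct and follows essentially the same route as the paper's proof of \Cref{theo:lip_nice}: decompose $G_x(p)=x+mh\,p+h^2\Theta_m(x,p)$, control the perturbation via a block estimate on $(x,p)$ with the $\mathrm{L}^{-1/2}$ weighting that produces the effective parameter $h\sqrt{\mathrm{L}}$, conclude bijectivity by the Banach fixed-point theorem and smooth invertibility from the nonsingular Jacobian. The only cosmetic difference is that you bound the differentials $(A_i,B_i)$ of the recursion while the paper bounds Lipschitz constants of trajectory differences (Lemmas~\ref{lemma:s1} and~\ref{lemma:s3}); your observations that the threshold must be $h\lesssim 1/(\mathrm{L}^{1/2}m)$ rather than $1/(\mathrm{L}m)$ and that the symmetry condition $M_i=N_{m+1-i}$ is not needed here both match the paper.
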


The proof of this result is along the same lines as the proof of
\cite[Theorem 1]{durmus:moulines:saksman:2017} which focuses on the standard HMC algorithm.

A by-product of the proof of \Cref{theo:condition-diffeo}, is that, perhaps surprisingly (see \eqref{sec:proof:NICE-identity-ratio}) 
\begin{equation}
\label{eq:NICE-identity-ratio}
{q(y,x)}/{q(x,y)}= {\varphi\bigl(H_x \circ G_x^{-1}(y)\bigr)}/{\varphi\bigl(G_x^{-1}(y)\bigr)} \eqsp,
\end{equation}
implying that $\alpha$ \eqref{eq:acceptance-NICE} is the textbook MH acceptance ratio corresponding to $q$ in \eqref{eq:definition-q-nice}, and  the Markov kernel $K$ \eqref{eq:marginal-NICE-kernel} is therefore $\pi$-reversible. It easily checked that this kernel satisfies the conditions of \Cref{theo:basic-cv-textbook} and the convergence results apply.

The $\pi_0$-reversibility of $K$ has the disadvantage of loosing the potentially advantageous non-backtracking (or persistency) features of $P$. It is possible to recover persistency by considering the mixture of kernels on the extended space $\rset^{2d}$ $\omega P + (1-\omega) L$ where $P$ is the deterministic kernel \eqref{eq:NICE_kernel} and
$L((x,p),\rmd (y,q)) = K(x, \rmd y) \varphi(q) \rmd q$. In words, we refresh independently the position and the momentum.  
The amount of persistency is controlled by $\omega$. 
Theorem \ref{theo:condition-diffeo} establishes $\pi_0-$irreducibility of $K$ (see its proof), which immediately implies $\pi$-irreducibility of $L$;
see \Cref{subsubsec:onpersistency} for a more detailed discussion. 



\subsection{Lifted kernels}
In this section, we apply the results of \Cref{sec:mu-s-reversibility}
to lifted kernels introduced in~\cite{diaconis:holmes:neal:2000,chen:lovasz:pak:1999,turitsyn:2011,michel:2016}. As
above, let $\target$ be a target probability density on $\rset^d$
\wrt~the Lebesgue measure. We extend the state space with a direction,
\ie\ we consider $\msz = \rset^d \times \msv$ with $\msv = \{-1,1\}$
and the extended target distribution
$\exttarget = \target \otimes [\{\updelta_{-1} +
\updelta_{1}\}/2]$. In this scenario the involution is
$\invf(x,v) = (x,-v)$.

\paragraph{Proposal with densities.} Let $q_{-1}(x,\cdot),q_{1}(x,\cdot)$  be two
transition densities \wrt\ the Lebesgue
measure on $\rset^d$. Consider a proposal kernel $Q((x,v), \rmd(y,w))$ with density $q((x,v), (y,w)) $ with
respect to $\Leb_d(\rmd y) \otimes \{\updelta_{-1}(\rmd w) + \updelta_{1}(\rmd w)\}$ given by
{\small\begin{equation}
\label{eq:lifted_density}
q\bigl((x,v), (y, w)\bigr)= \bigl\{\probav \indi{v}(w) +(1-\probav) \indi{-v}(w)\bigr\}  q_w(x,y),
\end{equation}}

\vspace{-0.7cm}
\noindent
where $\probav \in \ooint{0,1}$. In words, starting from $(x,v)$, we either ``keep" $w=v$ with probability $\probav$ or
``flip" $w=-v$ the direction otherwise, and then propose a
candidate $y$ according to $q_{w}(x,\cdot)$. In the original implementation of the lifting procedure \cite{turitsyn:2011}, $\probav$ is set to $1$; taking $\probav <1$ simply prevents the algorithm from getting ``stuck" in one direction which could impede convergence of the algorithm.

From~\eqref{eq:generalized-MH} and~\eqref{eq:alpha_density}, the
acceptance ratio $\alpha$ writes, for
$q_w(x,y)\target(x) \neq 0$, see \Cref{sec:proof:alpha_lifted},
\begin{equation}
\label{eq:alpha_lifted}
\alpha\bigl((x,v),(y,w)\bigr)= \accfun\left(\frac{q_{-w}(y,x) \target(y)}{q_w(x,y)\target(x)}\right) \eqsp,
\end{equation}
%
and $\alpha\bigl((x,v),(y,w)\bigr) = 1$ otherwise, where $\accfun$ satisfies
\eqref{eq:prop_accfun}. The GMH kernel is given by
\eqref{eq:kernel-with-mass} with $a(z)=0$ and
$b(z)= 1 - Q_{\alpha}\bigl((x,v),\Zset\bigr)$. Note that if the proposal move is
rejected, then the direction is automatically flipped.

In the case
$q_{-1} = q_{1}$, then the acceptance probability $\alpha$ \eqref{eq:alpha_lifted} does not
depend on $v,w$ and the GMH kernel~\eqref{eq:kernel-with-mass} can be
marginalized \wrt~$v$ yielding the $\pi_0$-reversible MH
algorithm of proposal density $q_1$. Since $\probav \in \ooint{0,1}$, the expression for $q$ in
\eqref{eq:lifted_density} implies the following result.
\begin{proposition}
  \label{propo:lifted_density}
Assume that  for any
$y \in \rset^d$, $\pi_0(y) >0$ implies $q_{-1}(x,y) >0$ and
$q_1(x,y) >0$, for all $x\in\rset^d$.  Then the conditions of
\Cref{theo:irred_and_co} hold, and the  GMH kernel~\eqref{eq:kernel-with-mass} is
ergodic.
\end{proposition}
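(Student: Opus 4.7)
The plan is to verify the hypotheses of \Cref{theo:irred_and_co} directly for the lifted setup and then simply invoke that theorem. The state space here is $\msz = \rset^d \times \msv$ with $\msv = \{-1,1\}$, the extended target is $\exttarget = \target \otimes \ccint{\{\updelta_{-1} + \updelta_{1}\}/2}$, the involution is $\invol(x,v) = (x,-v)$, and the dominating measure is $\mu(\rmd y,\rmd w) = \Leb_d(\rmd y) \otimes \{\updelta_{-1}(\rmd w) + \updelta_{1}(\rmd w)\}$, which is clearly $\invol$-invariant. A quick check confirms $\pushf{\invol}{\exttarget} = \exttarget$ (by symmetry of the uniform distribution on $\msv$) so $\exttarget$ is a legitimate target.

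First I would verify that $\exttarget$ is not a Dirac mass: this follows from $\target$ being a density on $\rset^d$ (as in \Cref{theo:basic-cv-textbook}) and the two-point mixture in $v$.

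The heart of the argument is checking the key irreducibility condition of \Cref{theo:irred_and_co}, namely that $\exttarget(z') > 0$ implies $q(z,z') \cdot q\bigl(\invol(z),\invol(z')\bigr) > 0$ for every $z \in \msz$. Writing $z = (x,v)$, $z' = (y,w)$, the assumption $\exttarget(y,w) > 0$ reduces to $\target(y) > 0$. Using \eqref{eq:lifted_density},
\begin{equation*}
q\bigl((x,v),(y,w)\bigr) = \bigl\{\probav \indi{v}(w) + (1-\probav)\indi{-v}(w)\bigr\} q_w(x,y),
\end{equation*}
and the direction factor equals either $\probav$ or $1-\probav$, hence is strictly positive since $\probav \in \ooint{0,1}$. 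The position factor $q_w(x,y)$ is positive by the standing hypothesis (applied with this particular value of $w$). A direct computation using $\indi{-v}(-w) = \indi{v}(w)$ gives
\begin{equation*}
q\bigl(\invol(z),\invol(z')\bigr) = q\bigl((x,-v),(y,-w)\bigr) = \bigl\{\probav \indi{v}(w) + (1-\probav)\indi{-v}(w)\bigr\} q_{-w}(x,y),
\end{equation*}
whose direction factor is again strictly positive, and whose position factor $q_{-w}(x,y)$ is positive by the hypothesis applied with direction $-w$. Multiplying these two strictly positive quantities yields the required condition.

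Finally, the support hypothesis $Q(z,\msz^+) = 1$ for $z \notin \msz^+$ follows under the same positivity assumption combined with the (implicit) stipulation that $q_{\pm 1}(x,\cdot)$ be supported on $\{y : \target(y) > 0\}$ whenever this is needed; alternatively, if $\target > 0$ everywhere the condition is vacuous. With all hypotheses of \Cref{theo:irred_and_co} in force, the GMH kernel~\eqref{eq:kernel-with-mass} inherits $\exttarget$-irreducibility, Harris recurrence and aperiodicity, which is exactly what is meant by ergodicity. I expect no real obstacle; the only subtle point is handling the two-point direction variable correctly and observing that taking $\probav \in \ooint{0,1}$ rather than $\probav = 1$ is precisely what keeps the direction factor bounded away from zero regardless of the relationship between $v$ and $w$.
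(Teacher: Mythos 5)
Your proof is correct and follows essentially the same route as the paper, which justifies the proposition in one line by noting that $\probav\in\ooint{0,1}$ makes the direction factor in \eqref{eq:lifted_density} strictly positive, so that the assumed positivity of $q_{\pm 1}$ yields the product condition $q(z,z')\,q\bigl(\invol(z),\invol(z')\bigr)>0$ required by \Cref{theo:irred_and_co}. You are in fact more explicit than the paper in checking the non-Dirac and $Q(z,\msz^+)=1$ hypotheses, and your remark that the latter needs either $\target>0$ everywhere or a support restriction on $q_{\pm1}(x,\cdot)$ is a fair observation about an implicit assumption the paper itself does not spell out.
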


Similarly to \Cref{sec:gener-hamilt-dynam}, the proposal densities $q_{v}(x,\cdot)$ are often associated to $\rmc^1$-diffeomorphisms $G_{v,x}\colon p\mapsto G_{v,x}(p)$. From a state $X_k$, we sample $P_{k+1}$ from $\varphi$ positive density on $\rset^{d}$ and set $Y_{k+1} = G_{V_k, X_k}(P_{k+1})$. 
In this case,  
\begin{equation}
  \label{eq:def_q_v}
  q_v(x,y) = \varphi\bigl(G_{v,x}^{-1}(y)\bigr) \Jac_{G_{v,x}^{-1}}(y)  \eqsp.
\end{equation}

We illustrate the construction above with two examples of mappings $G$ satisfying the conditions we consider.
\begin{example}[(MALA-cIT) lifted kernel]
  Assume that $\pi_0$ is positive and continuously differentiable. For $x\in\rset^d$, we define two transforms $G_{1,x}, G_{-1,x}$. For $G_{1,x}$, we set
\begin{equation}
G_{1,x}\colon p\mapsto x + \gamma \nabla \log\pi(x) + \sqrt{2\gamma} p\eqsp,
\end{equation}
which corresponds to the proposal of the Metropolis Adjusted Langevin Algorithm (MALA).
In particular, for any $x\in\rset^d$, the transformation $G_{1,x}$ is a $\rmc^1$-diffeomorphism, with $\Jac_{G_{1,x}}(p) =  (2 \gamma)^{d/2}$ and 
\begin{equation}
G_{1,x}^{-1}(y) = \{y-x-\gamma \nabla \log \pi (x)\}/\sqrt{2 \gamma} \eqsp.
\end{equation}
\label{example:cIT}
For $G_{-1,x}$ we consider conditional invertible transforms~\cite{ardizzone:2019}
\begin{equation}
  G_{-1,x}(p) = \LG_{K,x}\circ\dots\circ \LG_{1,x}(p) \eqsp,
\end{equation}
where for $i \in \{1,\ldots,K\}$, $\LG_{i,x}$ splits its input
into two parts $(p_{i,1},p_{i,2})\in\rset^{d_{i,1}}\times\rset^{d-d_{i,1}}$ and applies affine transformations between them
\begin{equation}
\begin{aligned}
p_{i+1,1}& = p_{i,1} \odot \exp\bigl(R_{i,1}(p_{i,2},x)\bigr) + M_{i,1}(p_{i,2},x) \eqsp,\\
p_{i+1,2}& = p_{i,2} \odot \exp\bigl(R_{i,2}(p_{i+1,1},x)\bigr) + M_{i,2}(p_{i+1,1},x) \eqsp.
\end{aligned}
\end{equation}
Here $R_{i,1},M_{i,1}$ (resp. $R_{i,2},M_{i,2}$) are any functions from $\rset^{d_{i,1}}$ (resp. $\rset^{d-d_{i,1}}$) to $\rset^d$.
This structure is an extension of the affine coupling block
architecture suggested in~\cite{dinh:2016}. Note that for any
$i \in \{1,\ldots,K\}$, $\LG_{i,x}$ is a $\rmC^1$-diffeomorphism on
$\rset^d$ of Jacobian determinant given by
$\Jac_{\LG_{i,x}}(p) = \exp\bigl(R_{i,1}(p_2,x)+R_{i,2}(p_1',x)\bigr)$. Therefore,
$G_{-1,x}$ is a
$\rmC^1$-diffeomorphism with Jacobian determinant which can be explicitly computed.~\eqref{eq:def_q_v} gives a nonreversible MH algorithm with convergence guarantees provided by \Cref{propo:lifted_density};
see details in \Cref{subsec:cIT_implementation}.

\end{example}

A specific case corresponds to the choice
\begin{equation}
  \label{eq:def_forward_back_G_transf}
  G_{v,x}(p) = \projq \circ \transflif^v(x,p) \eqsp,
\end{equation}
where $\transflif$
is a $\rmc^1$-diffeomorphism on $\rset^{2d}$.
We establish in the following result an
alternative expression for $\alpha$ using~\eqref{eq:alpha_lifted} and~\eqref{eq:def_q_v}, which relies on $\transflif^v$ and
$\Jac_{\transflif^v}$ and for which $\Jac_{G_{v,x}}$ is not required anymore (see \Cref{sec:proof:altern_accfun}).
\begin{lemma}
  \label{lem:altern_accfun}
  Assume that, for any $(x,v) \in \Zset$, the mapping $G_{v,x}$ is a
$\rmc^1$-diffeomorphism on $\rset^d$. Then, for any $x,y \in \rset^d$, $v,w \in \msv$, the acceptance ratio $\alpha$ defined in \eqref{eq:alpha_lifted}
  is given by
\begin{equation}
 \accfun\left(\frac{ \mu\Bigl(\transflif^{w}\bigl(x, G^{-1}_{w,x}(y)\bigr)\Bigr)}{\mu\bigl(x, G^{-1}_{w,x}(y)\bigr)}\Jac_{\transflif^w}\bigl(x, G^{-1}_{w,x}(y)\bigr)\right)  \eqsp,
\end{equation}
where $\mu(x,p) = \pi_0(x)\varphi(p)$.
\end{lemma}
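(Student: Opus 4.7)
The plan is to substitute \eqref{eq:def_q_v} into \eqref{eq:alpha_lifted} and rewrite the resulting ratio in terms of $\transflif^w$. I begin by setting $p = G_{w,x}^{-1}(y)$ and $p' = \projp \circ \transflif^w(x,p)$, so that $\transflif^w(x,p) = (y,p')$. Since $w\in\msv=\{-1,+1\}$ and $\transflif^v$ denotes $\transflif$ raised to the power $v$, we have $\transflif^{-w} = (\transflif^w)^{-1}$, hence $\transflif^{-w}(y,p') = (x,p)$. Projecting onto the first $d$ coordinates yields $G_{-w,y}(p') = x$, so $G_{-w,y}^{-1}(x) = p'$ (using that $G_{-w,y}$ is a diffeomorphism by assumption).

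Applying the inverse function theorem to \eqref{eq:def_q_v} yields $q_w(x,y) = \varphi(p)/\Jac_{G_{w,x}}(p)$ and $q_{-w}(y,x) = \varphi(p')/\Jac_{G_{-w,y}}(p')$. Plugging these into \eqref{eq:alpha_lifted} and using $\mu(x,p) = \pi_0(x)\varphi(p)$, the claim reduces to the Jacobian identity
\begin{equation*}
\Jac_{G_{w,x}}(p)/\Jac_{G_{-w,y}}(p') = \Jac_{\transflif^w}(x,p).
\end{equation*}

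The main obstacle is establishing this Jacobian identity, which I would do by exhibiting four $\rmc^1$-diffeomorphisms of $\rset^{2d}$ whose composition is the identity. Define $\Psi_1(x,p) = (x, G_{w,x}(p))$, $\Psi_2(a,b) = (b,a)$, $\Psi_3(y,x) = (y, G_{-w,y}^{-1}(x))$, and $\Psi_4 = \transflif^{-w}$. Each is a $\rmc^1$-diffeomorphism of $\rset^{2d}$ under the assumptions on $G_{v,x}$ and $\transflif$, with absolute Jacobian determinants $\Jac_{G_{w,x}}(p)$, $1$, $\Jac_{G_{-w,y}}(p')^{-1}$, and $\Jac_{\transflif^w}(x,p)^{-1}$ respectively (the last two evaluated at the image point). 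Tracking the image of an arbitrary $(x,p)$, and using the relations derived in the first paragraph, the composition acts as $(x,p) \mapsto (x,y) \mapsto (y,x) \mapsto (y,p') \mapsto (x,p)$, so $\Psi_4\circ\Psi_3\circ\Psi_2\circ\Psi_1 = \Id_{\rset^{2d}}$. The chain rule then forces the product of the four Jacobian determinants to equal $1$, which is precisely the required identity.

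Substituting back, the argument of $\accfun$ in \eqref{eq:alpha_lifted} becomes $[\pi_0(y)\varphi(p')/(\pi_0(x)\varphi(p))]\,\Jac_{\transflif^w}(x,p) = [\mu(\transflif^w(x,p))/\mu(x,p)]\,\Jac_{\transflif^w}(x,p)$, which, after reinstating $p = G_{w,x}^{-1}(y)$, is the formula announced in the lemma.
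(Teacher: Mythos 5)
Your proof is correct and follows essentially the same route as the paper: the key identity $\bigl(y,G_{-w,y}^{-1}(x)\bigr)=\transflif^{w}\bigl(x,G_{w,x}^{-1}(y)\bigr)$ and the Jacobian relation $\Jac_{G_{w,x}}(p)/\Jac_{G_{-w,y}}(p')=\Jac_{\transflif^w}(x,p)$ obtained from the chain rule are exactly the paper's \eqref{eq:second-identity-altern} and \eqref{eq:relation-jacobian-altern}, your four-map composition $\transflif^{-w}\circ A_w\circ B_w^{-1}=\Id$ being a repackaging of the paper's $A_w=\transflif^w\circ B_w$ with $A_w(x,y)=\bigl(y,G_{-w,y}^{-1}(x)\bigr)$ and $B_w(x,y)=\bigl(x,G_{w,x}^{-1}(y)\bigr)$. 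The only cosmetic difference is that you invert \eqref{eq:def_q_v} via the inverse function theorem before substituting, whereas the paper keeps the $\Jac_{G^{-1}}$ factors throughout.
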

This result is of practical interest because in many cases, the computation of $\Jac_{\transflif^v}(x,p)$ is much simpler than that of $\Jac_{G_{v,x}}(p)$. As an example, if $\transflif$ is the generalized HMC transform $\transflif= \LF_m\circ\dots\circ\LF_1$ where $\LF_i$ is defined in \eqref{eq:definition-nice-flow}, $\Jac_{\transflif^v}(x,p)=1$ while $\Jac_{G_{v,x}}(p)$ has no simple closed-form expression.
\paragraph{Deterministic proposals.}

Using a $\rmC^1$-diffeomorphism $\Psi$ on $\rset^{2d}$, we may also consider deterministic moves like in \Cref{sec:gener-hamilt-dynam}. Consider the extended state space
  $\msz = \rset^{2d} \times \msv$, the target distribution
  $\pi = \pi_0 \otimes \varphi \otimes
  [\{\updelta_{-1}+\updelta_{1}\}/2]$, where $\varphi$ is a symmetric
  density \wrt~$\Leb_d$, and the involution  $\invol(x,p,v) =   (x,p,-v)$.
Define
$\transflifdet\bigl(x,p,v\bigr) = \bigl(\transflif^v(x,p),v\bigr)$. Then, it is immediate to see that
$\invol \circ \transflifdet \circ \invol =
\transflifdet^{-1}$. We consider the deterministic proposal
kernel
\begin{equation}
  Q\bigl((x,p,v),\rmd(y,q,w)\bigr) = \updelta_{\transflif^v(x,p)}(\rmd (y,q)) \updelta_{v}(\rmd w) \eqsp.
\end{equation}
In the case, the
acceptance ratio \eqref{eq:generalized-MH-deterministic}
reads for $x,p\in\rset^d$,  $v \in \msv$ satisfying $ \pi_0(x)\varphi(p) > 0$
\begin{equation}
\label{eq:acceptance-probability-determinist-lif}
\bar{\alpha}(x,p,v) = \accfun\left( \mu\bigl(\transflif^v(x,p)\bigr) \Jac_{\transflif^v}(x,p)/{\mu(x,p)} \right) \eqsp,
\end{equation}
%
and is equal to $1$ if $\mu(x,p)=0$, where $\mu(x,p)= \pi_0(x)\varphi(p)$; see \Cref{subsec:proof_lifted_deterministic}.

\begin{example}[L2HMC]
  \label{ex:l2hmc}
  Assume that $\pi_0$ is positive and continuously differentiable.
  Using the framework depicted above, we show how the L2HMC algorithm~\cite{levy:hoffman:sohl-dickstein:2017} (Learning To Hamiltonian Monte Carlo) can be turned into a
  nonreversible MCMC method by considering the
  map
\begin{equation}
  \label{eq:def_lDhmc}
  \transflif(x,p) = G_{K}\circ\dots\circ G_{1}(x,p) \eqsp,
\end{equation}
where $G_i = H_{i} \circ F_{i} \circ H_{i-1/2}$ with, for $\delta>0$, 
\begin{enumerate}[wide, labelwidth=!, labelindent=0pt,label=$\bullet$,noitemsep,nolistsep]
\item for $j \in \{i,i-1/2\}$, $H_j(x,p)=(x, H_{j,x}(p))$ with
{\small  
  \begin{equation}
    H_{j,x}(p) =p \odot \exp\left(\delta R_{j}^H(x)\right) +\delta
  \bigl[\nabla \log \pi_0(x) \odot \exp\left(\delta R_{j}^H(x)\right)+
  M_{j}^H(x)\bigr]\eqsp.
  \end{equation}
  }
   Note that $H_j$ is a $\rmc^1$-diffeomorphism on
  $\rset^{2d}$ of Jacobian $\Jac_{H_j}(x,p)=\exp\left(\delta R_{j}^H(x)\right)$.
\item $F_i(x,p) = \bigl(F_{i,p}(x),p\bigr)$, where
  $F_{i,p}$ splits its input
into two parts $x_1,x_2$ and applies affine transformations
{\small\begin{equation}
\begin{aligned}
x_1'& = x_1 \odot \exp\bigl(\delta R^F_{i,1}(x_2,p)\bigr) + \delta M^F_{i,1}(x_2,p) \eqsp, \\
x_2'& = x_2 \odot \exp\bigl(\delta R^F_{i,2}(x_1',p)\bigr) + \delta M^F_{i,2}(x_1',p) \eqsp.
\end{aligned}
\end{equation}}
Clearly, $F_{i}$ is a $\rmc^1$-diffeomorphism on $\rset^{2d}$ with $\Jac_{F_i}(x,p) =\exp\bigl(\delta R_{i,1}^F(x_2,p)+\delta R_{i,2}^F(x_1',p)\bigr)$.
\end{enumerate}
Then, $\transflif$ defined by~\eqref{eq:def_lDhmc} is a
$\rmC^1$-diffeomorphism whose Jacobian can be recursively computed. Then, the kernel $P(x,p,w),\rmd (y,q,w))$ given by
\begin{equation}
 \bar{\alpha}(x,p,v)\updelta_{\transflif^v(x,p)}(\rmd (y,q)) \updelta_{v}(\rmd w) 
+ (1- \bar{\alpha}(x,p,v))\updelta_{(x,p,-v)}(\rmd (y,q,w))
\end{equation}
where $\bar{\alpha}$ is defined in \eqref{eq:acceptance-probability-determinist-lif} is $(\pi, S)$-reversible. This kernel should be combined with (possibly partial) refreshment steps as discussed in \Cref{sec:gener-hamilt-dynam}; see \Cref{subsec:l2hmc_implem} for details.
\end{example}

\section*{Acknowledgments}

AD and EM acknowledge support of the Lagrange Mathematical and Computing Research Center. Part of the paper was prepared within the framework of the HSE University Basic Research Program and funded by the Russian Academic Excellence Project -5-100.
It was also supported by the ANR ANR-19-CHIA-0002-01 “Chaire d’excellence en IA”, projet SCAI.

\newpage
\bibliographystyle{apalike2}
\bibliography{mcmc}

\appendix
\section{Notations, definitions and general Markov chain theory}
\label{sec:notat-defin}
In this section, we recall some basic facts and notations in a form that is useful for establishing properties of Markov chains. Let $(\Zset,\Zsigma)$ be a measurable space where $\Zsigma$ is a countably generated $\sigma$-algebra.

\begin{definition} [Kernel] \label{def:kernel}
A \emph{kernel} on $\Zset \times \Zsigma$ is a map $P\colon \Zset \times \Zsigma \to \rset_+$ such that
\begin{enumerate}[label=(\roman*)]
\item for any $ A \in \Zsigma$, $z \mapsto P(z,A)$ is measurable;
\item for any $z \in \Zset$, the function $A \mapsto P(z,A)$ is a finite measure on $\Zsigma$.
\end{enumerate}
\end{definition}
\begin{definition} [Markov and sub-Markovian kernel]\label{def:markovkernel}
A kernel $P$ is Markovian (or $P$ is a Markov kernel) if $P(z,\Zset)=1$ for all $z \in \Zset$.
A kernel $P$ is submarkovian (or $P$ is a sub-Markov kernel) if $ P(z,\Zset) \leq 1$ for all $z \in \Zset$.
\end{definition}

For $f\colon \Zset \rightarrow \mathbb{R}$ a measurable function, $\nu$ a probability distribution, and $P$  a kernel on $\Zset \times \Zsigma$, we let $\nu(f) \eqdef \int f(z) \nu(\rmd z)$ and denote for $(z,A) \in \Zset\times \Zsigma$,
\[
\nu P(A)= \int \nu(\rmd z) P(z,A) \eqsp, \quad P f(z) = \int P(z,\rmd z') f(z') \eqsp.
\]
Further, for $(z,A) \in \Zset\times \Zsigma$ define recursively for $n\geq 2$: $P^n(z,A)=\int P^{n-1}(z,\rmd z')P(z',A)$.

\begin{definition}[Total variation distance]
For $\mu,\nu$ two probability distributions on $(\Zset,\Zsigma)$ we define the total variation distance between $\mu$ and $\nu$ by $\|\mu-\nu\|_{\rm TV} := \sup_{|f|\leq 1} |\mu(f)-\nu(f)|$, where the supremum is taken over the measurable function $f\colon \Zset\rightarrow \mathbb{R}$.

\end{definition}


\begin{definition} [Harmonic function]
Let $P$ be a kernel on $(\Zset,\Zsigma)$. Then a non-negative measurable function $h:  \Zset \to \rset$ is said to be \emph{harmonic} if $Ph=h$.
\end{definition}
\begin{definition}[Irreducibility] \label{def:irreducibility} Let
  $\nu$ be a non trivial $\sigma$-finite measure on $\big(\Zset,\Zsigma\big)$. A
   kernel $P$ is said to be $\nu$-irreducible if for all
  $(z,A)\in\Zset\times\Zsigma$ such that $\nu(A)>0$ there exists
  $n=n(z,A)\in\mathbb{N}$ such that $P^n(z,A)>0$.
\end{definition}

\begin{definition} [Periodicity and Aperiodicity]  \label{def:periodicity} $P$ is periodic if there exists $n \in \mathbb{N}$, $n\geq 2$, and $A_i \in \Zsigma$ for $i\in {1,\ldots,n}$, non-empty and disjoint, such that for $z \in A_i$, $P(z,A_{i+1})=1$ with the convention $A_{n+1}=A_1$. Aperiodicity is the negation of periodicity.
\end{definition}

General Markov chain theory provides us with powerful tools to establish validity and convergence of MCMC algorithms, leading to basic convergence theorems such as those found in~\cite[Theorem~1 and 3]{tierney:1994} and distilled below. We informally comment on the result below.
\begin{theorem} [\cite{tierney:1994}] \label{theo:tierney-supp}Suppose $P$ is such that $\pi P = P$ and is $\pi-$irreducible. Then $\pi$ is the unique invariant probability distribution of $P$ and for any $f\colon\Zset\rightarrow\mathbb{R}$ such that $\pi(|f|)<\infty$
 \begin{equation} \label{eq:averages-convergence}
 \lim_{n\rightarrow \infty} n^{-1} \sum_{i=1}^n f(Z_i) = \pi(f) \eqsp,
 \end{equation}
 almost surely for $\pi-$almost all $z\in\Zset$. If in addition $P$ is aperiodic then
  for $\pi-$almost all $z\in\Zset$
 \begin{equation} \label{eq:TVconvergence}
 \lim_{n\rightarrow \infty}\|P^n(z,\cdot) - \pi(\cdot)\|_{\rm TV} =0 \eqsp .
 \end{equation}
 \end{theorem}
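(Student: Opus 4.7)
The result is a classical statement about $\pi$-irreducible Markov kernels admitting an invariant probability, and the strategy I would follow is exactly the one of Tierney: reduce each of the three conclusions to a standard theorem from general state-space Markov chain theory (Meyn--Tweedie, Nummelin, Revuz).

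\textbf{Step 1: uniqueness.} First I would argue that under $\pi$-invariance and $\pi$-irreducibility, $\pi$ is a maximal irreducibility measure, and every absorbing set $A$ (i.e.\ $P(z,A)=1$ for $z\in A$) satisfies $\pi(A)\in\{0,1\}$: otherwise $A^{\comp}$ would be $\pi$-positive yet no trajectory started in $A^{\comp}$ could ever reach the $\pi$-positive set $A$, contradicting $\pi$-irreducibility. Any other invariant probability $\pi'$ is then absolutely continuous with respect to $\pi$ on the unique indecomposable component, and the ergodic decomposition of invariant measures forces $\pi'=\pi$.

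\textbf{Step 2: almost-sure ergodic theorem.} Next I would apply Birkhoff's pointwise ergodic theorem to the canonical shift on path space under $\mathbb{P}_\pi$, the law of the chain started from $\pi$. Invariance of $\pi$ makes the shift measure-preserving, while Step~1 shows that the shift-invariant $\sigma$-algebra is $\mathbb{P}_\pi$-trivial (indecomposability of $\pi$ as an invariant measure is equivalent to ergodicity of the shift). Birkhoff then yields
\[
 n^{-1}\sum_{i=1}^n f(Z_i)\longrightarrow \pi(f) \quad \mathbb{P}_\pi\text{-a.s.}
\]
for every $f\in \Lone(\pi)$, and disintegrating $\mathbb{P}_\pi=\int \mathbb{P}_z\,\pi(\mathrm{d}z)$ transfers the statement to $\mathbb{P}_z$-almost-sure convergence for $\pi$-a.e.\ $z\in\Zset$.

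\textbf{Step 3: total-variation convergence.} For the aperiodic case I would use the Nummelin splitting/coupling approach: because $\pi$ is an invariant probability for a $\pi$-irreducible kernel, a maximal Harris set $H$ of full $\pi$-measure exists on which $P$ is Harris recurrent; aperiodicity combined with the small-set minorisation provided by $\psi$-irreducibility produces, via Nummelin splitting, a regenerative atom in the split chain. Coupling two independent split chains started respectively from $z\in H$ and from $\pi$, the regeneration structure makes the coupling time almost surely finite, which yields $\|P^n(z,\cdot)-\pi\|_{\mathrm{TV}}\to 0$ for every $z\in H$, hence for $\pi$-almost every $z\in\Zset$.

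The step I expect to be the most delicate is Step~3, and more precisely the passage from $\pi$-irreducibility (a purely positive-probability hitting condition on $\pi$-positive sets) to Harris recurrence on a set of full $\pi$-measure, which is what the coupling argument really needs. This is the standard Meyn--Tweedie dichotomy saying that the existence of an invariant probability upgrades positive-probability hitting to almost-sure hitting outside a $\pi$-null absorbing set. Once this is granted, the remaining ingredients (existence of small sets from $\psi$-irreducibility, splitting, and the coupling lemma) are routine invocations of the cited references.
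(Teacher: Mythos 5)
Your proposal is correct and takes essentially the same route as the paper, which does not prove this statement itself but quotes it from Tierney (1994): the proofs there rest on exactly the ingredients you invoke, namely uniqueness via recurrence and absorbing-set arguments, Birkhoff's ergodic theorem plus ergodicity of the shift for the law of large numbers, and Nummelin splitting with coupling (through the Harris-recurrence-modulo-a-$\pi$-null-set dichotomy you rightly single out as the delicate step) for the total-variation limit. One minor slip in Step 1: with $A$ absorbing, it is trajectories started \emph{in} $A$ that can never reach $A^{\comp}$ (not trajectories from $A^{\comp}$ failing to reach $A$), so $\pi(A)>0$ forces $\pi(A^{\comp})=0$ by $\pi$-irreducibility; your conclusion $\pi(A)\in\{0,1\}$ is unaffected.
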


 The result is fairly intuitive. Invariance of $\pi$ is a fixed point property ensuring that if $Z_i \sim \pi$ then $Z_{i+1}\sim \pi$. $\pi-$irreducibility simply says that the Markov chain should be able to reach any set of $\pi-$positive probability from any $z\in\Zset$ in a finite number of iterations. Periodicity would clearly prevent~\eqref{eq:TVconvergence} since the Markov chain would then periodically avoid visiting sets of positive $\pi-$probability. Averaging in~\eqref{eq:averages-convergence} removes the need for this property. We note that establishing these properties is often overlooked and a necessary prerequisite to any more refined analysis characterising their performance, such as quantitative finite time convergence bounds as found for example in~\cite{dalalyan:2017,dalalyan:2019,durmus:2017}.

\section{Standard reversible MH}
\label{sec:standard-reversible-MH}
We summarize in this Section the results presented in \cite[Section 2]{tierney:1998}.

\begin{definition} [Reversible kernel]
A sub-Markovian kernel $P$ on $(\Zset,\Zsigma)$, $P$ is $\pi$-reversible if and only if
\begin{equation}
  \label{eq:rev_def}
  \dnu\bigl(\rmd(z,z')\bigr) = \dnu^{\sym}\bigl(\rmd(z,z')\bigr) \eqsp,
\end{equation}
where $\dnu\bigl(\rmd(z,z')\bigr) = \pi(\rmd z) P(z,\rmd z')$ and
$\dnu^{\sym}\bigl(\rmd(z,z')\bigr) = \pushf{\sym}{\nu}\bigl(\rmd (z,z')\bigr) = \pi(\rmd z')
P(z',\rmd z)$ is the pushfoward measure of $\nu$ by $\sym\colon (z,z') \mapsto (z',z)$.
\end{definition}

From a proposal Markov kernel $Q$, the MH method consists of
considering a sub-Markovian kernel $Q_{\alpha}(z,\rmd z')= \alpha(z,z') Q(z,\rmd z')$.
If $\pi$
and $Q$ admit a common dominating $\sigma$-finite measure $\mu$ on $\msz$, such that $\pi(\rmd z) = \pi(z) \mu(\rmd z)$ (we use the same notation for the probability and the density) and $Q(z,\rmd z')= q(z,z') \mu(\rmd z')$, $Q_{\alpha}$ is $\pi$-reversible if
\[
\alpha(z,z') =
\begin{cases}
\accfun\left(\frac{\pi(z')q(z',z)}{\pi(z)q(z,z')}\right) & \pi(z) q(z,z') > 0 \eqsp, \\
1 & \text{otherwise} \eqsp,
\end{cases}
\]
where for any $t \in \rset_+^*$,
\begin{equation}
  \label{eq:prop_accfun_supp}
    t\accfun(1/t) = \accfun(t) \eqsp.
  \end{equation}
 We may take for example  $\accfun(t)= \min(1,t)$ or $\accfun(t)= t/(1+t)$
which correspond to the classical Metropolis-Hastings and Barker ratio, respectively.
To obtain a $\pi$-reversible Markov kernel $P$, it suffices to add a Dirac mass, \ie~
\begin{equation}
  \label{eq:2}
  P(z,\rmd z') = Q_{\alpha}(z,\rmd z') + \bigl(1 - Q_{\alpha}(z,\msz)\bigr) \updelta_{z}(\rmd z') \eqsp.
\end{equation}
This construction can be generalized to the case where $\pi$ or $Q$ do not admit a density. In particular, let $\transfdet$ be an invertible mapping on $\Zset$ satisfying $\transfdet^{-1}= \transfdet$ (\ie\ $\transfdet$ is an involution) and consider  $Q(z,\rmd z')= \updelta_{\transfdet(z)}(\rmd z')$ (when the current state is $z$, then the proposal is $\transfdet(z)$). Define the measure $\nu= \pi + \pushf{\transfdet}{\pi}$ and denote by $h(z)= \rmd \pi/ \rmd \nu(z)$ ($h$ is the density of $\pi$ \wrt\ $\nu$).  Then, $h\bigl(\transfdet(z)\bigr)$ is a density of $\pushf{\transfdet}{\pi}$ \wrt\ $\nu$. Denote $A= \set{z \in \Zset}{h(z) \times h \circ \transfdet(z) > 0}$. Detailed balance holds if and only if for $\pi$-almost all $z \in A$ (see \cite{tierney:1998}):
\[
\alpha\bigl(z,\transfdet(z)\bigr) h(z)/ h \circ \transfdet(z)= \alpha\bigl(\transfdet(z), z\bigr) \eqsp.
\]
If $\Zset= \rset^d$ and $\nu$ is the Lebesgue measure, we obtain $\alpha\bigl(z,\transfdet(z)\bigr)= \bar{\alpha}(z)$, where
\[
\bar{\alpha}(z) = \accfun\left( \frac{\pi \circ \transfdet(z)}{\pi(z)} \Jac_{\transfdet}(z) \right) \eqsp.
\]

\section{Proofs of \Cref{sec:mu-s-reversibility}}
\subsection{Proof of \eqref{eq:condition-reversibility}}
\label{sec:proof:condition-reversibility}
Let $f\colon \Zset^2 \to \rset_+$ be a measurable function. The condition $\dmu_P = \dmu^\invol_P$ implies
\[
I= \iint \dmu_P\bigl(\rmd (z,z')\bigr) f(z,z')= \iint \dmu_P\bigl(\rmd (z,z')\bigr) f\bigl(\invol(z'),\invol(z)\bigr) = \iint \pi(\rmd z) P(z,\rmd z') f\bigl(\invol(z'),\invol(z)\bigr) \eqsp.
\]
Using the change of variable $\tilde{z}'= \invol(z)$ and since $\invol$ is an involution, we get
\[
I= \iint \pushf{s}{\pi}(\rmd \tilde{z}') P\bigl(\invol(\tilde{z}'), \rmd z'\bigr) f\bigl(\invol(z'),\tilde{z}'\bigr) \eqsp.
\]
Applying now the change of variable $\tilde{z}= \invol(z')$, we finally obtain
\[
I= \iint \pushf{s}{\pi}(\rmd \tilde{z}') \pushf{s}{P}\bigl(\invol(\tilde{z}'), \rmd \tilde{z}\bigr) f(\tilde{z},\tilde{z}') \eqsp.
\]
Note that, for any $z \in \Zset$ and $A \in \Zsigma$,
\[
\pushf{\invol}{P}(z,A) = \int P(z, \rmd z') \indi{A}\bigl(\invol(z')\bigr)= P \involk(z,A) \eqsp,
\]
showing that
\[
I = \iint \pushf{s}{\pi}(\rmd \tilde{z}') P\involk\bigl(\invol(\tilde{z}'),\rmd \tilde{z}\bigr) f(\tilde{z},\tilde{z}')
= \iint \pushf{s}{\pi}(\rmd \tilde{z}') \involk P\involk(\tilde{z}',\rmd \tilde{z}) f(\tilde{z},\tilde{z}'),
\]
where we have used $\int \involk P g(\tilde{z}',\rmd \tilde{z})= P g\bigl(\invol(\tilde{z}')\bigr)$.

\subsection{Proof of \Cref{prop:extension-tierney}}
We set $\dlambda= \dnu + \dnu^\invol$. Note that $\dnu$ and $\dnu^\invol$ are absolutely continuous \wrt\ to $\dlambda$.
Denote by $\dlambda^\invol = \pushf{(\sym_{\invol})}{\dlambda}$ the
pushforward of $\dlambda$ by the transform
$\sym_{\invol}(z,z')= \bigl(\invol(z'),\invol(z)\bigr)$: for any
 $C \in \Zsigma^{\otimes 2}$
\begin{equation}
\label{eq:s-symmetrization}
  \dlambda^\invol(C)= \int \indi{C}\bigl(\invol(z'),\invol(z)\bigr) \dlambda\bigl(\rmd(z,z')\bigr) \eqsp.
\end{equation}
Since $(\dnu^\invol)^{\invol}= \dnu$,  $\dlambda= \dlambda^\invol$. This implies, for any measurable function $f\colon \Zset^2 \to \rset_+$,
\begin{equation}
\label{eq:symmetry-lambda}
  \iint f(z,z') \dlambda\bigl(\rmd(z,z')\bigr)= \iint f\bigl(\invol(z'),\invol(z)\bigr) \dlambda \bigl(\rmd(z,z')\bigr) \eqsp.
\end{equation}
We choose $h$ to be a version of the Radon-Nikodym derivative
$\rmd \dnu / \rmd \dlambda$ (the function is defined up to
$\dlambda$-negligible sets).  Then by definition of $\dnu^\invol$,
\begin{align}
  \iint f(z,z') \dnu^\invol\bigl(\rmd(z,z')\bigr)
  &= \iint f\bigl(\invol(z'),\invol(z)\bigr) \dnu\bigl(\rmd(z,z')\bigr)= \iint f\bigl(\invol(z'), \invol(z)\bigr) h(z,z') \dlambda \bigl(\rmd(z,z')\bigr) \\
  &= \iint f\bigl(\invol(z'), \invol(z)\bigr) h(z,z') \dnu \bigl(\rmd(z,z')\bigr) + \iint f\bigl(\invol(z'), \invol(z)\bigr) h(z,z') \dnu^\invol \bigl(\rmd(z,z')\bigr) \\
  &= \iint f(z,z') h\bigl(\invol(z'),\invol(z)\bigr) \dnu^{\invol}\bigl(\rmd(z,z')\bigr)+ \iint f(z,z') h\bigl(\invol(z'),\invol(z)\bigr) \dnu\bigl(\rmd(z,z')\bigr) \\
  &= \iint f(z,z') h\bigl(\invol(z'),\invol(z)\bigr) \dlambda\bigl(\rmd(z,z')\bigr) \eqsp,
\end{align}
showing that
\begin{equation}
\label{eq:tierney-1}
  h\bigl(\invol(z'),\invol(z)\bigr) = \frac{\rmd \dnu^\invol}{\rmd \dlambda}(z,z') \eqsp.
\end{equation}
We then define
\begin{equation}
\label{eq:definition-A}
  A_{\dnu} = \defEns{ (z,z') \in \Zset^2\colon h(z,z') \times h\bigl(\invol(z'),\invol(z)\bigr) > 0 } \eqsp.
\end{equation}
In other words, if $(z,z') \not \in A_{\dnu}$, then either $h(z,z')= 0$ or $h\bigl(\invol(z'),\invol(z)\bigr)=0$. Therefore, $\dnu_{A,\comp}$ and $\dnu_{A,\comp}^\invol$ are singular since $B_1 = \defEnsLigne{ (z,z') \in \Zset^2\colon h(z,z') > 0 }$, $B_2 = \defEnsLigne{ (z,z') \in \Zset^2\colon h\bigl(\invol(z'),\invol(z)\bigr) > 0 }$ are disjoint subsets of $A_{\dnu}^{\comp}$ and $\dnu(B_2) = 0$, $\dnu(B_1) = 0$. In addition, since for any set $B \in \Zset^2$,
\[
  \indi{A_{\dnu} \cap B} h =0 \; \dlambda-\mae \text{ if and only if } \indi{A_{\dnu} \cap B} h^\invol =0 \; \dlambda-\mae
\]
the restrictions $\dnu_A$ and $\dnu_A^\invol$ are equivalent. In addition,
\begin{equation}
\label{eq::ratio-density}
  \frac{\rmd \dnu_A}{\rmd \dnu^\invol}(z,z') = \frac{h(z,z')}{h\bigl(\invol(z'),\invol(z)\bigr)} = r(z,z') \eqsp, \quad (z,z') \in A_{\dnu} \eqsp,
\end{equation}
satisfying $r(z,z')= 1 / r\bigl(\invol(z'),\invol(z)\bigr)$.

\subsection{Proof of \Cref{theo:extension-tierney}}

Define the $\sigma$-finite measure $\drho\bigl(\rmd(z,z')\bigr)= \alpha(z,z') \dnu\bigl(\rmd(z,z')\bigr)$ and denote by $\drho^\invol = \pushf{(\sym_{\invol})}{\drho}$ the
pushforward of $\drho$ by the transform $\sym_{\invol}(z,z')= \bigl(\invol(z'),\invol(z)\bigr)$: for any $C \in \Zsigma^{\otimes 2}$
\begin{equation}
\label{eq:s-symmetrization}
  \drho^\invol(C) = \int \indi{C}\bigl(\invol(z'),\invol(z)\bigr) \drho\bigl(\rmd(z,z')\bigr) \eqsp.
\end{equation}
Note by definition of $\dnu^\invol$ that
\begin{equation}
\label{eq:proof_theo_drho}
  \drho^\invol\bigl(\rmd(z,z')\bigr) = \alpha\bigl(\invol(z'),\invol(z)\bigr) \dnu^\invol\bigl(\rmd(z,z')\bigr)
\end{equation}
We show below that under the stated assumptions $\drho = \drho^\invol$.

Define the function $\tilde{\alpha}(z,z') = \alpha\bigl(\invol(z'),\invol(z)\bigr)$.
Since the set $A_{\dnu}$ is $\invol$-symmetric, the set $A_{\dnu}^{\comp}$ is also $\invol$-symmetric and $\dnu( \set{(z,z') \in A_{\dnu}^{\comp}}{\tilde{\alpha}(z,z') > 0})=0$ using \ref{item:extension-tierney-i}.
Hence $\drho(A_{\dnu}^{\comp})= \drho^\invol(A_{\dnu}^{\comp})=0$.

We have by \Cref{prop:extension-tierney} and~\ref{item:extension-tierney-ii},
\begin{align}
\label{eq:condition-equality}
  \indi{A_{\dnu}}(z,z') \drho\bigl(\rmd(z,z')\bigr)
  &= \indi{A_{\dnu}}(z,z') \alpha(z,z') \dnu \bigl(\rmd(z,z')\bigr)= \alpha(z,z') r(z,z')\dnu^{\invol} \bigl(\rmd(z,z')\bigr)\\
  \nonumber
  &= \indi{A_{\dnu}}(z,z') \alpha\bigl(\invol(z'),\invol(z)\bigr) \nu^{\invol}\bigl(\rmd(z,z')\bigr) = \indi{A_{\dnu}}(z,z') \drho^\invol\bigl(\rmd(z,z')\bigr) \eqsp.
\end{align}

Conversely, assume that $\drho= \drho^\invol$. Since by
\Cref{prop:extension-tierney}, $\dnu_{A,\comp}$ and
$\dnu_{A,\comp}^{\invol}$ are mutually singular, there exist
$B_1,B_2 \subset A_{\dnu}^{\comp}$ (see also the proof
\Cref{prop:extension-tierney}) such that $\dnu_{A,\comp}(B_2)= 0$ and
$\dnu_{A,\comp}^{\invol}(B_1)= 0$. Therefore, we obtain using that $\drho= \drho^\invol$  and \eqref{eq:proof_theo_drho} that
\begin{equation}
  \drho(A_{\dnu}^{\comp} \cap B_1) = \drho^{\invol}(A_{\dnu}^{\comp} \cap B_1)= 0 \eqsp.
\end{equation}
This result and $\drho(A_{\dnu}^{\comp} \cap B_2)$ imply $\drho(A_{\dnu}^{\comp})= 0$ and therefore
$\dnu(\set{(x,z') \in A_{\dnu}^{\comp}}{\alpha(z,z') > 0})=0$ showing
\ref{item:extension-tierney-i}. Finally, under the condition
$\drho = \drho^{\invol}$ and \Cref{prop:extension-tierney},
\eqref{eq:condition-equality} holds and~\ref{item:extension-tierney-ii} follows.

\subsection{Checking the GMH rule~\eqref{eq:generalized-MH}}
\label{sec:proof:generalized-MH}
We first check~\ref{item:extension-tierney-i}.
By \Cref{prop:extension-tierney} and~\eqref{eq:generalized-MH},  $A^c_\nu = B_1 \cup B_2$ where $B_1 = \set{(z,z') \in \Zset^2}{h(z,z')=0}$ and $B_2=\set{(z,z') \in \Zset^2}{h\bigl(\invol(z'),\invol(z)\bigr)=0}$, and for any $(z,z') \in B_2 \setminus B_1$, $\alpha(z,z')=0$. Therefore, to show~\ref{item:extension-tierney-i}, it suffices to establish that $\dnu(\{\alpha = 0\} \cap B_1) = 0$ which follows from
\[
\dnu(B_1)= \int \indi{B_1}(z,z') \dnu\bigl(\rmd (z,z')\bigr) = \int \indi{B_1}(z,z') h(z,z') \dlambda\bigl(\rmd (z,z')\bigr)= 0 \eqsp.
\]

We now check~\ref{item:extension-tierney-ii}. Note that by \Cref{prop:extension-tierney} and using that $\sym_{\invol}$ is an involution, for $(z,z') \in A_{\dnu}$,
\begin{align}
\alpha(z,z') r(z,z')&= \accfun\bigl(1/r(z,z')\big) r(z,z') \\
&=\accfun\bigl(r(z,z')\bigr) = \alpha\bigl(\invol(z'),\invol(z)\bigr) \eqsp.
\end{align}

\subsection{Expressions for $a$ and $b$}
\label{sec:proof:expressions-a-b}
We check the conditions on the nonnegative weights $a$ and $b$ so that the sub-Markovian kernel
\begin{equation}
  \label{eq:def_R_exp_a_b}
  R(z,\rmd z')= a(z) \updelta_{z}(\rmd z') + b(z) \updelta_{\invol(z)}(\rmd z')
\end{equation}
is $(\pi,\involk)$-reversible.
For $f$ a nonnegative measurable function, we get
\[
\involk R \involk f(z')= a\bigl(\invol(z')\bigr) f(z') + b\bigl(\invol(z')\bigr) f\bigl(\invol(z')\bigr) \eqsp.
\]
Hence, we obtain, for any nonnegative measurable function $g$,
\begin{align}
&\iint \pi(\rmd z') \involk R \involk (z',\rmd z) f(z) g(z')
= \int \pi( \rmd z') a\bigl(\invol(z')\bigr) f(z') g(z') + \int \pi(\rmd z') b\bigl(\invol(z')\bigr) f\bigl(\invol(z')\bigr) g(z') \\
&\qquad \qquad = \int \pi(\rmd z) a\bigl(\invol(z')\bigr) \updelta_{z}(\rmd z') f(z) g(z') +
  \int \pi(\rmd z') b(z') f(z') g\bigl(\invol(z')\bigr)\\
  &\qquad \qquad = \int \pi(\rmd z) a\bigl(\invol(z')\bigr) \updelta_{z}(\rmd z') f(z) g(z') +
  \int \pi(\rmd z) b(z) f(z)\updelta_{\invol(z)}(\rmd z') g(z')\eqsp,
\end{align}
where we have used $\pushf{\invol}{\pi}= \pi$. The result implies that
\[
\pi(\rmd z') SRS(z',\rmd z)= \pi(\rmd z)a\bigl(\invol(z)\bigr) \updelta_z(\rmd z') + \pi(\rmd z) b(z) \updelta_{\invol(z)}(\rmd z').
\]
Therefore, \eqref{eq:condition-reversibility} is satisfied (\eg~$\pi(\rmd z') SRS(z',\rmd z) = \pi(\rmd z) R(z,\rmd z')$) and $R$ is $(\pi,\involk)$-reversible if $a(z)= a\bigl(\invol(z)\bigr)$.

In addition, the total mass of $Q_\alpha(z,\rmd z')$ is $Q_\alpha(z,\Zset)$. The missing mass is  therefore $1 - Q_\alpha(z,\Zset)$. Since the total mass of $R$ is $a(z)+ b(z)$ we must have $a(z)+b(z)= 1 - Q_\alpha(z,\Zset)$.

We may for example set $a(z)=0$ and $b(z)= 1- Q_\alpha(z,\msz)$, which coincides with the classical MH rule when $\invol= \Id$. We may also take $a(z)= 1 - Q_\alpha(z,\Zset) - b(z)$ where $b$ satisfies $0 \leq b(z) \leq 1 - Q_\alpha(z,\Zset)$ and $b(z)- b\bigl(\invol(z)\bigr) = Q_\alpha(\invol(z),\Zset) - Q_\alpha(z,\Zset)$. As suggested in~\cite{turitsyn:2011}, we may set $b(z)= \max\bigl(0, Q_\alpha\bigl(\invol(z),\Zset\bigr) - Q_\alpha(z,\Zset)\bigr)$ which is shown to be optimal \wrt\ to the Peskun ordering in~\cite{andrieu:livingstone:2019}. Note however that this choice for $b$  is not always easily computable.

\subsection{Applications of~\eqref{eq:generalized-MH}: case with densities}
\label{sec:proof:proposal-with-densities}
Note that  $\dnu\bigl(\rmd (z,z')\bigr)= \pi(z) q(z,z') \mu^{\otimes 2}\bigl(\rmd(z,z')\bigr)$,
$\dnu^\invol\bigl(\rmd (z,z')\bigr) = \pi\bigl(\invol(z')\bigr) q\bigl(\invol(z'),\invol(z)\bigr) \mu^{\otimes 2}\bigl(\rmd(z,z')\bigr)$,  since $\pushf{\invol}{\mu} = \mu$. In addition, $h = \tilde{h}/\{\tilde{h} + \tilde{h} \circ \sym_{\invol}\}$, $\tilde{h}(z,z') = \pi(z) q(z,z')$ and therefore $A_{\dnu}$ in~\eqref{eq:def_A_dnu} is given by
{\[
A_{\dnu} = \defEns{\pi(z) q(z,z') \times \pi\bigl(\invol(z')\bigr) q\bigl(\invol(z'),\invol(z)\bigr) > 0} \eqsp,
\]}
and for $(z,z') \in A_{\dnu}$,
\[
r(z,z')= \frac{\pi(z) q(z,z')}{\pi\bigl(\invol(z')\bigr) q\bigl(\invol(z'),\invol(z)\bigr)} \eqsp.
\]
Therefore, we obtain using~\eqref{eq:generalized-MH} that
\begin{equation}
\label{eq:3}
  \alpha(z,z') =
  \begin{cases}
    \accfun\parentheseDeux{\frac{\pi\bigl(\invol(z')\bigr) q\bigl(\invol(z'),\invol(z)\bigr)}{\pi(z) q(z,z')}} &  \pi(z) q(z,z') \ne 0, \\
    1  & \pi(z) q(z,z') = 0 \eqsp.
  \end{cases}
\end{equation}
In addition, note that using $\pushf{\invol}{\pi} = \pi$, $\pushf{\invol}{\mu} = \mu$ and $s$ is an involution, we obtain that
\begin{equation}
  \label{eq:egalite_pi_s_pi}
  \pi = \pi \circ \invol \eqsp.
\end{equation}
Therefore, we obtain
\begin{equation}
\label{eq:alpha_densite_alternative}
  \alpha(z,z') =
  \begin{cases}
    \accfun\parentheseDeux{\frac{\pi(z') q\bigl(\invol(z'),\invol(z)\bigr)}{\pi(z) q(z,z')}} & \pi(z) q(z,z') \ne 0,\\
    1   & \pi(z) q(z,z') = 0 \eqsp.
  \end{cases}
\end{equation}

\subsection{Proof of \Cref{theo:irred_and_co}}
\label{sec:proof:theo:irred_and_co}
We preface the proof by the following result. Define $\msz^+ = \{z \in \msz\colon \pi(z) >0\}$ and set
\begin{equation}
\label{eq:kernel-with-mass-a=0}
P(z,\rmd z')= Q_\alpha(z,\rmd z') + \{1 - Q_\alpha(z,\Zset)\} \updelta_{\invol(z)}(\rmd z')  \eqsp.
\end{equation}
where $Q_\alpha(z,z') = \alpha(z,z') Q(z,\rmd z')$ and $\alpha$ is given by~\eqref{eq:generalized-MH}. Note that $P$ corresponds to~\eqref{eq:kernel-with-mass} with $a \equiv 0$ and $b(z) = 1-Q_\alpha(z,\Zset)$.
\begin{proposition}
  Consider $P$ defined by \eqref{eq:kernel-with-mass-a=0}. Assume that $P$ is
  $\pi$-irreducible and $Q(z,\msz^+) = 1$ for any $z \not \in \msz^+$. Further, suppose that $\pi$ is not a Dirac mass. Then, $P$ is Harris recurrent.
\end{proposition}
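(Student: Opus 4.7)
The plan is to invoke the standard decomposition of a $\pi$-invariant, $\pi$-irreducible kernel into a maximal Harris set and its complement, and then show the complement is empty by exploiting the explicit form of $P$ together with the two standing hypotheses. Since $P$ admits $\pi$ as invariant probability and is $\pi$-irreducible, general Markov chain theory provides a maximal set $H\subseteq\Zset$ with $\pi(H)=1$ and $P(z,H)=1$ for every $z\in H$ on which the chain is Harris recurrent; equivalently, the function $f(z)=\PP_z(\tau_H=\infty)$ vanishes $\pi$-\mae, satisfies $Pf=f$ pointwise, and Harris recurrence of $P$ amounts to $f\equiv 0$ on $\Zset$.

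Suppose, for contradiction, that $f(z_0)>0$ at some $z_0\in\Zset$. Substituting the decomposition $P(z,\cdot)=Q_\alpha(z,\cdot)+(1-Q_\alpha(z,\Zset))\updelta_{\invol(z)}(\cdot)$ into $Pf(z)=f(z)$ and using a convexity/maximum-principle argument, the super-level set $\{f\geq f(z_0)\}$ must be stable under both the sub-Markovian kernel $Q_\alpha(z,\cdot)$ and the flip $z\mapsto\invol(z)$ at points where $Q_\alpha(z,\Zset)<1$. To eliminate the ``off-support'' part, for $z\in\{f>0\}\setminus\Zset^+$ the hypothesis $Q(z,\Zset^+)=1$ combined with the GMH rule \eqref{eq:generalized-MH} yields $\alpha(z,\cdot)=1$ on the support of $Q(z,\cdot)$, because the density $h(z,\cdot)$ of \Cref{prop:extension-tierney} vanishes $\dlambda$-\mae on $(\Zset\setminus\Zset^+)\times\Zset$; hence $Q_\alpha(z,\cdot)=Q(z,\cdot)$ is a probability measure on $\Zset^+$, and the stability would force $\{f>0\}\cap\Zset^+$ to carry positive $Q(z,\cdot)$-mass, contradicting $\pi(\{f>0\})=0$ together with $\pi$-irreducibility.

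The main obstacle is the remaining configuration where $z\in\{f>0\}\cap\Zset^+$ always rejects, that is $Q_\alpha(z,\Zset)=0$; then $P(z,\cdot)=\updelta_{\invol(z)}$ and the orbit $\{z,\invol(z)\}$ becomes a candidate two-point absorbing set on which $f$ is constant and positive, which the previous argument does not directly discard. The plan is to close this case by combining the $(\pi,\involk)$-reversibility of $P$, which via \eqref{eq:condition-reversibility} forces $Q_\alpha(\invol(z),\Zset)=0$ as well on any such orbit, with $\pi$-irreducibility and the non-Dirac hypothesis: either the union of all such orbits carries positive $\pi$-mass, in which case it constitutes a proper $P$-absorbing set that violates $\pi$-irreducibility, or it is $\pi$-null, in which case one propagates along $P$-reachable points using $Q(z,\Zset^+)=1$ to bring the situation back to the ``off-support'' analysis of the previous paragraph. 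The non-Dirac assumption on $\pi$ rules out the only remaining degenerate case in which all of $\pi$ could be supported on a single orbit $\{z,\invol(z)\}$ stabilised by $\pushf{\invol}{\pi}=\pi$. This yields $f\equiv 0$ on $\Zset$ and hence Harris recurrence of $P$.
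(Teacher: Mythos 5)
Your overall strategy is the same as the paper's: reduce Harris recurrence to showing that a bounded harmonic function which vanishes $\pi$-\mae\ must vanish everywhere (the paper phrases this via ``every bounded harmonic $h$ equals $\pi(h)$ everywhere'' rather than via the maximal Harris set, but the two are equivalent), then treat $z\in\Zset^+$ and $z\notin\Zset^+$ separately and dispose of the degenerate always-rejecting orbit $\{z,\invol(z)\}$ using irreducibility and the non-Dirac hypothesis. Your treatment of $z\notin\Zset^+$ is correct: there $\alpha(z,\cdot)\equiv 1$, $Q(z,\cdot)$ is carried by $\Zset^+$, and on $\Zset^+$ any $\pi$-null set is $\mu$-null, so $Pf(z)=\int_{\Zset^+}q(z,z')f(z')\mu(\rmd z')=0$.

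The gap is in the central case $z\in\{f>0\}\cap\Zset^+$ with $Q_\alpha(z,\Zset)>0$. First, the claim that the super-level set $\{f\geq f(z_0)\}$ is stable under $Q_\alpha(z,\cdot)$ and the flip is false for an arbitrary $z_0$ with $f(z_0)>0$: a maximum principle of this kind holds only at (or asymptotically near) $\sup f$, and even there it needs an attainment or $\varepsilon$-approximation argument you do not supply. Second, and more importantly, even the correct version of this step requires knowing that for $z\in\Zset^+$ the accepted part $Q_\alpha(z,\cdot)$ assigns \emph{zero} mass to $\pi$-null sets, so that $Q_\alpha f(z)=0$ and the harmonic equation collapses to $f(z)=\bigl(1-\bar{\alpha}(z)\bigr)f\bigl(\invol(z)\bigr)$; iterating this at $\invol(z)$ then gives $f(z)\bigl\{\bar{\alpha}(z)+\bar{\alpha}\circ\invol(z)-\bar{\alpha}(z)\,\bar{\alpha}\circ\invol(z)\bigr\}=0$, which is the paper's punchline. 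That absolute-continuity fact is exactly the paper's inequality \eqref{eq:key-identity}: it is \emph{not} automatic, but follows from the explicit density form \eqref{eq:alpha_densite_alternative} of the GMH ratio together with $\accfun(t)\leq t$ (a consequence of \eqref{eq:prop_accfun} and $\accfun\leq 1$), which converts the reference measure $q(z,z')\mu(\rmd z')$ into $\pi(\rmd z')$, plus $\accfun(0)=0$ to kill the contribution of $\{\pi(z')=0\}$. Your proposal never states or proves this, and without it the reduction to the two-point flip relation — and hence the whole on-support case — does not go through. Your handling of the residual orbit case is, if anything, more explicit than the paper's one-line assertion that irreducibility and the non-Dirac hypothesis force $\bar{\alpha}(z)\neq 0$, so I have no complaint there; the missing ingredient is the control of $Q_\alpha(z,\cdot)$ on $\pi$-null sets.
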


\begin{proof}
  Since $\pi$ is invariant for $P$ by \Cref{theo:extension-tierney},
  $P$ is recurrent by~\cite[Theorem
  10.1.6]{douc:moulines:priouret:2018}. Therefore, \cite[Corollary 9.2.16, Proposition 5.2.12]{douc:moulines:priouret:2018} show that for any bounded
  harmonic function $h\colon \msz \to \rset$, \ie~satisfying $Ph = h$,
  $h = \pi(h)$, $\pi$~\mae. Then, if $A_h = \{h \neq \pi(h)\}$,
  $\pi(A_h)=0$. By~\cite[Theorem 10.2.11]{douc:moulines:priouret:2018}, $P$ is Harris recurrent if
  \begin{equation}
  \label{eq:to_show_harris}
  \text{ $h(z) = \pi(h)$ for any
  $z \in\msz$} \eqsp.
  \end{equation}

  First, consider $z \in\msz^+$. Define
  $B=\{z'\colon \pi(z') q\bigl(\invol(z'),\invol(z)\bigr) > 0\}$ and
  $C=\{z'\colon q(z,z') = 0\}$. Let $A$ be a $\pi$-negligible set, $\pi(A)=0$.
  Using $\accfun(t) \leq t$ by~\eqref{eq:prop_accfun}  for any $t \in \rset_+^*$,
  $\pi(z) \neq 0$ and~\eqref{eq:alpha_densite_alternative}, we get
  \begin{align}
  \label{eq:key-identity}
    \int \1_{A}(z') \alpha(z,z') q(z,z') \mu (\rmd z') &= \int \1_{A\cap B \cap C}(z') \alpha(z,z') q(z,z')   \mu (\rmd z')\\
    \nonumber
    & + \int \1_{A \cap B \cap C^{\comp}}(z')\accfun\parentheseDeux{\frac{\pi(z') q\bigl(\invol(z'),\invol(z)\bigr)}{\pi(z) q(z,z')}} q(z,z')   \mu (\rmd z') \\
    \nonumber
    & \leq \int \indi{A \cap B \cap C^{\comp}}(z')\frac{\pi(z') q\bigl(\invol(z'),\invol(z)\bigr)}{\pi(z)} \mu(\rmd z')\\
    \nonumber
    & \leq \int \indi{A \cap B \cap C^{\comp}}(z')\frac{q\bigl(\invol(z'),\invol(z)\bigr)}{\pi(z)}  \pi(\rmd z') =0 \eqsp,
  \end{align}
  where the last identity follows from $\pi(A)=0$. Applying this identity with $A_h$ yields to
  \begin{equation}
  \label{eq:nice-identity}
  \int \alpha(z,z') q(z,z') h(z') \rmd \mu(z') = \int \indi{A_h^{\comp}}(z') \alpha(z,z') q(z,z') h(z') \rmd \mu(z') = \pi(h) Q_{\alpha}(z,\msz) \eqsp.
  \end{equation}
  Therefore, the condition $Ph(z) =h(z)$ for any $z \in \Zset$ and \eqref{eq:kernel-with-mass-a=0}  imply that
  \begin{equation}
    \label{eq:eq_h}
    h(z) = \pi(h) Q_{\alpha}(z,\msz) + h \circ \invol(z) \{1-Q_{\alpha}(z,\msz)\} \eqsp.
  \end{equation}
  Applying $P$ to the previous equation, we obtain, denoting $\bar{\alpha}(z)= Q_\alpha(z,\Zset)$
  \begin{multline}
  \label{eq:rel0}
  h(z) = Ph(z) = \pi(h) \{ Q_\alpha \bar{\alpha}(z) + \{ 1 - \bar{\alpha}(z)\} \bar{\alpha} \circ \invol(z) \} \\
  + \int Q_\alpha(z,\rmd z') h \circ \invol(z') \{ 1 - \bar{\alpha}(z') \} + h(z) \{ 1 - \bar{\alpha}(z) \} \bigl\{1 - \bar{\alpha}\bigl(\invol(z)\bigr)\bigr\} \eqsp.
  \end{multline}
  Denote $A_{h \circ \invol}= \set{z \in \Zset}{h \circ \invol(z) \ne \pi(h)}$. Note that, $\pi(A_{h \circ \invol})= \pushf{\invol}{\pi}(A_h)= \pi(A_h)=0$. Using \eqref{eq:key-identity}, we get for $z \in \Zset^+$, $Q_\alpha(z,A_{h \circ \invol})=0$, which implies
  \begin{align}
  \int Q_\alpha(z,\rmd z') h \circ \invol(z') \{ 1 - \bar{\alpha}(z') \}
  &= \int \indi{A_{h \circ \invol}^c}(z') Q_\alpha(z,\rmd z') h \circ \invol(z') \{ 1 - \bar{\alpha}(z') \} \\
  &= \pi(h) \int  Q_\alpha(z,\rmd z')  \{ 1 - \bar{\alpha}(z') \} \eqsp.
  \end{align}
  Plugging this relation into~\eqref{eq:rel0} we obtain
  \begin{equation}
  h(z) = \pi(h) [ Q_\alpha \bar{\alpha}(z) + \{1 - \bar{\alpha}(z) \}\bar{\alpha} \circ \invol(z) ]
  + \pi(h) [\bar{\alpha}(z) - Q_\alpha \bar{\alpha}(z)] + h(z) \{1 - \bar{\alpha}(z)\} \{1 - \bar{\alpha} \circ \invol(z) \} \eqsp.
  \end{equation}
  Using straightforward algebra, the previous identity implies
  \[
  \{ \pi(h) - h(z) \} \{\bar{\alpha}(z) + \bar{\alpha} \circ \invol(z) - \bar{\alpha}(z) \times \bar{\alpha} \circ \invol(z) \} = 0 .
  \]
  Since $P$ is $\pi$-irreducible and $\pi$ is not a Dirac mass, $\bar{\alpha}(z) \neq 0$, we get that
  for all $z \in \msz^+$,
  \begin{equation}
    \label{eq:first_h_z_plus_eg}
    h(z) = \pi(h) \eqsp.
  \end{equation}

  Consider now the case $z \not \in \msz_+$. Using that $Q(z,\msz_+)$ by
  assumption and $\alpha(z,z')=1$ by
  \eqref{eq:alpha_densite_alternative} for any $z' \in \msz$, we get
  \begin{equation}
    h(z) = Ph(z) = \int_{\msz_+} q(z,z') h(z') \mu(z') = \int_{\msz_+} \{q(z,z') h(z')/\pi(z')\} \pi(z') \mu(z') = \pi(h) \eqsp.
  \end{equation}
  Combining this result with \eqref{eq:first_h_z_plus_eg} completes the proof of~\eqref{eq:to_show_harris}.
\end{proof}

\begin{proposition} \label{prop:one-step-irreducibility}
Assume the conditions of \Cref{theo:irred_and_co}. Then for any $A \in \mcz$ such that $\pi(A) >0$, we have
  \begin{equation}
    \label{eq:irred_proof}
    P(z,A) >0 \quad \text{for any $z \in \Zset $} \eqsp.
  \end{equation}
\end{proposition}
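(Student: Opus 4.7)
The plan is to bound $P(z,A)$ from below by the proposal contribution $Q_\alpha(z, A^+)$, where $A^+ = A \cap \Zset^+$, and to show that the integrand $\alpha(z,\cdot)\, q(z,\cdot)$ is strictly positive $\mu$-a.e.\ on $A^+$. First, since $\pi$ is concentrated on $\Zset^+$ we have $\pi(A^+) = \pi(A) > 0$, and as $\pi$ has density $\pi(\cdot)$ (strictly positive on $\Zset^+$) with respect to $\mu$, this forces $\mu(A^+) > 0$. The Dirac contribution $\{1 - Q_\alpha(z,\Zset)\}\updelta_{\invol(z)}$ in~\eqref{eq:kernel-with-mass-a=0} is nonnegative, so it suffices to prove $Q_\alpha(z, A^+) > 0$ for every $z \in \Zset$.

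I would then split on whether $z \in \Zset^+$. If $z \notin \Zset^+$, then $\pi(z)\, q(z,z') = 0$, so~\eqref{eq:alpha_density} gives $\alpha(z,z') = 1$; the hypothesis of \Cref{theo:irred_and_co} applied to any $z' \in A^+$ (for which $\pi(z')>0$) yields $q(z,z') > 0$, and hence $\int_{A^+} q(z,z')\,\mu(\rmd z')$ is strictly positive. If $z \in \Zset^+$, I would apply the symmetrized positivity hypothesis twice: once to $z'$ itself, which gives $q(z,z') > 0$; and once to the point $\invol(z)$, which has positive $\pi$-mass since $\pushf{\invol}{\pi} = \pi$, with first argument $\invol(z')$, which yields $q(\invol(z'), \invol(z)) > 0$. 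Both the numerator and the denominator inside $\accfun$ in~\eqref{eq:alpha_density} are therefore strictly positive, so using that $\accfun(t) > 0$ for every $t > 0$ (as in the canonical choices $\min(1,t)$ and $t/(1+t)$), we obtain $\alpha(z,z') > 0$ throughout $A^+$.

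The main delicate point is precisely the double application of the positivity hypothesis: the ratio in $\alpha$ involves $q(\invol(z'), \invol(z))$ with $\invol(z')$ as its first argument, whereas the raw hypothesis applied directly to $z'$ yields positivity of $q(\invol(z), \invol(z'))$ instead. The invariance $\pushf{\invol}{\pi} = \pi$ is used precisely to license the second invocation of the hypothesis at the point $\invol(z)$. Once pointwise positivity of $\alpha(z,\cdot)\,q(z,\cdot)$ on $A^+$ is established, combining it with $\mu(A^+) > 0$ gives $Q_\alpha(z, A^+) > 0$ and hence $P(z, A) > 0$, as required.
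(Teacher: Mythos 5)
Your proof is correct and follows essentially the same route as the paper's: lower-bound $P(z,A)$ by $Q_\alpha(z,A\cap\Zset^+)$, note $\mu(A\cap\Zset^+)>0$, and split on whether $z\in\Zset^+$, using $\alpha(z,\cdot)=1$ in the degenerate case. The one place you are more careful than the paper is the observation that positivity of $q\bigl(\invol(z'),\invol(z)\bigr)$ requires invoking the hypothesis at the point $\invol(z)$ (licensed by $\pushf{\invol}{\pi}=\pi$) rather than at $z'$, a subtlety the paper's proof passes over silently; this is a genuine and correct clarification, not a deviation.
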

\begin{proof}
  Consider first the case $z \in\msz_+ = \set{z \in \Zset}{\pi(z) > 0}$. Then, by~\eqref{eq:alpha_densite_alternative} and the condition if
  $\pi(z') >0$, then $q(\tilde{z},z') \times q\bigl(\invol(\tilde{z}),\invol(z')\bigr) >0$ for any
  $\tilde{z} \in \msz$, we have
  \begin{equation}
    P(z,A) \geq \int \1_{A\cap \msz_+}(z') \accfun\parentheseDeux{\frac{\pi(z') q\bigl(\invol(z'),\invol(z)\bigr)}{\pi(z) q(z,z')}} q(z,z')  \mu (\rmd z') >0 \eqsp,
  \end{equation}
  since $\pi(A) > 0$ implies that $\mu(A \cap \msz_+) > 0$.
  Second consider the case $z \not \in \msz_+$. Then, $\alpha(z,z')=1$ for any $z'\in\msz$ and we get
  \begin{equation}
    P(z,A) \geq  \int \1_{A\cap \msz_+}(z') q(z,z')  \mu (\rmd z') > 0 \eqsp,
  \end{equation}
  which concludes the proof of~\eqref{eq:irred_proof}.
\end{proof}

\begin{proof}[Proof of \Cref{theo:irred_and_co}]
  $\pi$-irreducibility of $P$ follows from Proposition~\ref{prop:one-step-irreducibility}.  We show that $P$ is $\pi$-irreducible and aperiodic. Indeed, this result and~\cite[Theorem 7.2.1, Theorem 11.3.1]{douc:moulines:priouret:2018} imply~\eqref{eq:TVconvergence} for all $z \in \Zset$. Finally, \cite[Corollary 9.2.16, Proposition 5.2.14]{douc:moulines:priouret:2018} establish~\eqref{eq:averages-convergence} for all $z \in \Zset$.

  The fact that $P$ is aperiodic is a direct consequence of~\eqref{eq:irred_proof} and~\cite[Theorem 9.3.10]{douc:moulines:priouret:2018}.
\end{proof}


\subsection{Proofs of~\eqref{eq:definition-k-lambda} and~\eqref{eq:generalized-MH-deterministic}}
\label{sec:proof:deterministic-case}

Consider $\dnu\bigl(\rmd(z,z')\bigr)= \pi(\rmd z) \updelta_{\transfdet(z)}(\rmd z')$, where $\transfdet$  is an invertible mapping on $\Zset$ satisfying $\transfdet^{-1}= \invol \circ \transfdet \circ \invol$. For any measurable function $f\colon \Zset^2 \to \rset_+$, we get
\begin{align}
\iint \dnu^{\invol}\bigl(\rmd (z,z')\bigr) f(z,z')
&= \iint \dnu\bigl(\rmd (z,z')\bigr) f\bigl(\invol(z'),\invol(z)\bigr) = \int \pi(\rmd z) f\bigl(\invol \circ \transfdet(z), \invol(z)\bigr) \\
&= \int \pushf{s}{\pi}(\rmd z') f(\invol \circ \transfdet \circ \invol (z'), z')= \iint  \pi(\rmd z') \delta_{\transfdet^{-1}(z')}(\rmd z) f(z,z') \eqsp.
\end{align}
Define
\begin{equation}
 \dlambda\bigl(\rmd (z,z')\bigr) = \dnu\bigl(\rmd(z,z')\bigr) + \dnu^{\invol}\bigl(\rmd(z,z')\bigr) =  \pi(\rmd z) \updelta_{\transfdet(z)}(\rmd z') + \pi(\rmd z') \updelta_{\transfdet^{-1}(z')}(\rmd z) \eqsp.
\end{equation}
Set $\lambda = \pi + \pushf{\transfdet^{-1}}{\pi}$ and define $k(z) = (\rmd \pi /\rmd \lambda)(z)$ .
Note that for any measurable function $f\colon \Zset^2 \to \rset_+$,
\begin{align}
  \label{eq:1}
   \int f(z,z') \dlambda\bigl(\rmd (z,z')\bigr)
   &= \int \pi(\rmd z) f\bigl(z,\transfdet(z)\bigr) + \int \pi(\rmd z') f\bigl(\transfdet^{-1}(z'),\transfdet \circ \transfdet^{-1}(z')\bigr) \\
   &= \int f\bigl(z,\transfdet(z)\bigr) \lambda(\rmd z) \eqsp.
\end{align}
Then, for any measurable function $f\colon \Zset^2 \to \rset_+$, we get since $k(z) = \rmd \pi/ \rmd \lambda (z)$,
\begin{equation}
\iint \dnu\bigl(\rmd (z,z')\bigr) f(z,z')
= \int \pi(\rmd z) f\bigl(z,\transfdet(z)\bigr) = \int k(z) f\bigl(z,\transfdet(z)\bigr) \lambda(\rmd z) \eqsp.
\end{equation}
On the other hand,
\begin{equation}
\iint \dnu\bigl(\rmd (z,z')\bigr) f(z,z')
 = \int h(z,z') f(z,z') \dlambda\bigl(\rmd (z,z')\bigr)
 = \int h\bigl(z,\transfdet(z)\bigr) f\bigl(z,\transfdet(z)\bigr) \lambda(\rmd z) \eqsp.
\end{equation}
showing that $  h\bigl(z,\transfdet(z)\bigr) = (\rmd \dnu/ \rmd \dlambda) \bigl(z, \transfdet(z)\bigr) =  k(z)$ $\lambda$-\mae~. Setting $z'= \invol(z)$ and using $\transfdet^{-1}= \invol \circ \transfdet \circ \invol$, we get
\[
h\bigl(\invol \circ \transfdet(z), \invol(z)\bigr) = h( \invol \circ \transfdet \circ \invol (z'), z')= h(\transfdet^{-1}(z'),z')
\]
Setting now $z''= \transfdet^{-1}(z')$, \ie~$z''= \transfdet^{-1}\circ \invol (z) = \invol\circ\transfdet (z)$, we finally obtain
\[
h\bigl(\invol \circ \transfdet(z), \invol(z)\bigr) = h\bigl(z'',\transfdet(z'')\bigr) = k(z'') = k\bigl(\transfdet^{-1} \circ \invol(z)\bigr) = k\bigl(\invol \circ \transfdet(z)\bigr) \eqsp.
\]
The proof of  \eqref{eq:definition-k-lambda} and \eqref{eq:generalized-MH-deterministic} is concluded using \Cref{theo:extension-tierney}.

\subsection{Proof of \eqref{eq:generalized-MH-deterministic-dens}}
\label{sec:proof:generalized-MH-deterministic-dens}
We now consider the case $\Zset= \rset^d$ and $\pi(\rmd z)= \pi(z) \rmd z$. We first identify the dominating measure $\lambda$ defined in~\eqref{eq:definition-k-lambda}. For any nonnegative measurable function $f$,
\begin{align}
\lambda(f)
&= \int f(z) \pi(z) \rmd z + \int f \circ \transfdet^{-1}(z) \pi(z) \rmd z \\
&= \int f(z) \pi(z) \rmd z + \int f(z) \pi \circ \transfdet(z) \Jac_{\transfdet}(z) \rmd z \eqsp.
\end{align}
Hence, $\lambda(\rmd z) = \lambda(z) \rmd z$ with
\begin{equation}
\label{eq:definition-density-k}
\lambda(z)= \pi(z)+ \pi \circ \transfdet(z) \Jac_{\transfdet}(z) \eqsp.
\end{equation}
Plugging this expression in \eqref{eq:definition-k-lambda}, we get that
\begin{equation}
\label{eq:definition-k-dens}
k(z)= \frac{\rmd \pi }{\rmd  \lambda}(z)= \frac{\pi(z)}{\pi(z) + \pi \circ \transfdet(z) \Jac_{\transfdet}(z)} \eqsp.
\end{equation}
We have for any function nonnegative measurable function $f$,
\begin{align}
\label{eq:invariance-pi-invol}
\pushf{\invol}{\pi}(f)
= \int \pi(z) f \circ \invol (z) \rmd z = \int \pi \circ \invol(z) \Jac_{\invol}(z) f(z)\rmd z,
\end{align}
which implies since $\pushf{s}{\pi}= \pi$, that $\pi \circ \invol(z)= \pi(z) / \Jac_{\invol}(z)$ $\Leb_d$-\mae.
Hence, we get that
\begin{align}
\nonumber
k\bigl(\invol \circ \transfdet(z)\bigr) &= \frac{\pi\bigl(\invol \circ \transfdet(z)\bigr)}{\pi\bigl(\invol \circ \transfdet(z)\bigr) + \pi\bigl(\transfdet \circ \invol \circ \transfdet (z)\bigr) \Jac_{\transfdet}\bigl(\invol \circ \transfdet(z)\bigr)} \\
\label{eq:expression-k-invol}
&= \frac{\pi \circ \transfdet(z)}{\pi \circ \transfdet(z) + \pi(z) \rho_\transfdet(z)},
\end{align}
where we have set
\begin{equation}
\label{eq:definition-rho}
\rho_\transfdet(z)= \frac{\Jac_{\transfdet}\bigl(\invol \circ \transfdet(z)\bigr) \Jac_{\invol}\bigl(\transfdet(z)\bigr)}{\Jac_{\invol}(z)} \eqsp.
\end{equation}
Since $\transfdet \circ \invol \circ \transfdet (z) = \invol(z)$, we get that
\[
\Jac_\transfdet\bigl(\invol \circ \transfdet(z)\bigr) \Jac_{\invol}\bigl(\transfdet(z)\bigr) \Jac_{\transfdet}(z)= \Jac_{\invol}(z),
\]
which implies that
\[
\rho_\transfdet(z)= 1 / \Jac_{\transfdet}(z) \eqsp.
\]
Plugging this expression into~\eqref{eq:expression-k-invol}, we finally get that
\[
k\bigl(\invol \circ \transfdet(z)\bigr) = \frac{\pi \circ \transfdet(z) \Jac_{\transfdet}(z)}{\pi(z) + \pi \circ \transfdet(z) \Jac_{\transfdet}(z)} \eqsp.
\]
Combining this result with~\eqref{eq:generalized-MH-deterministic} and~\eqref{eq:definition-k-dens} concludes the proof of~\eqref{eq:generalized-MH-deterministic-dens}.

\section{Proofs of \Cref{sec:applications-examples}}
\subsection{Generalized Hamiltonian Monte Carlo algorithms}
\label{subsec:NICE_proofs}
Consider the two following assumptions:
\begin{assumptionN}
\label{assum:M_N_symmetric}
For any $i\in\{1,\dots,m\}$, $\transfn_{m+1-i}=\transfm_i$.
\end{assumptionN}
\begin{assumptionN}
\label{assum:h_lipschitz_constant}
For any $i \in  \{1,\dots, m\}$, $M_i$ and $N_i$ are $\lipschitz$-Lipschitz and  $h \leq c_0/[\mathrm{L}^{1 / 2}m]$, where $c_0 \approx 0.3$.
\end{assumptionN}
\begin{lemma}
Assume \Cref{assum:M_N_symmetric}. Then, $\invol\circ\transfdet\circ\invol = \transfdet^{-1}$.\footnote{This condition is missing in the main text due to a late error with our versioning system.}
\end{lemma}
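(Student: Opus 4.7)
The plan is to first establish a single-step identity for the generalized leapfrog map $\LF_i$ and then propagate it through the composition defining $\transfdet$, finally invoking the symmetry assumption \Cref{assum:M_N_symmetric} to pair the steps correctly.

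First I would introduce, for any pair of $\rmC^1$ maps $(M,N)\colon\rset^d\to\rset^d$, the one-step leapfrog $\LF_{M,N}$ defined exactly as in \eqref{eq:definition-nice-flow} but with $M,N$ in place of $M_i,N_i$. A direct inversion of the three recursions shows that $\LF_{M,N}^{-1}(x,p)=(\tilde x,\tilde p)$ where $p_{1/2}=p-hN(x)$, $\tilde x=x-hp_{1/2}$, and $\tilde p=p_{1/2}-hM(\tilde x)$. On the other hand, applying $s$, then $\LF_{M,N}$, then $s$ to $(x,p)$ gives the triple $p_{1/2}'=-p+hM(x)$, $x'=x+hp_{1/2}'$, $-p'=-p_{1/2}'-hN(x')=p-hM(x)-hN(x')$. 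Comparing with the formula for $\LF_{N,M}^{-1}$ yields the one-step key identity
\begin{equation}\label{eq:one-step-key}
s\circ\LF_{M,N}\circ s = \LF_{N,M}^{-1}.
\end{equation}

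Next, starting from $\transfdet=\LF_m\circ\cdots\circ\LF_1$ and inserting $s\circ s=\Id$ between consecutive factors, I would write
\begin{equation}
s\circ\transfdet\circ s = (s\circ\LF_m\circ s)\circ\cdots\circ(s\circ\LF_1\circ s),
\end{equation}
and apply \eqref{eq:one-step-key} to each factor to obtain
\begin{equation}
s\circ\transfdet\circ s = \LF_{N_m,M_m}^{-1}\circ\cdots\circ\LF_{N_1,M_1}^{-1} = \bigl(\LF_{N_1,M_1}\circ\cdots\circ\LF_{N_m,M_m}\bigr)^{-1}.
\end{equation}
Meanwhile $\transfdet^{-1}=\LF_1^{-1}\circ\cdots\circ\LF_m^{-1}$, so the two sides agree as soon as
\begin{equation}
\LF_{N_1,M_1}\circ\cdots\circ\LF_{N_m,M_m} = \LF_{M_m,N_m}\circ\cdots\circ\LF_{M_1,N_1},
\end{equation}
which by matching the $i$-th factor from the left on both sides holds provided $N_i=M_{m+1-i}$ and $M_i=N_{m+1-i}$ for every $i\in\{1,\dots,m\}$. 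Both conditions follow immediately from \Cref{assum:M_N_symmetric}, which concludes the proof.

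The only mildly delicate point is getting the one-step identity \eqref{eq:one-step-key} with the correct swap $(M,N)\mapsto(N,M)$; after that the argument is purely algebraic bookkeeping over the composition. The conclusion should be stated explicitly as $\invol\circ\transfdet\circ\invol=\transfdet^{-1}$, which together with $\Jac_{\LF_i}\equiv 1$ (already noted in the main text) is exactly what is needed to apply the deterministic proposal framework of \Cref{sec:proposals} to the kernel $P$ defined in \eqref{eq:NICE_kernel}.
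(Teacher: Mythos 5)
Your proof is correct and follows essentially the same route as the paper's: the paper factors each leapfrog step as $\Psi_{\transfn_i}\circ\Upsilon\circ\Psi_{\transfm_i}$ and conjugates each elementary shear by $\invol$, which yields exactly your one-step identity $\invol\circ\LF_{M,N}\circ\invol=\LF_{N,M}^{-1}$, after which both arguments reduce to the same bookkeeping over the composition and the matching $\transfm_i=\transfn_{m+1-i}$.
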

\begin{proof}
Denote for $i\in\{1,\dots,m\}$, $\LF_i = \Psi_{\transfn_i}\circ\Upsilon\circ\Psi_{\transfm_i}$, where $\Upsilon(x,p) = (x+hp, p)$, $\Psi_{\transfm}(x,p) = \bigl(x, p+h\transfm(x)\bigr)$ and $\Psi_{\transfn}(x,p) = \bigl(x, p+h\transfn(x)\bigr)$.
Each of those transforms verify
\begin{equation}
\invol\circ\Upsilon\circ\invol = \Upsilon^{-1}\eqsp,\eqsp \invol\circ\Psi_{\transfm}\circ\invol = \Psi_{\transfm}^{-1}\eqsp,\eqsp \invol\circ\Psi_{\transfn}\circ\invol = \Psi_{\transfn}^{-1}\eqsp.
\end{equation}
Then, $\invol\circ\LF_i\circ\invol = \Psi_{\transfn_i}^{-1}\circ\Upsilon^{-1}\circ\Psi_{\transfm_i}^{-1}$ and thus,
\begin{equation}
\invol\circ\transfdet\circ\invol = \Psi_{\transfn_m}^{-1}\circ\Upsilon^{-1}\circ\Psi_{\transfm_m}^{-1}\circ\dots\circ\Psi_{\transfn_1}^{-1}\circ\Upsilon^{-1}\circ\Psi_{\transfm_1}^{-1}\eqsp.
\end{equation}
On the other hand,
\begin{equation}
\transfdet^{-1} = \LF_1^{-1}\circ\dots\circ\LF_m^{-1}= \Psi_{\transfm_1}^{-1}\circ\Upsilon^{-1}\circ\Psi_{\transfn_1}^{-1}\circ\dots\circ\Psi_{\transfm_m}^{-1}\circ\Upsilon^{-1}\circ\Psi_{\transfn_m}^{-1}\eqsp.
\end{equation}
Applying \Cref{assum:M_N_symmetric} concludes the proof.
\end{proof}

\subsubsection{Reversibility vs. persistency}\footnote{We follow the order of the main text here, but this discussion should be after the proofs of the following subsections.} \label{subsubsec:onpersistency}
In Subsection~\ref{sec:gener-hamilt-dynam} we define the deterministic Markov kernel on $\Zset$,
\begin{equation}
  P\big((x,p);\rmd (x',p')\big) = \bar{\alpha}(x,p) \updelta_{\Phi(x,p)}\bigl(\rmd (x',p')\bigr) + \bigl(1-\bar{\alpha}(x,p)\bigr) \updelta_{(x,-p)}\bigl(\rmd (x',p')\bigr) \eqsp,
\end{equation}
where $\bar{\alpha}(x,p)$ is given by~\eqref{eq:generalized-MH-deterministic-nice}. Such kernels are most likely not
ergodic and the momentum must be refreshed in order to lead to an ergodic Markov chain $\big(Z_i = (X_i,P_i)\big)_{i \in \nset}$. We focus on ``full refreshment", that is the scenario where the momentum is drawn afresh from its stationary distribution before applying $P$, in which case it can be checked that $(X_i)_{i \in \nset}$ is a Markov chain of (marginal) Markov kernel,
\begin{equation}
\label{eq:NICE_kernel}
  K(x,\rmd y)= K_\alpha(x,\rmd y) + \{1- K_{\alpha}(x,\rset^d) \} \updelta_{x}(\rmd y) \eqsp,
\end{equation}
where  $K_\alpha(x,\rmd y)= \int \bar{\alpha}(x,p) \varphi(p) \updelta_{G_x(p)}(\rmd y) \rmd p$ with
$G_x(p)= \projq \circ \transfdet(x,p)$, $\projq(x,p)= x$. Sampling from $K$ is described in Algorithm~\ref{alg:NICE_full_refresh}. It is also the case that $(X_i)_{i \in \nset}$ is time-reversible (see e.g.~\cite{duane:1987}), which has the disadvantage of loosing the potentially advantageous persistency features of $P$. It is possible to recover persistency by considering the mixture of kernels
\begin{equation}
\label{eq:partial_nice}
  T:=\partialrefresh P + (1-\partialrefresh) L
\end{equation}
for $\partialrefresh \in [0,1]$, where $L\bigl((x,p),\rmd (y,q)\bigr) = K(x, \rmd y) \varphi(q) \rmd q$.
The kernel $L$ refreshes independently the position $x$ and the momentum $p$. Since the target distribution is the product of $\pi_0$ and $\varphi$ (thus the position and the momentum are independent), it is easily checked as well that $L$ leaves $\pi$-invariant since
\begin{equation}
\pi L\bigl(\rmd (y,q)\bigr)=\int \pi_0(\rmd x) K(x, \rmd y) \varphi(p)\rmd p \varphi(q)\rmd q = \pi_0(\rmd y) \varphi(q) \rmd q = \pi\bigl(\rmd (y,q)\bigr)\eqsp.
\end{equation}
Note further that $L$ is $\pi$-reversible as $K$ is $\pi_0$-reversible.
 In what follows we establish $\pi_0-$irreducibility of $K$, and in particular that Proposition~\ref{prop:one-step-irreducibility} holds, which immediately implies $\pi_0\otimes (\varphi\times\Leb)-$irreducibility of $T$ and allows us to apply Theorem~\ref{theo:tierney-supp} and conclude about convergence. Sampling from $T$ is described in Algorithm~\ref{alg:NICE_randomized}. In future work we will consider the scenario where $p$ is updated using partial refreshment such as suggested in~\cite{horowitz:1991}, for example by using an AR(1) process when $\varphi$ is a normal distribution, which requires an extension of our results; see Algorithm~\ref{alg:NICE_persistent}.

\subsubsection{Proof of \eqref{eq:acceptance-NICE} and \Cref{theo:condition-diffeo}}
\label{sec:proof:acceptance-NICE}
We first establish the elementary equation \eqref{eq:acceptance-NICE}.
\begin{lemma}
Assume that for each $x\in\rset^d$, $G_x:p\mapsto\projq\circ\transfdet(x,p)$ is a $\rmc^1$-diffeomorphism. Then, $K_\alpha(x,\rmd y)$ has a density $K_\alpha(x,\rmd y) = \alpha(x,y)q(x,y)\rmd y$ where
\begin{align}
\alpha(x,y) &= \accfun\left(\frac{\target(y) \varphi\bigl\{H_x\bigl(G_x^{-1}(y)\bigr)\bigr\}}{\target(x) \varphi\bigl(G_x^{-1}(y)\bigr)} \right)\eqsp,\\
q(x,y) &= \varphi(G_x^{-1}(y))\Jac_{G_x^{-1}}(y)\eqsp.
\end{align}
\end{lemma}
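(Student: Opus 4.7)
The plan is to perform a change of variables $p \mapsto y = G_x(p)$ inside the definition $K_\alpha(x,\rmd y) = \int \bar{\alpha}(x,p)\varphi(p)\updelta_{G_x(p)}(\rmd y)\rmd p$. Fix a test function $f$; then
\begin{equation}
K_\alpha f(x) = \int \bar{\alpha}(x,p)\varphi(p) f\bigl(G_x(p)\bigr)\rmd p.
\end{equation}
Since $G_x$ is assumed to be a $\rmC^1$-diffeomorphism on $\rset^d$, the change of variables $y = G_x(p)$ (so $p = G_x^{-1}(y)$ and $\rmd p = \Jac_{G_x^{-1}}(y)\rmd y$) yields
\begin{equation}
K_\alpha f(x) = \int \bar{\alpha}\bigl(x,G_x^{-1}(y)\bigr)\varphi\bigl(G_x^{-1}(y)\bigr)\Jac_{G_x^{-1}}(y) f(y)\rmd y,
\end{equation}
which identifies the density of $K_\alpha(x,\cdot)$ with respect to Lebesgue measure as $\bar{\alpha}(x,G_x^{-1}(y))\varphi(G_x^{-1}(y))\Jac_{G_x^{-1}}(y)$.

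Next I would match this expression with the claimed factorisation $\alpha(x,y)q(x,y)$. By definition $q(x,y) = \varphi(G_x^{-1}(y))\Jac_{G_x^{-1}}(y)$, so it remains to check that $\bar{\alpha}(x,G_x^{-1}(y)) = \alpha(x,y)$. Recall from~\eqref{eq:generalized-MH-deterministic-nice} that $\bar{\alpha}(x,p) = \accfun(\pi\circ\transfdet(x,p)/\pi(x,p))$ when $\pi(x,p)>0$, together with $\transfdet(x,p) = (G_x(p),H_x(p))$ and $\pi(x,p) = \pi_0(x)\varphi(p)$. Hence
\begin{equation}
\bar{\alpha}(x,p) = \accfun\!\left(\frac{\pi_0(G_x(p))\,\varphi(H_x(p))}{\pi_0(x)\,\varphi(p)}\right),
\end{equation}
and substituting $p = G_x^{-1}(y)$ produces exactly the announced formula for $\alpha(x,y)$. (Note the factor $\Jac_{\transfdet}$ is absent because $\Jac_{\LF_i}\equiv 1$ for each leap-frog step, so $\Jac_{\transfdet}\equiv 1$.)

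There is essentially no obstacle: the main computation is the change of variables, which is justified by the diffeomorphism hypothesis on $G_x$ (itself guaranteed by \Cref{theo:condition-diffeo} under the Lipschitz/step-size conditions on the $\transfm_i,\transfn_i$). The only minor care needed is the edge case $\pi_0(x)\varphi(G_x^{-1}(y)) = 0$, where $\bar{\alpha}$ is set to $1$ by convention; since $\varphi>0$ by assumption, this reduces to $\pi_0(x)=0$, and the convention on $\alpha$ is consistent, so the factorisation still holds pointwise and the identification of $K_\alpha$ as $\alpha(x,y)q(x,y)\,\rmd y$ is unaffected.
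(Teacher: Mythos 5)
Your proof is correct and follows essentially the same route as the paper: apply $K_\alpha$ to a test function, perform the change of variables $y = G_x(p)$ justified by the diffeomorphism hypothesis, and identify $\bar{\alpha}\bigl(x,G_x^{-1}(y)\bigr)$ with the announced $\alpha(x,y)$ using $\transfdet(x,p)=\bigl(G_x(p),H_x(p)\bigr)$ and $\pi=\pi_0\otimes\varphi$. Your additional remark on the degenerate case $\pi_0(x)=0$ is a harmless refinement the paper leaves implicit.
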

\begin{proof}
First by~\eqref{eq:generalized-MH-deterministic-nice}, for any $(x,p) \in \rset^{2d}$, we have by definition
\begin{equation}
  K_{\alpha}f(x) = \int \bar{\alpha}(x,p) \varphi(p) f\bigl(G_x(p)\bigr) \rmd p = \int \accfun\left(\frac{\pi \circ \Phi (x,p)}{\pi(x,p)}\right) \varphi(p) f\bigl(G_x(p)\bigr) \rmd p \eqsp.
\end{equation}
Then, using the change of variable $y = G_x(p)$, we obtain
\begin{equation}
  K_{\alpha}f(x) = \int \accfun\left(\frac{\pi \circ \Phi\bigl(x,G_x^{-1}(y)\bigr)}{\pi\bigl(x,G_x^{-1}(y)\bigr)}\right) \varphi\bigl(G_x^{-1}(y)\bigr) \Jac_{G^{-1}_x}(y) f(y) \rmd y \eqsp,
\end{equation}
which concludes the proof of~\eqref{eq:acceptance-NICE} since $\pi = \pi_0 \otimes \varphi$.
\end{proof}

We now prove \Cref{theo:condition-diffeo} which gives conditions on the mappings $\{\transfm_i, \transfn_i\}_{i= 1}^m$ that ensure that for all $x\in\rset^d$, $G_x$ is a $\rmc^1$-diffeomorphism.
\begin{theorem}
\label{theo:lip_nice}
Assume \Cref{assum:h_lipschitz_constant}. Then, for any $x\in \rset^d$, the function $G_x(p)= \projq \circ \transfdet(x,p)$ is a $\rmc^1$ diffeomorphism.
Moreover, the GMH kernel based on NICE transitions is ergodic.
\end{theorem}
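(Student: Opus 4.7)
The plan is as follows. Write $(X_i(p),P_i(p))$ for the leap-frog iterates of \eqref{eq:definition-nice-flow} started at $(x,p)$, so that $G_x(p)=X_m(p)$. Since each $\transfm_i,\transfn_i$ is $\rmc^1$ (as assumed at the start of \Cref{sec:gener-hamilt-dynam}), every $\LF_i$ is a $\rmc^1$-diffeomorphism of $\rset^{2d}$, hence $\transfdet\in\rmc^1$ and $G_x\in\rmc^1(\rset^d,\rset^d)$. To upgrade $G_x$ to a global $\rmc^1$-diffeomorphism I would combine (i) nonsingularity of the Jacobian $DG_x(p)$ at every $p$, which gives a local diffeomorphism, with (ii) coercivity, which gives properness, and then apply the Hadamard--L\'evy theorem. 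This is a transcription of the strategy of \cite[Theorem~1]{durmus:moulines:saksman:2017} to the generalised dynamics in which $\nabla\log\pi_0$ is replaced by the $\lipschitz$-Lipschitz maps $\transfm_i,\transfn_i$.

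For (i), differentiating \eqref{eq:definition-nice-flow} in $p$ yields the matrix recursion for $A_i:=\partial_p X_i$, $B_i:=\partial_p P_i$ with $A_0=0$, $B_0=I$:
\begin{equation}
A_{i+1}=(I+h^2 D\transfm_i(X_i))A_i+hB_i,\qquad B_{i+1}=B_i+h\,D\transfm_i(X_i)A_i+h\,D\transfn_i(X_{i+1})A_{i+1}.
\end{equation}
Since $\transfm_i,\transfn_i$ are $\lipschitz$-Lipschitz and $\rmc^1$, $\|D\transfm_i\|,\|D\transfn_i\|\leq \lipschitz$ everywhere. Rescaling $\tilde A_i:=A_i/(ih)$ for $i\geq 1$ and iterating the recursion under the budget $h\leq c_0\lipschitz^{-1/2}/m$, I would prove by induction on $i$ that $\tilde A_m=I+E$ with $\|E\|_{\mathrm{op}}<1$, so that $DG_x(p)=A_m(p)$ is nonsingular for every $p\in\rset^d$. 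The explicit threshold $c_0\approx 0.3$ arises from tightly tracking the accumulation of the quadratic-in-$h$ corrections over the $m$ leap-frog steps.

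For (ii), the same telescoping analysis applied to the differences $X_i(p)-X_i(p')$, $P_i(p)-P_i(p')$, together with the $\lipschitz$-Lipschitz bounds on $\transfm_i,\transfn_i$, yields a one-sided bi-Lipschitz estimate $\|G_x(p)-G_x(p')\|\geq c\,hm\,\|p-p'\|$ for some $c>0$ depending only on $\lipschitz,h,m$ and strictly positive under the step-size assumption. This gives coercivity, hence properness of $G_x$, and the Hadamard--L\'evy theorem delivers the global $\rmc^1$-diffeomorphism conclusion. Ergodicity of the NICE-based GMH kernel then follows directly from \Cref{theo:irred_and_co}: by \eqref{eq:NICE-identity-ratio} the marginalised kernel $K$ in \eqref{eq:marginal-NICE-kernel} is $\pi_0$-reversible with proposal density $q(x,y)=\varphi(G_x^{-1}(y))\Jac_{G_x^{-1}}(y)$, which is strictly positive for all $x,y\in\rset^d$ because $\varphi>0$ and $G_x$ is a diffeomorphism, so the positivity hypothesis of \Cref{theo:irred_and_co} (applied with $\invol=\Id$) is satisfied. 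The main obstacle is the tight inductive bookkeeping in step (i) that produces the explicit constant $c_0\approx 0.3$ under the chosen norm.
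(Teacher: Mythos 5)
Your proposal is correct in substance and reaches the same conclusion, but it organises the global-invertibility argument differently from the paper. The paper first solves the leap-frog recursion in closed form to write $G_x(p)=x+mhp+h^2\Theta_m(x,p)$, proves a discrete Gr\"onwall-type stability estimate for the iterates (\Cref{lemma:s1}) and deduces that $p\mapsto (h/m)\Theta_m(x,p)$ is a contraction under the step-size budget (\Cref{lemma:s3}); bijectivity of $G_x$ then comes in one stroke from the Banach fixed point theorem applied to $H_{y}\colon p\mapsto y-(h/m)\Theta_m(x,p)$, and invertibility of $\Jac_{G_x}$ is an immediate by-product of the same Lipschitz bound, so no separate properness or Hadamard--L\'evy argument is needed. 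You instead differentiate the recursion to control $DG_x(p)=A_m(p)$ directly (local diffeomorphism) and separately derive a one-sided bi-Lipschitz lower bound to get properness, concluding via Hadamard--L\'evy. Both routes rest on the same quantitative core --- the accumulated Lipschitz constant of the nonlinear part over $m$ steps must stay below $1$ when $h\leq c_0/(\lipschitz^{1/2}m)$ --- and your two inductions (for $A_i,B_i$ and for the difference estimates) are essentially the transpose of the paper's \Cref{lemma:s1}; the paper's fixed-point route is marginally more economical because a single contraction estimate delivers injectivity, surjectivity and Jacobian invertibility simultaneously, whereas yours requires the two parallel bookkeeping arguments and an appeal to a global inverse function theorem. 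Note also that your value of $c_0$ would in general differ from the paper's (which is pinned down as the root of $\rme^{c\vartheta_1(c)}=2$ with $\vartheta_1(s)=2+s+s^2$), since the constant is an artefact of the particular norm and bounding scheme. Your ergodicity argument matches the paper's \Cref{theo:convergence_nice}: it additionally uses $\varphi>0$, $\varphi$ even and \Cref{assum:M_N_symmetric} (needed for \eqref{eq:NICE-identity-ratio}), which the bare statement of the theorem does not list but which the paper also invokes at this point.
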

We preface the proof by some auxiliary results.
Recall that one step of the NICE transition is given by $\LF_{i}(x_i,p_i) = (x_{i+1}, p_{i+1})$, where:
\begin{equation}
\begin{cases} p_{i+1/2} &=  p_i + h \transfm_i(x_i),
\\
x_{i+1} &= x_i + h p_{i+ 1/2},
\\
p_{i+1} &= p_{i+1/2} + h N_{i}(x_{i+1}).
\end{cases}
\end{equation}
Denote
\begin{equation}
\Lambda^{(j)}= \LF_{j} \circ \dots \circ \LF_1 \eqsp.
\end{equation}
\begin{lemma}
For all  $k\in\nsets$, we get
\begin{align}
&x_k = x_1 + (k-1)hp_1 + h^2\sum_{i=1}^{k-1}(k-i)\transfm_i(x_i) + h^2\sum_{i=1}^{k-2}(k-1-i)\transfn_i(x_{i+1}) \eqsp,
\\
&p_k = p_1 + h\sum_{i=1}^{k-1} \transfm_i(x_i)+ h \sum_{i=1}^{k-1} \transfn_i(x_{i+1}) \eqsp.
\end{align}
\end{lemma}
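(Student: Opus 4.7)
My plan is to prove both identities simultaneously by induction on $k$, using the one-step recursion
\begin{equation}
x_{k+1}=x_k+h p_k+h^2 \transfm_k(x_k),\qquad p_{k+1}=p_k+h\transfm_k(x_k)+h\transfn_k(x_{k+1}),
\end{equation}
which follows directly from eliminating the half-step $p_{k+1/2}=p_k+h\transfm_k(x_k)$ in the definition of $\LF_k$. The base case $k=1$ is immediate: both sums are empty, yielding $x_1=x_1$ and $p_1=p_1$.

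For the inductive step, assuming the formulas hold at index $k$, I would first handle $p_{k+1}$, which is essentially immediate: adding $h\transfm_k(x_k)+h\transfn_k(x_{k+1})$ to the expression for $p_k$ extends each of the two sums by their missing term of index $i=k$, yielding exactly the claimed expression at index $k+1$.

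For $x_{k+1}$, I would substitute the inductive expressions for $x_k$ and $h p_k$ into $x_{k+1}=x_k+h p_k+h^2\transfm_k(x_k)$ and then collect the $\transfm$ and $\transfn$ contributions separately. The $\transfm$ terms combine as
\begin{equation}
\sum_{i=1}^{k-1}(k-i)\transfm_i(x_i)+\sum_{i=1}^{k-1}\transfm_i(x_i)+\transfm_k(x_k)=\sum_{i=1}^{k}(k+1-i)\transfm_i(x_i),
\end{equation}
using that the isolated term $\transfm_k(x_k)$ has coefficient $1=(k+1-k)$. Similarly the $\transfn$ terms combine as
\begin{equation}
\sum_{i=1}^{k-2}(k-1-i)\transfn_i(x_{i+1})+\sum_{i=1}^{k-1}\transfn_i(x_{i+1})=\sum_{i=1}^{k-1}(k-i)\transfn_i(x_{i+1}),
\end{equation}
where the term $\transfn_{k-1}(x_k)$ not present in the first sum is absorbed with coefficient $1=(k-(k-1))$. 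Together with the $x_1+kh p_1$ coming from the non-sum parts, this matches the target expression for $x_{k+1}$.

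There is no real obstacle here: the statement is a bookkeeping lemma, and the only subtlety is keeping the two upper limits $k-1$ and $k-2$ in the $x_k$ formula straight during reindexing, which the isolated boundary terms $\transfm_k(x_k)$ and $\transfn_{k-1}(x_k)$ absorb cleanly.
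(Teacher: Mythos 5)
Your proof is correct and follows essentially the same route as the paper's: induction on $k$ using the one-step leap-frog recursion, with the same reindexing of the $\transfm$ and $\transfn$ sums (the paper merely keeps the half-step $p_{k+1/2}$ explicit and starts the base case at $k=2$ rather than $k=1$).
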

\begin{proof}
The proof proceeds by induction.  The assertion is obviously true for $k=2$.
Let us suppose that the assertion holds true for some $k\in\nsets$.
\begin{align}
p_{k+1/2} &=  p_k + h \transfm_k(x_k)=p_1 +h \sum_{i=1}^{k} \transfm_i(x_i)+ h \sum_{i=1}^{k-1} \transfn_i(x_{i+1}),\\
x_{k+1} &= x_k + hp_{k+1/2}\\
&=  x_1 + (k-1)hp_1 + h^2 \sum_{i=1}^{k-1}(k-i)\transfm_i(x_i) + h^2\sum_{i=1}^{k-2}(k-1-i)\transfn_i(x_{i+1})\\
& \quad \quad +h\left(p_1 +h \sum_{i=1}^{k} \transfm_i(x_i)+ h \sum_{i=1}^{k-1} \transfn_i(x_{i+1})\right)\\
&=x_1 + khp_1 + h^2 \sum_{i=1}^{k}(k+1-i)\transfm_i(x_i) + h^2 \sum_{i=1}^{k-1}(k-i)\transfn_i(x_{i+1}), \\
p_{k+1} &= p_{k+1/2} + h N_{k}(x_{k+1})=p_1 +h \sum_{i=1}^{k} \transfm_i(x_i)+ h \sum_{i=1}^{k} \transfn_i(x_{i+1}).
\end{align}
This concludes the proof.
\end{proof}
Denote  for all $(x_1,p_1)\in\rset^{2d}$,
\begin{align}
&G_{x_1}(p_1) = x_1 + mhp_1  + h^2 \Theta_m(x_1,p_1), \\
&\Theta_m(x_1,p_1) = \sum_{i=1}^{m}(m+1-i)\transfm_i(x_i) + \sum_{i=1}^{m-1}(m-i)\transfn_i(x_{i+1}).
\end{align}

Since the mappings $\{M_k\}_{k=1}^m$, $\{N_k\}_{k=1}^m$ are continuously differentiable, the mapping $\Xi_k$ is continuously differentiable.

\begin{lemma}
\label{lemma:s1}
For $(x_1, p_1), (\tilde{x}_1, \tilde{p}_1) \in\rset^{2d}$, denote $(x_{k+1}, p_{k+1}),(\tilde{x}_{k+1}, \tilde{p}_{k+1})$ the states obtained after $k$ NICE-based transitions.
Under the Lipschitz constraint $\lipschitz$, we have
\begin{equation}
\| x_{k+1}-\tilde{x}_{k+1}\| + \lipschitz^{-1/2}\|p_{k+1}-\tilde{p}_{k+1}\| \leq
\left\{1+h \lipschitz^{1 / 2} \vartheta_{1}\left(h \lipschitz^{1 / 2}\right)\right\}^{k}\left\{\left\|x_{1}-\tilde{x}_1\right\|+\lipschitz^{-1 / 2}\left\|p_{1}-\tilde{p}_1\right\|\right\}\eqsp,
\end{equation}
where $\vartheta_{1}\left(s\right) = 2 + s + s^2$.
\end{lemma}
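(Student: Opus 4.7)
The plan is to prove the stated bound by an elementary induction on $k$, working with the auxiliary quantities $a_k := \|x_k - \tilde{x}_k\|$ and $b_k := \lipschitz^{-1/2}\|p_k - \tilde{p}_k\|$. The target inequality can be rewritten as
\[
a_{k+1} + b_{k+1} \leq \bigl(1 + s\,\vartheta_1(s)\bigr)^{k} (a_1 + b_1), \qquad s := h\lipschitz^{1/2},
\]
so the whole argument reduces to establishing a one-step contraction-type bound of the form $a_{k+1} + b_{k+1} \leq (1 + s\,\vartheta_1(s))(a_k + b_k)$ and then iterating it.

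To get the one-step bound, I would use the three updates of the modified leap-frog step $\LF_k$ in sequence. First, combining the half-step $p_{k+1/2} = p_k + h M_k(x_k)$ with the position update $x_{k+1} = x_k + h p_{k+1/2}$ gives $x_{k+1} - \tilde{x}_{k+1} = (x_k - \tilde{x}_k) + h(p_k - \tilde{p}_k) + h^2(M_k(x_k) - M_k(\tilde{x}_k))$. Applying the $\lipschitz$-Lipschitz assumption on $M_k$ yields
\[
a_{k+1} \leq (1 + h^2\lipschitz)\, a_k + h\lipschitz^{1/2}\, b_k.
\]
Similarly, from $p_{k+1} - \tilde{p}_{k+1} = (p_k - \tilde{p}_k) + h(M_k(x_k) - M_k(\tilde{x}_k)) + h(N_k(x_{k+1}) - N_k(\tilde{x}_{k+1}))$ and the Lipschitz assumption on $N_k$, followed by plugging in the bound on $a_{k+1}$, one obtains
\[
b_{k+1} \leq (1 + h^2\lipschitz)\, b_k + h\lipschitz^{1/2}(2 + h^2\lipschitz)\, a_k.
\]

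Adding these two inequalities, the coefficient of $a_k$ is $1 + 2s + s^2 + s^3$ and the coefficient of $b_k$ is $1 + s + s^2$, both bounded above by $1 + 2s + s^2 + s^3 = 1 + s(2 + s + s^2) = 1 + s\,\vartheta_1(s)$. Thus $a_{k+1} + b_{k+1} \leq (1 + s\,\vartheta_1(s))(a_k + b_k)$, and iterating this inequality $k$ times delivers the claimed estimate.

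The argument is almost entirely mechanical; the only delicate point is the nested use of the Lipschitz bound for $N_k$, which forces one to first control $a_{k+1}$ in terms of $(a_k, b_k)$ before bounding $b_{k+1}$. Keeping the normalization $b_k = \lipschitz^{-1/2}\|p_k - \tilde{p}_k\|$ is essential: it is precisely this scaling that makes the coefficients in the one-step recursion assemble cleanly into the polynomial $\vartheta_1(s)$, so that the dependence on $(h,\lipschitz)$ appears only through the single dimensionless parameter $s = h\lipschitz^{1/2}$.
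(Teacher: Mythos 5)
Your proof is correct and follows essentially the same route as the paper: the identical one-step estimates $a_{k+1}\le(1+h^2\lipschitz)a_k+h\lipschitz^{1/2}b_k$ and $b_{k+1}\le(1+h^2\lipschitz)b_k+h\lipschitz^{1/2}(2+h^2\lipschitz)a_k$, summed and bounded by $1+s\vartheta_1(s)$ with $s=h\lipschitz^{1/2}$, then iterated. The paper simply phrases this as the case $k=1$ followed by ``a straightforward induction,'' which is exactly your argument.
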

\begin{proof}
We show this result for $k=1$ and the apply a straightforward induction.
For $k=1$, we have
\begin{align}
\left\|x_{2}-\tilde{x}_{2}\right\| &=\left\|x_{1}+h^{2} \transfm_1\left(x_{1}\right) +h p_{1}-\left\{\tilde{x}_{1}+h^{2}  \transfm_1\left(\tilde{x}_{1}\right)+h \tilde{p}_{1}\right\}\right\| \\
 & \leq\left(1+h^{2} \lipschitz \right)\left\|x_{1}-\tilde{x}_{1}\right\|+h\left\|p_{1}-\tilde{p}_{1}\right\| \eqsp.
\end{align}
Moreover, we have
\begin{align}
&\| p_{2}-\tilde{p}_{2} \| =\left\|p_{1}-\tilde{p}_{1}-h \left\{\transfn_1\left(x_{2}\right)+\transfm_1\left(x_{1}\right)\right\}+h \left\{\transfn_1\left(\tilde{x}_{2}\right)+\transfm_1\left(\tilde{x}_{1}\right)\right\}\right\| \\
&\quad \leq\left\|p_{1}-\tilde{p}_{1}\right\| + h \lipschitz \left\{\left\|\tilde{x}_{2}-x_{2}\right\|+\left\|\tilde{x}_{1}-x_{1}\right\|\right\} \\
&\quad \leq\left(1+h^{2} \lipschitz \right)\left\|p_{1}-\tilde{p}_{1}\right\|+h \lipschitz\left(2+h^{2} \lipschitz \right)\left\|x_{1}-\tilde{x}_{1}\right\| \eqsp.
\end{align}
Summing the two previous expression, we get the desired result for $k=1$.
\end{proof}
\begin{lemma}
\label{lemma:s3}
For any $h>0$, we have
\begin{equation}
\sup _{(x, p, v) \in \rset^{3 d}}\left\{\left\|\Theta_m(x, p)-\Theta_m(x, v)\right\| /\|p-v\|\right\}
\leq(m / h)\left\{\left(1+h \lipschitz^{1 / 2} \vartheta_{1}\left(h \lipschitz\right)\right)^{m}-1\right\} \eqsp.
\end{equation}
\end{lemma}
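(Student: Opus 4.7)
The function $\Theta_m$ is the nonlinear remainder in the decomposition $G_{x_1}(p_1) = x_1 + mhp_1 + h^2\Theta_m(x_1, p_1)$, so the lemma amounts to a Lipschitz bound on $\Theta_m(x, \cdot)$ in the momentum variable. The plan is to expand $\Theta_m$ using its explicit definition, apply the $\lipschitz$-Lipschitz assumption on each $\transfm_i, \transfn_i$ to replace each term by a position gap $\|x_i - \tilde x_i\|$, and finally invoke \Cref{lemma:s1} to bound these gaps geometrically in $\|p - v\|$.

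Concretely, I would fix $(x,p,v)\in\rset^{3d}$, write $(x_i, p_i)_{i\geq 1}$ and $(\tilde x_i, \tilde p_i)_{i\geq 1}$ for the leapfrog trajectories starting from $(x,p)$ and $(x,v)$ respectively, and set $r := 1 + h\lipschitz^{1/2}\vartheta_1(h\lipschitz^{1/2})$. Since $x_1 = \tilde x_1 = x$, \Cref{lemma:s1} yields $\|x_{k+1} - \tilde x_{k+1}\| \leq \lipschitz^{-1/2} r^{k}\|p-v\|$ for every $k \geq 1$. The Lipschitz assumption on the $\transfm_i, \transfn_i$ then gives
\begin{equation*}
\|\Theta_m(x,p) - \Theta_m(x,v)\| \leq \lipschitz\sum_{i=1}^m (m+1-i)\|x_i - \tilde x_i\| + \lipschitz\sum_{i=1}^{m-1}(m-i)\|x_{i+1} - \tilde x_{i+1}\|.
\end{equation*}

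Next, I would drop the $i=1$ term of the first sum (which vanishes since $x_1 = \tilde x_1$) and reindex $k = i-1$, merging both sums into $2\lipschitz\sum_{k=1}^{m-1}(m-k)\|x_{k+1} - \tilde x_{k+1}\|$. Inserting the trajectory bound, and then using $\vartheta_1(s) = 2 + s + s^2 \geq 2$, so that $2\lipschitz^{1/2} \leq (r-1)/h$, together with $(m-k) \leq m$ and the geometric identity $(r-1)\sum_{k=0}^{m-1} r^k = r^m - 1$, one arrives at $(m/h)(r^m - 1)\|p - v\|$, i.e.\ the announced bound (up to a typo $\vartheta_1(h\lipschitz)\leftrightarrow\vartheta_1(h\lipschitz^{1/2})$ between the statement and \Cref{lemma:s1}; equivalently, the stated form is implied when $\lipschitz \geq 1$ by monotonicity of $\vartheta_1$).

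There is no deep obstacle once \Cref{lemma:s1} is in hand: the proof is essentially bookkeeping. The slightly delicate observation is that the two sums appearing naturally in $\Theta_m$ merge with a leading factor of $2$ which at first sight does not match the target form, and that this factor is exactly absorbed by the inequality $\vartheta_1 \geq 2$ into $r - 1$. The reindexing step that merges the two sums is the move most prone to off-by-one errors and deserves care; everything else is geometric summation.
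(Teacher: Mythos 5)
Your proof is correct and follows essentially the same route as the paper's: apply the $\lipschitz$-Lipschitz bound termwise to the two sums defining $\Theta_m$, control the position gaps $\|x_{k+1}-\tilde x_{k+1}\|$ via the geometric estimate of \Cref{lemma:s1}, sum the resulting geometric series, and absorb the factor $2\lipschitz^{1/2}$ using $\vartheta_1 \geq 2$. Your remark about the $\vartheta_1(h\lipschitz)$ versus $\vartheta_1(h\lipschitz^{1/2})$ mismatch is also consistent with the paper, whose own proof works throughout with $\vartheta_1(h\lipschitz^{1/2})$.
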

\begin{proof}
By \Cref{lemma:s1}, we have that, for any $(x,p,v)\in\rset^{3d}$,
\begin{equation}
\|\projq \circ \Lambda^{(m)}(x,p)-\projq \circ \Lambda^{(m)}(x,v)\|\leq \left\{1+h \lipschitz^{1 / 2} \vartheta_{1}\left(h \lipschitz^{1 / 2}\right)\right\}^{m} \lipschitz^{-1/2} \|p-v\| \eqsp.
\end{equation}
Denote $\Lambda_1^{(i)}= \projq \circ \Lambda^{(i)}$ and as a convention $\Lambda_1^{(0)}= \projq $. We obtain
\begin{align}
&\|\Theta_m(x,p)-\Theta_{m}(x,v)\|
\\
&\qquad \leq\lipschitz \left(\sum_{i=1}^{m-1}2(m+1-i)\|\Lambda_1^{(i-1)}(x,p)-\Lambda_1^{(i-1)}(x,v)\| + \|\Lambda_1^{(m-1)}(x,p)-\Lambda_1^{(m-1)}(x,v)\|\right)
\\
&\qquad\leq \lipschitz^{1/2} \left(\sum_{i=1}^{m-1}2(m+1-i) \left\{1+h \lipschitz^{1 / 2} \vartheta_{1}\left(h \lipschitz^{1 / 2}\right)\right\}^{i-1} +  \left\{1+h \lipschitz^{1 / 2} \vartheta_{1}\left(h \lipschitz^{1 / 2}\right)\right\}^{m-1}\right)\|p-v\|
\\
&\qquad\leq 2m \lipschitz^{1/2} \left\{\left(1+h \lipschitz^{1 / 2} \vartheta_{1}\left(h \lipschitz^{1 / 2}\right)\right)^{m}-1\right\}/\left(h\lipschitz^{1 / 2} \vartheta_{1}\left(h \lipschitz^{1 / 2}\right)\right)\|p-v\|
\\
&\qquad \leq (m/h) \left\{\left(1+h \lipschitz^{1 / 2} \vartheta_{1}\left(h \lipschitz^{1 / 2}\right)\right)^{m}-1\right\} \|p-v\| \eqsp,
\end{align}
as $ \vartheta_{1}\left(h \lipschitz^{1 / 2}\right)\geq 2$.
\end{proof}

We can now prove \Cref{theo:lip_nice}.
\begin{proof}
Note that for any $h>0$, $m\in\nsets$, we have
\begin{equation}
\left(1+h \lipschitz^{1 / 2} \vartheta_{1}\left(h \lipschitz^{1 / 2}\right)\right)^{m} -1 \leq \exp\left\{h \lipschitz^{1 / 2}m \vartheta_{1}\left(h \lipschitz^{1 / 2}m\right)\right\} -1\eqsp,
\end{equation}
as $\vartheta_1$ is non decreasing.
The function $c\to \rme^{c\vartheta_1(c)}$ is continuous and strictly increasing, from 0 to $\infty$ on $\rset$, thus   $\rme^{c\vartheta_1(c)} =2$ admits a unique solution,
for $c_0\approx 0.29$.
For $c<c_0$, we have $\rme^{c\vartheta_1(c)} <2$. In particular, if $h < h_0(\lipschitz, k) = c_0/\lipschitz^{1 / 2}m$, then
\begin{equation}
\left\{\left(1+h \lipschitz^{1 / 2} \vartheta_{1}\left(h \lipschitz^{1 / 2} k\right)\right)^{m}-1\right\} < 1\eqsp.
\end{equation}
We first prove that, for all $x_1 \in \rset^d$,
\begin{equation}
\label{eq:one-to-one-assertion}
\text{
the function $p \mapsto p + h/m \Theta_m(x_1,p)$ is one-to-one.
}
\end{equation} By \Cref{lemma:s3}, there exists $0< \kappa < 1$ such that for all $p,v \in \rset^d$,
\[
\| H_{y_1}(p) - H_{y_1}(v)\| \leq \frac{h}{m}
\| \Theta_m(x_1,p) - \Theta_m(x_1,v) \| \leq \kappa \| p - v \| \eqsp,
\]
where $H_{y_1}\colon p\mapsto y_1 - h/m \Theta_m(x_1,p)$. Hence, by the Banach fixed point theorem, for any $y_1 \in \rset^d$, $H_{y_1}$ has a unique fixed point $p_1$ and
\[
y_1= p_1 + \frac{h}{m} \Theta_m(x_1,p_1) \,
\]
showing~\eqref{eq:one-to-one-assertion}.
Hence
\[ p \mapsto G_{x_1}(p) = x_1 + mh p + h^2 \Theta_m(x_1,p)
\]
is one-to-one. Since in addition $\Theta_m$ is continuously differentiable and the Jacobian of $G_{x_1}$ is invertible, the function
$G_x$ is a $\rmc^1$ diffeomorphism.
\end{proof}

\subsubsection{Proof of~\eqref{eq:NICE-identity-ratio}}
\label{sec:proof:NICE-identity-ratio}
The result~\eqref{eq:NICE-identity-ratio} is directly linked to \Cref{theo:convergence_nice} which ensures convergence of the Markov kernel based on NICE proposals.
\begin{theorem}
\label{theo:convergence_nice}
Assume \Cref{assum:M_N_symmetric} and \Cref{assum:h_lipschitz_constant}. Then, the Markov kernel $K$ defined in~\eqref{eq:NICE_kernel} is a $\target$-reversible MH kernel with transition density 
\begin{equation}
q(x,y) = \varphi\bigl(G_x^{-1}(y)\bigr)\Jac_{G_x^{-1}}(y)\eqsp,
\end{equation}
and acceptance probability
\begin{equation}
\alpha(x,y) = \accfun\left(\frac{\target(y)q(y,x)}{\target(x)q(x,y)}\right)\eqsp.
\end{equation}
In addition, \Cref{theo:basic-cv-textbook} applies. 
\end{theorem}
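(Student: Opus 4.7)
The plan is to recognize that, under \Cref{assum:M_N_symmetric} and \Cref{assum:h_lipschitz_constant}, the marginal kernel $K$ reduces to a textbook Metropolis--Hastings kernel with the claimed proposal density and acceptance ratio, so \Cref{theo:basic-cv-textbook} can be invoked directly once its hypotheses are verified. The starting point is \Cref{theo:condition-diffeo}, which under these assumptions guarantees that $p\mapsto G_x(p)$ is a $\rmC^1$-diffeomorphism of $\rset^d$ for every $x\in\rset^d$. Applying the change of variable $y=G_x(p)$ in the definition~\eqref{eq:definition-K-alpha-1} of $K_\alpha$ then yields the density representation $K_\alpha(x,\rmd y)=\alpha(x,y)q(x,y)\rmd y$ with $q$ given by~\eqref{eq:definition-q-nice} and $\alpha$ given by~\eqref{eq:acceptance-NICE}.

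The main obstacle, and the content of the identity~\eqref{eq:NICE-identity-ratio}, is to show that the acceptance probability in~\eqref{eq:acceptance-NICE} coincides with the textbook MH ratio $\accfun\bigl(\pi_0(y)q(y,x)/[\pi_0(x)q(x,y)]\bigr)$, \ie~that $q(y,x)/q(x,y)=\varphi\bigl(H_x(G_x^{-1}(y))\bigr)/\varphi\bigl(G_x^{-1}(y)\bigr)$. Two structural properties will drive the calculation: $\transfdet^{-1}=\invol\circ\transfdet\circ\invol$, which is the content of \Cref{assum:M_N_symmetric}, and $\Jac_{\transfdet}\equiv 1$, inherited from the leap-frog form of each $\LF_i$. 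Writing $p=G_x^{-1}(y)$ and $q=H_x(p)$ so that $\transfdet(x,p)=(y,q)$, the involution identity gives $(x,p)=\transfdet^{-1}(y,q)=\invol\circ\transfdet(y,-q)$, and reading off the two components yields
\begin{equation}
G_y^{-1}(x)=-H_x\bigl(G_x^{-1}(y)\bigr)\eqsp,\qquad H_y\bigl(G_y^{-1}(x)\bigr)=-G_x^{-1}(y)\eqsp.
\end{equation}
To handle the Jacobian factor I would use a global integral identity. For any bounded measurable $f$, performing the change of variable $(y,q)=\transfdet(x,p)$ in $\int\pi_0(x)\varphi(p)f(x,G_x(p))\rmd x\rmd p$ (admissible since $\Jac_{\transfdet}=1$), followed by the substitution $q\mapsto -q$ and then $x=G_y(q)$, produces
\begin{equation}
\int\pi_0(x)q(x,y)f(x,y)\rmd x\rmd y=\int\pi_0(x)\varphi\bigl(H_y(G_y^{-1}(x))\bigr)\Jac_{G_y^{-1}}(x)f(x,y)\rmd x\rmd y\eqsp.
\end{equation}
Identifying integrands and using both the symmetry of $\varphi$ and $H_y(G_y^{-1}(x))=-G_x^{-1}(y)$ forces $\Jac_{G_x^{-1}}(y)=\Jac_{G_y^{-1}}(x)$. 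Substituting this back into $q(y,x)=\varphi\bigl(G_y^{-1}(x)\bigr)\Jac_{G_y^{-1}}(x)$ together with $\varphi(G_y^{-1}(x))=\varphi(H_x(G_x^{-1}(y)))$ yields exactly~\eqref{eq:NICE-identity-ratio}.

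Once~\eqref{eq:NICE-identity-ratio} is established, $\alpha(x,y)$ in~\eqref{eq:acceptance-NICE} is the textbook MH acceptance ratio attached to the proposal density $q$, so $K$ is $\pi_0$-reversible by the classical argument recalled in \Cref{sec:standard-reversible-MH}. It then remains to check the hypotheses of \Cref{theo:basic-cv-textbook}: $\pi_0$ is a non-degenerate density by assumption, and because $\varphi>0$ on $\rset^d$ and $G_x$ is a $\rmC^1$-diffeomorphism, $q(x,y)>0$ for every $(x,y)\in\rset^{2d}$. In particular, the irreducibility requirement that $q(x,y)>0$ whenever $\pi_0(y)>0$ is trivially satisfied, and the conclusions~\eqref{eq:averages-convergence} and~\eqref{eq:TVconvergence} follow.
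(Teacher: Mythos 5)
Your proof is correct and follows essentially the same route as the paper: \Cref{theo:condition-diffeo} supplies the diffeomorphism property, the involution identity $\transfdet^{-1}=\invol\circ\transfdet\circ\invol$ combined with $\Jac_{\transfdet}\equiv 1$ and the evenness of $\varphi$ yields \eqref{eq:NICE-identity-ratio}, and the resulting textbook MH structure gives $\pi_0$-reversibility and the convergence conclusions. The only (cosmetic) difference is that you obtain the Jacobian identity $\Jac_{G_x^{-1}}(y)=\Jac_{G_y^{-1}}(x)$ by a global change-of-variables argument and identification of integrands (hence only Lebesgue-a.e.\ and on $\{\pi_0>0\}$), whereas the paper derives it pointwise from the chain rule applied to $\transfdet\bigl(y,G_y^{-1}(x)\bigr)=\bigl(x,-G_x^{-1}(y)\bigr)$.
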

\begin{proof}
Note that for all $(x,p) \in \rset^{2d}$,
\begin{equation}
\label{eq:basic-relation-transfdet}
\transfdet^{-1} \circ \transfdet(x,p)= \transfdet^{-1}\bigl(G_x(p),H_x(p)\bigr) = (x,p),
\end{equation}
where we have used $G_x(p)= \projq \circ \transfdet(x,p)$ and $H_x(p)= \projp \circ \transfdet(x,p)$.
Under \Cref{assum:h_lipschitz_constant}, for any $x \in \rset^d$, $p \mapsto G_x(p)$ is a diffeomorphism.
Then, plugging $y= G_x(p)$, $p= G_x^{-1}(y)$ in \eqref{eq:basic-relation-transfdet}, we obtain
\begin{equation}
\label{eq:basic-relation-transfdet-1}
\transfdet^{-1}\bigl(y,H_x \circ G_x^{-1}(y)\bigr)= \bigl(x, G_x^{-1}(y)\bigr) \eqsp.
\end{equation}
Under \Cref{assum:M_N_symmetric},  $\invol \circ \transfdet \circ \invol = \transfdet^{-1}$. Then,  we get
\begin{equation}
\label{eq:basic-relation-transfdet-2}
\transfdet\bigl(y,-H_x \circ G_x^{-1}(y)\bigr) = \bigl(x,-G_x^{-1}(y)\bigr) \eqsp.
\end{equation}
Hence $G_y\bigl(-H_x \circ G_x^{-1}(y)\bigr) = x$ or equivalently, $-H_x \circ G_x^{-1}(y)= G_y^{-1}(x)$. Since $\varphi$ is even, this implies
\begin{equation}
\label{eq:basic-relation-transfdet-3}
\varphi\bigl(H_x \circ G_x^{-1}(y)\bigr)= \varphi\bigl(G_y^{-1}(x)\bigr) \eqsp.
\end{equation}
Recall that $\Jac_{\transfdet}(x,p)= 1$, for all $(x,p) \in \rset^{2d}$. Using again $-H_x \circ G_x^{-1}(y)= G_y^{-1}(x)$ in~\eqref{eq:basic-relation-transfdet-3}, we get
\begin{equation}
\transfdet\bigl(y,G_y^{-1}(x)\bigr) = \bigl(x,-G_x^{-1}(y)\bigr) \eqsp.
\end{equation}
Using the chain rule for Jacobian matrices, we get
\begin{equation}
\label{eq:chain-rule-jacobian}
\Jac_{G_y^{-1}}(x)= \Jac_{G_x^{-1}}(y).
\end{equation}
Combining~\eqref{eq:basic-relation-transfdet-3} and~\eqref{eq:chain-rule-jacobian} leads to \eqref{eq:NICE-identity-ratio} by noting that
\begin{equation}
\label{eq:NICE-identity-ratio-sup}
\frac{q(y,x)}{q(x,y)}= \frac{\varphi\bigl(H_x \circ G_x^{-1}(y)\bigr)}{\varphi\bigl(G_x^{-1}(y)\bigr)} \eqsp.
\end{equation}
Hence, the acceptance ratio $\alpha$ coincides with the standard MH ratio and the marginal Markov kernel $K$ is thus $\pi_0$-reversible. We also note that $q(x,y)$ satisfies the conditions of \Cref{theo:irred_and_co} given the assumptions on $\varphi$ and $G_x$.
Moreover, $K$ is $\target$-irreducible, by \Cref{theo:condition-diffeo}.
Then, \Cref{theo:basic-cv-textbook} applies.
\end{proof}

\subsubsection{Implementation details}
Algorithm~\ref{alg:NICE_full_refresh} presents the methodology for sampling according to the kernel $K$~\eqref{eq:marginal-NICE-kernel}, which is $\target$-reversible.
\begin{algorithm}[!ht]
    \caption{NICE with full refreshment at each iteration}
    \label{alg:NICE_full_refresh}
    \begin{algorithmic}
      \STATE {\bfseries Input:} Transformation $\transfdet$ and momentum-flip involution $\invol$, acceptance function $\accfun$, unnormalized target density $\pi$, density $\varphi$ of momentum $p$, initial point $x_0$, number of steps $N$
      \FOR{$i=0$ {\bfseries to} $N-1$}
        \STATE Draw $q_{i}\sim \varphi$;
        \STATE Compute proposal $(y_{i+1}, q_{i+1}) =\transfdet(x_i, q_i)$;
        \STATE Draw $B_i\sim\Ber(a_i)$ where
        \[a_i = \accfun\left( \frac{\target(y_{i+1})\varphi(q_{i+1})}{\target(x_i)\varphi(q_i)}\right)\eqsp;
        \]
        \IF{$B_i\equiv 1$}
          \STATE Set $x_{i+1}= y_{i+1}$;
        \ELSE
          \STATE Set $x_{i+1}= x_{i}$;
        \ENDIF
      \ENDFOR
      \STATE Return $(x_{0:N})$
    \end{algorithmic}
  \end{algorithm}
\newpage
In order to recover persistency, as discussed in \Cref{subsubsec:onpersistency}, we consider the mixture of kernels $T$ \eqref{eq:partial_nice}; see Algorithm~\ref{alg:NICE_randomized}. 

\begin{algorithm}[!ht]
    \caption{NICE with randomized full refreshment}
  \label{alg:NICE_randomized}
    \begin{algorithmic}
      \STATE {\bfseries Input:} Transformation $\transfdet$ and momentum-flip involution $\invol$, acceptance function $\accfun$, unnormalized target $\pi$, density $\varphi$ of momentum $p$, probability of refreshment $\partialrefresh$, initial point $x_0$ and initial momentum $p_0$, number of steps $N$;
      \FOR{$i=0$ {\bfseries to} $N-1$}
      \STATE Draw $R_i\sim\Ber(\partialrefresh)$;
      \IF{$R_i \equiv 0$}
        \STATE Compute proposal $(y_{i+1}, q_{i+1}) =\transfdet(x_i, p_i)$; \hfill\emph{\#\#\# No refreshment, deterministic dynamics}
        \STATE Draw $B_i\sim\Ber(a_i)$ where
        \[
          a_i = \accfun\left( \frac{\target(y_{i+1})\varphi(q_{i+1})}{\target(x_i)\varphi(q_i)}\right)\eqsp;
        \]
        \IF{$B_i \equiv 1$}
          \STATE Set $(x_{i+1},p_{i+1})= (y_{i+1}, q_{i+1})$;\hfill\emph{\#\#\# accept the move and keep the momentum}
        \ELSE
        \STATE Set $(x_{i+1}, p_{i+1})= \invol(x_{i}, p_i)$;\hfill{ \emph{\#\#\# reject the move and flip the momentum} }
        \ENDIF
      \ELSE
        \STATE Sample $q_{i}\sim \varphi$; \hfill\emph{\#\#\# Full refreshment of the momentum to update the position}
        \STATE Compute proposal $(y_{i+1}, q_{i+1}) =\transfdet(x_i, q_i)$;
        \STATE Draw $B_i\sim\Ber(a_i)$ where
        \[a_i = \accfun\left( \frac{\target(y_{i+1})\varphi(q_{i+1})}{\target(x_i)\varphi(q_i)}\right)\eqsp;
        \]
         \STATE Draw $p_{i+1}\sim\varphi$;
        \IF{$B_i\equiv 1$}
          \STATE Set $x_{i+1} = y_{i+1}$;
        \ELSE
          \STATE Set $x_{i+1}= x_{i}$;
        \ENDIF
      \ENDIF
      \ENDFOR
      \STATE Return $(x_{0:N})$
    \end{algorithmic}
  \end{algorithm}
\newpage

  \begin{algorithm}[!ht]
    \caption{NICE with persistence}
    \label{alg:NICE_persistent}
    \begin{algorithmic}
      \STATE {\bfseries Input:} Transformation $\transfdet$ and momentum-flip involution $\invol$, acceptance function $\accfun$, unnormalized target $\pi$, density $\varphi$ of momentum $p$, hyperparameter $\beta$, initial point $x_0$ and initial momentum $p_0$, number of steps $N$
      \FOR{$i=0$ {\bfseries to} $N-1$}
        \STATE Draw $u_{i}\sim \varphi$ and set $q_i = \beta p_i + \sqrt{1-\beta^2} u_i$;
        \STATE Compute proposal $(y_{i+1}, q_{i+1}) =\transfdet(x_i, q_i)$;
        \STATE Draw $B_i\sim\Ber(a_i)$ where
        \[a_i = \accfun\left( \frac{\target(y_{i+1})\varphi(q_{i+1})}{\target(x_i)\varphi(q_i)}\right)\eqsp;
        \]
        \IF{$B_i\equiv 1$}
          \STATE Set $(x_{i+1},p_{i+1})= (y_{i+1}, q_{i+1})$;\hfill\emph{\#\#\# accept the move and keep the momentum}
        \ELSE
          \STATE Set $(x_{i+1}, p_{i+1})= \invol(x_{i}, q_i)$;\hfill{ \emph{\#\#\#reject the move and flip the momentum} }
        \ENDIF
      \ENDFOR
      \STATE Return $(x_{0:N})$
    \end{algorithmic}
  \end{algorithm}

\subsection{Proof of \eqref{eq:alpha_lifted}}
\label{sec:proof:alpha_lifted}
By~\eqref{eq:lifted_density}, we get
\begin{align}
q\bigl((x,v), (y, w)\bigr) &=  \{\probav \indi{v}(w) +(1-\probav) \indi{-v}(w)\}  q_w(x,y) \eqsp, \\
q\bigl(\invol(y,w), \invol(x,v)\bigr) &= q\bigl((y, -w) ,  (x,-v)\bigr) \\
&= \{\probav \indi{-w}(-v) +(1-\probav) \indi{w}(-v)\} q_{-w}(y,x) \\
&= \{\probav \indi{v}(w) +(1-\probav) \indi{-v}(w)\} q_{-w}(y,x) \eqsp,
\end{align}
which implies that
\begin{equation}
\frac{q\bigl(\invol(y,w), \invol(x,v)\bigr)}{q\bigl((x,v), (y, w)\bigr)}= \frac{q_{-w}(y,x)}{q_w(x,y)} \eqsp.
\end{equation}
The proof follows from~\eqref{eq:alpha_density}.

\subsection{Implementation details of \Cref{example:cIT}}
\label{subsec:cIT_implementation}

We define here a probability of refresh $\partialrefresh$.
At each iteration, we refresh the direction with probability $\partialrefresh$, in which case we draw $v\sim\Uniform\{-1,1\}$. With this definition, we can reinterpret the parameter $\probav$ \eqref{eq:lifted_density}
{\small\begin{equation}
\label{eq:lifted_density}
q\bigl((x,v), (y, w)\bigr)= \bigl\{\probav \indi{v}(w) +(1-\probav) \indi{-v}(w)\bigr\}  q_w(x,y)\eqsp,
\end{equation}}
as $\partialrefresh = 2\probav$.
In particular, we can write the lifted algorithm with randomized direction refresh in Algorithm~\ref{alg:lifted Markov Sampling}.
\begin{algorithm}[!ht]
    \caption{Lifted  Markov sampling}
    \label{alg:lifted Markov Sampling}
	\begin{algorithmic}
      \STATE {\bfseries Input:} Transformations $G_{1,x}, G_{-1,x}$, acceptance function $\accfun$, unnormalized target $\pi$, density $\varphi$ of momentum $p$, initial point $x_0$ and initial direction $v_0$, probability of refreshment $\partialrefresh$, number of steps $N$
      \FOR{$i=0$ {\bfseries to} $N-1$}
        \STATE Draw $R_i\sim\Ber(\partialrefresh)$;
        \IF{$R_i\equiv 1$}
        \STATE Refresh direction $w_i \sim \Uniform\{-1,1\}$;
        \ELSE
        \STATE Keep direction $w_i=v_i$;
        \ENDIF
        \STATE Draw $q_i\sim \varphi$;
         \STATE Compute proposal $y_{i+1} =G_{w_i, x_i}(q_i)$;
        \STATE Draw $B_i\sim\Ber(a_i)$ where
        \[a_i = \accfun\left( \frac{\target(y_{i+1})\varphi\bigl({G_{-w_i,y_{i+1}}^{-1}}(x_i)\bigr) \Jac_{G_{-w_i,y_{i+1}}^{-1}}(x_i) }{\target(x_i)\varphi\bigl(G_{w_i, x_i}^{-1}(y_{i+1})\bigr)\Jac_{G_{w_i,x_{i}}^{-1}}(y_{i+1})}\right) \eqsp;
        \]
        \IF{$B_i\equiv 1$}
          \STATE Set $(x_{i+1},v_{i+1})= (y_{i+1}, w_i)$;\hfill\emph{\#\#\# accept the move and keep the direction}
        \ELSE
          \STATE Set $(x_{i+1}, v_{i+1})= (x_{i}, - w_i)$;\hfill{ \emph{\#\#\#reject the move and flip the direction} }
        \ENDIF
      \ENDFOR
      \STATE Return $(x_{0:N})$
    \end{algorithmic}
  \end{algorithm}

\subsection{Proof of \Cref{lem:altern_accfun}}
\label{sec:proof:altern_accfun}
From~\eqref{eq:alpha_lifted} and~\eqref{eq:def_q_v}, we get
\begin{align}
  \alpha\bigl((x,v),(y,w)\bigr)
  &= \accfun\left(\frac{q_{-w}(y,x) \target(y)}{q_w(x,y)\target(x)}\right) \\
  \label{eq:key-identity-altern}
  &= \accfun\left(\frac{\target(y) \varphi\bigl(\tG_{-w,y}^{-1}(x)\bigr) \Jac_{\tG_{-w,y}^{-1}}(x)}
  {\target(x)  \varphi\bigl(\tG_{w,x}^{-1}(y)\bigr) \Jac_{\tG_{w,x}^{-1}}(y)} \right) = \accfun\left(\frac{\mu\bigl(y,\tG_{-w,y}^{-1}(x)\bigr) \Jac_{\tG_{-w,y}^{-1}}(x)}
  {\mu\bigl(x,\tG_{w,x}^{-1}(y)\bigr) \Jac_{\tG_{w,x}^{-1}}(y)} \right)\eqsp.
\end{align}
Set  $\tH_{w,x}(p)= \projp \circ \transflif^w(x,p)$. Note that
\[
  \transflif^w(x,p)= \bigl(\tG_{w,x}(p), \tH_{w,x}(p)\bigr) \eqsp.
\]
Hence, we obtain
\[
  \transflif^{-w}\bigl(y,\tG_{-w,y}^{-1}(x)\bigr) = \bigl(\tG_{-w,y} \circ \tG_{-w,y}^{-1}(x), \tH_{-w,y} \circ \tG_{-w,y}^{-1}(x)\bigr)
  = \bigl(x, \tH_{-w,y} \circ \tG_{-w,y}^{-1}(x)\bigr) \eqsp,
\]
which implies,
\begin{align}
\label{eq:first-identity-altern}
\bigl(y,\tG_{-w,y}^{-1}(x)\bigr)
&= \transflif^{w}\bigl(x, \tH_{-w,y} \circ \tG_{-w,y}^{-1}(x)\bigr) \eqsp, \\
&= \bigl(\tG_{w,x} \circ  \tH_{-w,y} \circ \tG_{-w,y}^{-1}(x), \tH_{w,x} \circ  \tH_{-w,y} \circ \tG_{-w,y}^{-1}(x)\bigr) \eqsp.
\end{align}
This identity in particular shows that $y= \tG_{w,x} \circ \tH_{-w,y} \circ \tG_{-w,y}^{-1}(x)$ or equivalently $\tG_{w,x}^{-1}(y)= \tH_{-w,y} \circ \tG_{-w,y}^{-1}(x)$, which used in~\eqref{eq:first-identity-altern} establishes
\begin{equation}
\label{eq:second-identity-altern}
\bigl(y,\tG_{-w,y}^{-1}(x)\bigr)= \transflif^{w}\bigl(x, \tG_{w,x}^{-1}(y)\bigr) \eqsp.
\end{equation}
Set $A_w(x,y) = \bigl(y,\tG_{-w,y}^{-1}(x)\bigr)$ and $B_w(x,y)= \bigl(x,\tG_{w,x}^{-1}(y)\bigr)$. Note that
$\Jac_{A_w}(x,y)= \Jac_{\tG_{-w,y}^{-1}}(x)$, $\Jac_{B_w}(x,y)= \Jac_{\tG_{w,x}^{-1}}(y)$ and by~\eqref{eq:second-identity-altern} and the chain rule
\[
\Jac_{A_w}(x,y)= \Jac_{\transflif^w}\bigl(B_w(x,y)\bigr) \Jac_{B_w}(x,y) \eqsp,
\]
which implies
\begin{equation}
\label{eq:relation-jacobian-altern}
\Jac_{\transflif^{w}}\bigl(x, \tG_{w,x}^{-1}(y)\bigr) = \frac{\Jac_{\tG_{-w,y}^{-1}}(x)}{\Jac_{\tG_{w,x}^{-1}}(y)}.
\end{equation}
The proof of \Cref{lem:altern_accfun} is concluded by plugging~\eqref{eq:second-identity-altern} and~\eqref{eq:relation-jacobian-altern} into~\eqref{eq:key-identity-altern}.

\subsection{Lifted acceptance probability with deterministic proposals}
\label{subsec:proof_lifted_deterministic}
In this case $\transflifdet(x,p,v)= (\transflif^v(x,p),v)$, $(x,p) \in \rset^{2d}$, $s \in \msv$. Clearly, $\transflifdet^{-1}(x,p,v)= (\transflif^{-v}(x,p),v)$ and it is easily checked that
$\transflifdet^{-1}= \invol \circ \transflifdet \circ \invol$. Denote
\begin{equation}
\label{eq:definition-pi}
  \pi\bigl(\rmd(x,p,v)\bigr) = \pi_0(x) \varphi(p) \rmd x \rmd p \rho(\rmd v) =  \mu(x,p) \rmd x \rmd p \rho(\rmd v) \eqsp.
\end{equation}
To compute the acceptance probability~\eqref{eq:generalized-MH-deterministic}, we need to evaluate the density $k(z)= \rmd \mu/ \rmd \lambda(z)$, where
\begin{equation}
\label{eq:definition-lambda}
\lambda= \pi + \pushf{\transflifdet^{-1}}{\pi}.
\end{equation}
Let $f\colon \rset^d \times \rset^d \times \msv \to \rset_+$ be a measurable function.
We get
\begin{align}
\lambda(f)
&= \int f(x,p,v) \mu(x,p) \rmd x \rmd p \rho(\rmd v) + \int f(\transflif^{-v}(x,p),v) \mu(x,p) \rmd x \rmd p \rho(\rmd v) \\
&= \int f(x,p,v) \mu(x,p) \rmd x \rmd p \rho(\rmd v) + \int f(x',p',v) \mu\bigl(\transflif^{v}(x',p')\bigr) \Jac_{\transflif^v}(x',p')  \rmd x' \rmd p' \rho(\rmd v) \eqsp.
\end{align}
Therefore, we get
\begin{equation}
\frac{\rmd \lambda}{\rmd \Leb_{2d} \otimes \rho}(x,p,v)= \mu(x,p) +
\mu\bigl(\transflif^{v}(x,p)\bigr) \Jac_{\transflif^v}(x,p) \eqsp.
\end{equation}
This implies that, for all $(x,v) \in \rset^d \times \msv$,
\begin{equation}
\label{eq:expression-k-deterministic}
k(x,p,v)= \frac{\mu(x,p)}{\mu(x,p)+ \mu\bigl(\transflif^v(x,p)\bigr)\Jac_{\transflif^v}(x,p)} \eqsp.
\end{equation}
Since  $\invol \circ \transflifdet (x,p,v)= (\transflif^v(x,p),-v)$, we obtain
\begin{align}
k\bigl(\invol \circ \transflifdet(x,p,v)\bigr)&= \frac{\mu\bigl(\transflif^v(x,p)\bigr)}{\mu\bigl(\transflif^v(x,p)\bigr) + \mu(x,p) \Jac_{\transflif^{-v}}\bigl(\transflif^v(x,p)\bigr)} \\
&= \frac{ \mu\bigl(\transflif^v(x,p)\bigr) \Jac_{\transflif^v}(x,p)}{\mu(x,p) + \mu\bigl(\transflif^v(x,p)\bigr) \Jac_{\transflif^v}(x,p)} \eqsp,
\end{align}
where we have used $\Jac_{\transflif^v}(x,p) = 1/ \Jac_{\transflif^{-v}}\bigl(\transflif^v(x,p)\bigr)$. Therefore, the acceptance probability is given by
\begin{equation}
\label{eq:acceptance-probability-determinist-lif-sup}
\alpha\bigl((x,v),(y,w)\bigr)=
\begin{cases}
\accfun\left( \frac{\mu\bigl(\transflif^v(x,p)\bigr) \Jac_{\transflif^v}(x,p)}{\mu(x,p)} \right) \, , \text{ if } \mu(x,p) > 0, (y,w)= (\transflif^v(x,p),v), \\
1,  \, \text{ if } \mu(x,p)=0 \, \text{ or } \, (y,w) \ne (\transflif^v(x,p),v)  \eqsp.
\end{cases}
\end{equation}

\subsection{L2HMC Algorithms}
\label{subsec:l2hmc_implem}
In this section, we discuss the sampling algorithms associated to the L2HMC kernel, \Cref{ex:l2hmc}.
Again, we first describe a version of this algorithm in which the momentum is fully refreshed at each iteration. 
This is a lifted version of the original L2HMC algorithm \cite{levy:hoffman:sohl-dickstein:2017}, because we keep the direction variable at each iteration instead of refreshing the direction at each iteration. 
Consider the following assumption. 
\begin{assumptionL2}
\label{assumption:G_x_diffeo}
For all $v,x \in \{-1,+1\} \times \rset^d$, 
\begin{equation}
\label{eqS:definition-mapping-G}
G_{v,x} \colon p \mapsto  \projq \circ \transflif^v(x,p)
\end{equation}
is a $C^1$-diffeomorphism.
\end{assumptionL2} 
As in the NICE case, establishing \Cref{assumption:G_x_diffeo} requires conditions on the mapping $\Psi$ defining the L2MHC transitions, which is subject to a future work.
Under \Cref{assumption:G_x_diffeo}, \Cref{lem:altern_accfun} shows that the lifted L2HMC with full momentum refresh satisfies the assumption of \Cref{propo:lifted_density}. 
We may therefore apply \Cref{theo:irred_and_co} to show convergence of the algorithm in the sense of \Cref{theo:basic-cv-textbook}.

\begin{algorithm}[!ht]
    \caption{Original L2HMC}
      \label{alg:L2HMC_original}
    \begin{algorithmic}
      \STATE {\bfseries Input:} Transformation $\transflif$, acceptance function $\accfun$, unnormalized target $\pi$, density $\varphi$ of momentum $p$, initial point $x_0$, number of steps $N$
      \FOR{$i=0$ {\bfseries to} $N-1$}
        \STATE Refresh momentum $q_i \sim \varphi$ and direction $v_i \sim \Uniform\{-1,1\}$;
        \STATE Compute proposal $(y_{i+1}, q_{i+1}) =\transflif^{v_i}(x_i, q_i)$;
        \STATE Draw $B_i \sim\Ber(a_i)$ where 
        \[a_i = \accfun\left( \frac{\pi(y_{i+1})\varphi(q_{i+1})}{\pi(x_i)\varphi(q_i)}\Jac_{\transflif^{v_i}}(x_i, q_i)\right)\eqsp;
        \]
        \IF{$B_i\equiv 1$}
          \STATE Set $x_{i+1}=y_{i+1}$;
        \ELSE
          \STATE Set $x_{i+1}=x_{i}$;
        \ENDIF 
      \ENDFOR
      \STATE Return $(x_{0:N})$
    \end{algorithmic}
  \end{algorithm}

\begin{algorithm}[!ht]
    \caption{Lifted L2HMC with full momentum refreshment}
      \label{alg:L2HMC_full}
    \begin{algorithmic}
      \STATE {\bfseries Input:} Transformation $\transflif$, acceptance function $\accfun$, unnormalized target $\pi$, density $\varphi$ of momentum $p$, probability of direction refreshment $\partialrefresh$, initial point $x_0$ and initial direction $v_0$, number of steps $N$
      \FOR{$i=0$ {\bfseries to} $N-1$}
         \STATE Draw $R_i\sim\Ber(\partialrefresh)$;
        \IF{$R_i\equiv 1$}
        \STATE Refresh direction $w_i\sim\Uniform\{-1,1\}$;
        \ELSE
        \STATE Keep direction $w_i=v_i$;
        \ENDIF
      	\STATE Refresh momentum $q_i\sim \varphi$;
        \STATE Compute proposal $(y_{i+1}, q_{i+1}) =\transflif^{w_i}(x_i, q_i)$;
        \STATE Draw $B_i \sim\Ber(a_i)$ where
        \[a_i = \accfun\left( \frac{\pi(y_{i+1})\varphi(q_{i+1})}{\pi(x_i)\varphi(q_i)}\Jac_{\transflif^{w_i}}(x_i, q_i)\right)\eqsp;
        \]
        \IF{$B_i\equiv 1$}
          \STATE Set $(x_{i+1}, v_{i+1})= (y_{i+1}, w_i)$; \hfill\emph{\#\#\# accept the move and keep the direction}
        \ELSE
          \STATE Set $(x_{i+1}, v_{i+1})= (x_{i},  -w_i)$;\hfill{ \emph{\#\#\#reject the move and flip the direction} }
        \ENDIF
      \ENDFOR
      \STATE Return $(x_{0:N})$
    \end{algorithmic}
  \end{algorithm}

As said above, the original L2HMC algorithm (Algorithm~\ref{alg:L2HMC_original}) refreshes at each iteration both the direction and the momentum, whereas the lifted algorithm keeps the direction and refreshes only the momentum. Similar to the NICE case, 
define the marginal kernel, acting on the position only:
\[
  K(x,\rmd y)= K_\alpha(x,\rmd y) + \bigl(1- \bar{\alpha}(x)\bigr) \updelta_x(\rmd y) \eqsp,
\]
where $\bar{\alpha}(x)= K_\alpha(x,\rset^d)$ and for a measurable nonnegative function $f$,
\[
  K_\alpha f(x)= \iint f\bigl(\projq \circ \transflif^v(x,p)\bigr) \bar{\alpha}(x,p,v)  Q\bigl((x,p,v),\rmd(y,q,w)\bigr) \varphi(p) \rmd p \rho (\rmd v) \eqsp,
\]
where 
\begin{equation}
\label{eqS:transition-kernel-deterministic}
  Q\bigl((x,p,v),\rmd(y,q,w)\bigr) = \updelta_{\transflif^v(x,p)}\bigl(\rmd (y,q)\bigr) \updelta_{v}(\rmd w) \eqsp.
\end{equation}
and
\begin{equation}
\label{eqS:acceptance-probability-determinist-lif}
\bar{\alpha}(x,p,v) = \accfun\left( \mu\bigl(\transflif^v(x,p)\bigr) /{\mu(x,p) \;  \Jac_{\transflif^v}(x,p)} \right) \eqsp,
\end{equation}
Recall that the Markov kernel $Q_\alpha\bigl(x,p,v; \rmd( y,q,w)\bigr)= \bar{\alpha}(x,p,v)   Q\bigl((x,p,v),\rmd(y,q,w)\bigr)$ is $(\pi,\involk)$-reversible. Let $f$ and $g$ be two positive measurable functions on $\rset^d$. Since  $Q_\alpha$ is $(\pi,\involk)$-reversible and $\rho$ is symmetric, we get
\begin{align}
\label{eq:reversibility-K-alpha-1}
\int \target(\rmd x) K_\alpha f(x) g(x)
&= \int \pi\bigl(\rmd (x,p,v)\bigr) Q_\alpha\bigl((x,p,v);\rmd(y,q,w)\bigr) f(y) g(x)  \\
&= \int \pi\bigl(\rmd (y,q,w)\bigr) \involk Q_\alpha \involk g(y) f(y) \\
&=^{(3)} \int \pi\bigl(\rmd (y,q,w)\bigr) \bar{\alpha}(y,q,-w) Q\bigl((y,q,-w); \rmd(x,p,v)\bigr) g(x) f(y) \\
&=^{(4)} \int \pi\bigl(\rmd (y,q,w)\bigr) \bar{\alpha}(y,q,w) Q\bigl((y,q,w); \rmd(x,p,v)\bigr) g(x) f(y) \\
&=^{(5)} \int \target(\rmd x) K_\alpha g(y) f(y)
\end{align} 
where we have used in $(3)$ that $\involk Q_\alpha \involk g(y,q,w)= \bar{\alpha}(y,q,-w) \int Q\bigl((y,q,-w); \rmd(x,p,v)\bigr) g(x)$, in $(4)$ the symmetry of $\rho$ and finally in $(5)$ the definition of $K_\alpha$. Hence the 
L2HMC kernel is $\target$-reversible. 
Moreover, note that under \Cref{assumption:G_x_diffeo}, we obtain
\begin{equation}
\label{eqS:MH-expression-K-alpha}
K_\alpha f(x)= \int \left\{ \int \bar{\alpha}(x,G_{v,x}^{-1}(y),v) \varphi\bigl(G_{v,x}^{-1}(y)\bigr) \Jac_{G_{v,x}^{-1}}(y) \rho(\rmd v) \right\} f(y) \rmd y \eqsp.
\end{equation}
Denoting by $q_v(x,y)$ the transition density $q_v(x,y) = \varphi\bigl(G_{v,x}^{-1}(y)\bigr) \Jac_{G_{v,x}^{-1}}(y) $ and then setting  $q(x,y)= \int q_v(x,y) \rho(\rmd v)$, we finally get 
\begin{equation}
\label{eqS:MH-expression-K-alpha}
K_\alpha f(x) = \int \alpha(x,y) q(x,y) f(y) \rmd y 
\end{equation}
with
\begin{equation}
\label{eqS:acceptance-ratio-L2HMC}
\alpha(x,y)= \frac{\int \bar{\alpha}(x,G_{v,x}^{-1}(y),v) \varphi\bigl(G_{v,x}^{-1}(y)\bigr) \Jac_{G_{v,x}^{-1}}(y) \rho(\rmd v)}{\int \varphi\bigl(G_{v,x}^{-1}(y)\bigr) \Jac_{G_{v,x}^{-1}}(y) \rho(\rmd v)}
\end{equation}
Write now, following~\citep{tierney:1994}, 
\begin{equation}
r(x,y) = \frac{\target(x) q(x,y)}{\target(y)q(y,x)}\eqsp.
\end{equation}
Moreover, by \Cref{lem:altern_accfun}, we can write 
\begin{equation}
\bar{\alpha}(x,G_{v,x}^{-1}(y),v) = \accfun\left(\frac{\target(y)q_{-v}(y,x)}{\target(x)q_v(x,y)} \right)\eqsp.
\end{equation}
In that case, we have
\begin{align}
\alpha(x,y) \target(x) q(x,y) &= \int \target(x) q_v(x,y) \accfun\left(\frac{\target(y)q_{-v}(y,x)}{\target(x)q_v(x,y)} \right) \rho(\rmd v)\\
&= \int \target(y) q_{-v}(y,x) \accfun\left(\frac{\target(x)q_v(x,y)}{\target(y)q_{-v}(y,x)} \right) \rho(\rmd v)\\
&= \int \target(y) q_{w}(y,x) \accfun\left(\frac{\target(x)q_{-w}(x,y)}{\target(y)q_{w}(y,x)} \right) \rho(\rmd w)\eqsp,
\end{align}
where we have use the fact that $t\accfun(1/t) =  \accfun(t)$ and the change of variable $w=-v$.
Then, 
\begin{equation}
\alpha(x,y) \target(x) q(x,y) = \alpha(y,x) \target(y) q(y,x)\eqsp.
\end{equation}
We thus have $\alpha(x,y) r(x,y) = \alpha(y,x)$, which proves by \citep[Theorem 2]{tierney:1994} that the ratio $\alpha$ is exactly the classical MH ratio satisfying the detailed balance condition.

To retrieve persistency, we can use as for  the NICE algorithm a mixture of a deterministic L2HMC move and a full independent refreshment of the position, momentum and the direction; see Algorithm~\ref{alg:L2HMC_lifted}.
 \newpage

\begin{algorithm}[!ht]
    \caption{Lifted L2HMC with randomized full refreshment}
     \label{alg:L2HMC_lifted}
 \begin{algorithmic}
      \STATE {\bfseries Input:} Transformation $\transflif$, acceptance function $\accfun$, unnormalized target $\pi$, density $\varphi$ of momentum $p$, probability of refreshment $\partialrefresh$, initial point $x_0$, initial momentum $p_0$ and initial direction $v_0$, number of steps $N$
      \FOR{$i=0$ {\bfseries to} $N-1$}
      \STATE Draw $R_i\sim \Ber(\partialrefresh)$;
      \IF{$R_i\equiv 0$}
        \STATE Compute proposal $(y_{i+1}, q_{i+1}) =\transflif^{v_i}(x_i, p_i)$; \emph{\#\#\#No refreshment, deterministic dynamics}
        \STATE Draw $B_i\sim\Ber(a_i)$ where
        \[a_i = \accfun\left( \frac{\pi(y_{i+1})\varphi(q_{i+1})}{\pi(x_i)\varphi(p_i)}\Jac_{\transflif^{v_i}}(x_i, p_i)\right)\eqsp;
        \]
        \IF{$B_i\equiv 1$}
          \STATE Set $(x_{i+1},p_{i+1}, v_{i+1})= (y_{i+1}, q_{i+1}, v_i)$; \hfill{ \emph{\#\#\# accept the move and keep the direction}}
        \ELSE
          \STATE Set $(x_{i+1}, p_{i+1}, v_{i+1})= (x_{i}, p_i, -v_i)$; \hfill{ \emph{\#\#\#reject the move and flip the direction} }
         \ENDIF
      \ELSE
        \STATE Draw $q_i \sim \varphi$, $w_i \sim \Uniform\{-1,1\}$; \hfill \emph{\#\#\# refresh independently the momentum and the direction}
        \STATE Compute proposal $(y_{i+1}, q_{i+1}) =\transflif^{w_i}(x_i, q_i)$;
        \STATE Draw $B_i\sim\Ber(a_i)$ where
        \[a_i = \accfun\left( \frac{\pi(y_{i+1})\varphi(q_{i+1})}{\pi(x_i)\varphi(q_i)}\Jac_{\transflif^{w_i}}(x_i, q_i)\right)\eqsp;
        \]
        \STATE Draw $p_{i+1} \sim \varphi$, $v_{i+1} \sim \Uniform\{-1,1\}$; \hfill \emph{\#\#\# refresh the momentum and the direction}
        \IF{$B_i\equiv 1$}
          \STATE Set $x_{i+1} = y_{i+1}$;
        \ELSE
          \STATE Set $x_{i+1} = x_{i}$;
        \ENDIF
      \ENDIF
      \ENDFOR
      \STATE Return $(x_{0:N})$
    \end{algorithmic}
  \end{algorithm}
Much like the persistent HMC algorithm, we may also design a lifted persistent HMC algorithm in which, at each iteration, we keep the direction and partially refresh the momentum using an autoregressive scheme; see Algorithm~\ref{alg:L2HMC_persistent}. 
\newpage

 \begin{algorithm}[!ht]
    \caption{Lifted L2HMC with persistence}
    \label{alg:L2HMC_persistent}
    \begin{algorithmic}
      \STATE {\bfseries Input:} Transformation $\transflif$, acceptance function $\accfun$, unnormalized target $\pi$, density $\varphi$ of momentum $p$, hyperparameter $\beta$, initial point $x_0$, initial momentum $p_0$ and initial direction $v_0$, number of steps $N$
      \FOR{$i=0$ {\bfseries to} $N-1$}
      	\STATE Sample $u_i\sim \varphi$ and refresh momentum $q_i = \beta p_i + \sqrt{1- \beta^2} u_i$; \hfill \emph{\#\#\# partially update the momentum}
        \STATE Compute proposal $(y_{i+1}, q_{i+1}) =\transflif^{v_i}(x_i, q_i)$; 
        \STATE Draw $B_i\sim\Ber(a_i)$ where
        \[a_i = \accfun\left( \frac{\pi(y_{i+1})\varphi(q_{i+1})}{\pi(x_i)\varphi(q_i)}\Jac_{\transflif^{v_i}}(x_i, q_i)\right)\eqsp;
        \]
        \IF{$B_i\equiv 1$}
          \STATE Set $(x_{i+1},p_{i+1}, v_{i+1})= (y_{i+1}, q_{i+1}, v_i)$; \hfill \emph{\#\#\# accept the move and keep the direction}
        \ELSE
          \STATE Set $(x_{i+1}, p_{i+1}, v_{i+1})= (x_{i}, p_i, -v_i)$; \hfill \emph{\#\#\# reject the move and flip the direction}
        \ENDIF
      \ENDFOR
      \STATE Return $(x_{0:N})$
    \end{algorithmic}
  \end{algorithm}

\end{document}